\documentclass[aps,pra,reprint]{revtex4-1}

\usepackage{graphicx}
\usepackage{amsmath,amssymb,amsthm}

\newtheorem{thm}{Theorem}

\newtheorem{coro}{Corollary}
\newtheorem{prop}{Proposition}
\newtheorem{lemma}{Lemma}
\theoremstyle{remark}
\newtheorem{ex}{Example}
\newtheorem{rem}{Remark}
\def\<{\langle}
\def\>{\rangle}
\def\states{\mathfrak S}

\def\mea{\mathsf m}

\def\simp{\mathsf s}
\def\Simp{\mathsf S}
\def\edg{\mathsf e}
\def\Tr{\mathrm {Tr}\,}
\def\ptr{\mathrm {Tr}}
\def\Ha{\mathcal H}

\def\Ce{\mathcal C}
\def\aff{\mathcal A}

\begin{document}
\title{Incompatible measurements in a class of general probabilistic theories}

\author{Anna Jen\v cov\'a}
\email[]{jenca@mat.savba.sk}
\affiliation{Mathematical Institute, Slovak Academy of Sciences, \v Stef\'anikova 49, 814 73 Bratislava, Slovakia}

\begin{abstract}
We study  incompatibility of measurements and its relation to steering and nonlocality in a class of finite dimensional general probabilistic theories (GPT). The basic idea is to represent finite collections of measurements as affine maps of a state space into a polysimplex 
and show that incompatibility  is characterized by properties of these maps. We introduce the notion of an incompatibility witness and 
 show its relation to incompatibility degree.  We find the maximum incompatibility degree attainable by pairs of two-outcome quantum  measurements
 and  characterize state spaces for which incompatibility degree attains maximal values possible in GPT.
 As examples, we study the spaces of classical and quantum channels and show their close relation to 
  polysimplices. This relation explains the super-quantum non-classical effects that were observed on these spaces.
\end{abstract}

\maketitle

\section{Introduction}
Incompatibility of measurements is one of the fundamental  features of quantum mechanics. As a key ingredient in quantum information protocols, incompatibility and related non-classical effects such as Bell nonlocality and steering, are viewed as important resources in quantum information theory.

General probabilistic theories (GPT) form a framework for description of  physical theories admitting probabilistic processes. 
Within this framework, the  quantum theory is specified by several axioms  \cite{hardy2001quantum,cdp2011informational}. 
Some of these axioms (causality, prefect distinguishability and local tomography) define a class of theories that can be studied within the setting of ordered vector spaces and their tensor products.

Many of the basic features that distinguish quantum mechanics  from any classical theory are shared by a large class of GPT. 
It is a natural question what are the properties that characterize quantum mechanics. This question has been studied for many years, see e.g. \cite{araki1980onacharacterization, bineu1936thelogic, mackey1963themathematical}, but recently the advance of quantum information theory led to a renewed interest in this topic,  \cite{barrett2007information, buste2014steering, bblw2006cloning, bblw2007generalized}. To answer this question, it is important to understand the nature of the non-classical features and relations between them. 

Such relations were already observed: it is well known that steering and nonlocality require both entanglement and incompatible observables.  The relations between non-locality, steering and incompatibility were studied in \cite{wopefe2009,buste2014steering,bggk2013degree, quvebr2014joint, banik2015measurement}, both for  quantum theory and 
in GPT. On the other hand, there exist unsteerable entangled quantum states 
\cite{wjd2007steering}, and incompatible sets of quantum measurements  that cannot lead to violation of Bell inequalities \cite{quvebr2014joint}.  To understand these relations, the more general setting is useful because it allows one to 
recognize which  non-classical manifestations  are consequences of convexity and the tensor product structure, and which are inherently quantum.

In this contribution, we study incompatibility of measurements and its relation to steering and nonlocality in a class of finite dimensional GPT, using the tools of convex geometry.
The basic idea is to represent finite collections of measurements as affine maps of a state space into a polysimplex 
(that is, a Cartesian product of simplices) and show that 
incompatibility  is characterized by properties of these maps.  A generalization of this idea was already used to 
 describe incompatibility of channels \cite{plavala2017conditions}. We introduce the notion of an incompatibility witness and 
 show its relation to incompatibility degree, defined in \cite{hstz2014maximally, hmz2016aninvitation}. We find the maximum incompatibility degree attainable by pairs of two-outcome quantum  measurements, generalizing the results of \cite{bggk2013degree},  and  characterize state spaces for which incompatibility degree attains maximal values possible in GPT. This completes the results of  
 \cite{bhss2013comparing} and \cite{jepl2017conditions}, where maximal incompatibility of  pairs of two-outcome measurements is considered.  
 The representation of collections of measurements as maps enables us to  tie incompatibility directly to steering and non-locality of states of composite systems. The concept 
  of incompatibility witnesses helps to explain the relation of incompatibility degree and maximal violation of Bell inequalities, as well as the observed limitations of these relations.
 
Besides the classical and quantum state spaces, we study the spaces of classical and quantum channels. It was observed in \cite{jepl2017conditions, srcz2016incompatible} that these spaces admit maximally incompatible measurements, which is known to be impossible in finite dimensional quantum theory  \cite{hstz2014maximally}.  Moreover, it was shown that causal bipartite 
quantum channels can be used to obtain  maximal violation of the CHSH inequality \cite{bgnp2001causal,plzi2017popescu}, in fact, all kinds of non-signalling correlations \cite{hosa2017achannel}. 
We prove these results as easy consequences of the fact that the set of classical channels is a retract of the set of quantum channels
 and is affinely isomorphic to a polysimplex.  

The paper is organized as follows. In the next section we briefly describe the main components of GPT in our setting and present the 
main examples of spaces of classical and quantum states and channels. The structure of polysimplices is detailed in Section \ref{sec:poly}. The relations to spaces of channels are also proved. Section \ref{sec:inc} is devoted to incompatibility of measurements, incompatibility witnesses and degree.  The last section deals with steering and Bell nonlocality.

\section{General probabilistic theories}

We  present a brief overview of  GPT in the  finite dimensional setting, explain our overall assumptions and introduce the mathematical tools needed in the sequel. Let us remark that the GPT framework is much broader, 
 see e.g. \cite{barrett2007information, chdp2010probabilistic, jahi2014generalized, bawi2012postclassical} for more details. 

We first recall a few definitions and  facts about convex sets and ordered vector spaces that will be needed below. 
For a subset $X\subseteq V$ of a finite dimensional vector space $V$, we 
 denote by $co(X)$ the convex hull and  $aff(X)$ the affine span of $X$ in $V$. For a convex subset $C\subseteq V$,  $ri(C)$ denotes the relative interior of $C$ in $aff(C)$ and $\dim(C):=\dim(aff(V))$. The set of all affine maps (that is, preserving the convex structure) between convex sets $C_1$ and $C_2$ will be denoted by $\aff(C_1,C_2)$. 

For the purposes of this paper, an ordered vector space is a pair $(V,V^+)$, where $V$ is a real finite dimensional  vector space and $V^+$ is a closed convex cone satisfying $V^+\cap -V^+=\{0\}$ and $V=V^+-V^+$. This  induces a partial order in $V$ as $v\le w$ if $w-v\in V^+$.
Let $V^*$ be the vector space dual with duality $\<\cdot,\cdot\>$. The order dual of $(V,V^+)$ is the ordered vector space  $(V^*,(V^+)^*)$, with the closed convex cone of positive functionals
\[
(V^+)^*:=\{v^*\in V^*,\ \<v^*,v\>\ge 0,\ \forall v\in V^+\}.
\]
Note that we have $(V^+)^{**}=V^+$. A linear map between ordered vector spaces is called positive if it preserves the positive cone.
We say that the cone $V^+$ is weakly self-dual if it is affinely isomorphic to $(V^+)^*$.

Let $(V,V^+)$ and $(W,W^+)$ be ordered vector spaces. There are two  distinguished ways to define a 
positive cone in the  tensor product $V\otimes W$:
\begin{align*}
V^+\otimes_{min} W^+ &:= \{\sum_i v_i\otimes w_i,\ v_i\in V^+, w_i\in W^+\}\\
V^+\otimes_{max} W^+ &:= ((V^+)^*\otimes_{min} (W^+)^*)^*.
\end{align*}
We have $V^+\otimes_{min} W^+\subseteq V^+\otimes_{max} W^+$. The elements of $V^+\otimes_{min} W^+$ are called separable. 

\subsection{States, effects and measurements}

 The framework of GPT is build on  basic notions of  states, representing  preparation procedures of a given system, and effects, assigning to each  state the corresponding probabilities of outcomes in yes/no experiments. The  state spaces  have  a  natural convex structure, expressing the possibility of forming probabilistic mixtures of states. 
The effects must respect this structure and are therefore represented by affine functions from the state space into the unit interval. 
Throughout this paper, we will assume that any state space is a compact convex subset $K$ of a finite dimensional real vector space. 
Moreover, we adopt the no restriction hypothesis, requiring that all affine functions $K\to [0,1]$ correspond to physical effects. 
 For a discussion of these assumptions in GPT
see \cite{chdp2010probabilistic, cdp2011informational, jala2013generalized}.

A state space $K$ determines a pair of dual ordered vector spaces as follows. Let $A(K):= \aff(K,\mathbb R)$ and $A(K)^+:=\aff(K,\mathbb R^+)$. Let also $1_K$ be the constant map $1_K(x)\equiv 1$. Then $(A(K), A(K)^+)$ is an ordered vector space. The function $1_K$ is an interior element in $A(K)^+$ and hence is an order unit: for any $f\in A(K)$ we have $-t1_K\le f\le t1_K$ for some $t>0$.
The set of effects is thus given by
\[
E(K):=\{f\in A(K),\ 0\le f \le 1_K\}.
\]

For $x\in K$, the evaluation map $f\mapsto f(x)$ defines a linear functional on $A(K)$ that is clearly positive and unital: $1_K(x)=1$. The converse is also true \cite{alfsen1971compact}, so that $K$ can be identified with the set of  positive unital functionals, or states, on $A(K)$.
With this identification, $K$ is a base of the dual cone $(A(K)^+)^*$, in the sense that each $0\ne \varphi\in(A(K)^+)^* $  can be expressed in a unique way as a multiple of some element in $K$. We therefore have $(A(K)^+)^*\equiv V(K)^+:=\cup_{\lambda\ge 0}\lambda K$ and 
$A(K)^*\equiv V(K):=V(K)^+-V(K)^+$. These identifications will be used throughout. 
  For $\psi\in V(K)$, let 
\[
\|\psi\|_K=\inf\{ a+b,\ \psi=ax-by,\ a,b\ge 0, x,y\in K\}.
\]
Then $\|\cdot\|_K$ is a norm in $V(K)$, called the base norm. It is the dual of the supremum norm $\|\cdot\|_{max}$ in $A(K)$. 

Similarly as for two-outcome measurements, any  measurement on a system with state space $K$ is fully described by its outcome statistics in each state. A measurement with $n+1$ outcomes is therefore identified with a map $f\in \aff(K,\Delta_n)$,  where $\Delta_n$ is the $n$-dimensional simplex  of probability measures over $\{0,\dots,n\}$.  The measurement $f$ is determined by $n+1$ effects $f_0,\dots,f_n\in E(K)$, satisfying $\sum_i f_i=1_K$, given as  $f_i(x)= f(x)(i)$. As before, we assume that each element of $\aff(K,\Delta_n)$ describes a valid measurement.

\begin{ex}\label{ex:classical}(\textbf{Classical state spaces.})  The state space of a classical system is an $m$-dimensional simplex 
$\Delta_m$. We have $A(\Delta_m)\simeq V(\Delta_m)=\mathbb R^{m+1}$, with $V(\Delta_m)^+\simeq A(\Delta_m)^+$ the positive cone of vectors with nonnegative entries and $E(\Delta_m)$ is the set of vectors with entries in $[0,1]$. The base norm in this case is 
 the $l_1$-norm in $\mathbb R^{m+1}$. Measurements  
$f\in \aff(\Delta_m,\Delta_n)$ are classical channels and can be identified with $(m+1)\times (n+1)$ stochastic matrices $\{T(j|i)\}_{i,j}$, where  $T(\cdot|i)\in \Delta_n$, $i=0,\dots,m$ are  determined by the values of $f$ on the vertices of $\Delta_m$.

\end{ex}

\begin{ex}\label{ex:quantum}(\textbf{Quantum state spaces.}) A quantum state space is the set of density operators $\states(\Ha)$ on a  finite dimensional Hilbert space $\Ha$. We will sometimes use labels  $\Ha_A$, $\Ha_B$, etc. for the Hilbert spaces, then we use the notations $d_A:=\dim(\Ha_A)$, $\states_A:=\states(\Ha_A)$. 
Any $f\in A(\states(\Ha))$ has the form
\[
f(\rho)=\Tr M\rho,\qquad \rho\in \states(\Ha)
\]
for some $M\in B_h(\Ha)$, the set of Hermitian operators  on $\Ha$. In this way, we have
$A(\states(\Ha))\simeq V(\states(\Ha))=B_h(\Ha)$,  $A(\states(\Ha))^+\simeq V(\states(\Ha))^+=B(\Ha)^+$, the cone of positive operators on $\Ha$, $1_{\states(\Ha)}=I$, the identity operator and $E(\states(\Ha))\simeq E(\Ha)$, the set of quantum effects. The base norm $\|\cdot\|_{\states(\Ha)}$ is the trace norm $\|X\|_1= \Tr|X|$. 
The measurements are given by positive operator valued measures (POVMs) on $\Ha$, that is, tuples of effects $M_0,\dots, M_n\in B(\Ha)^+$, $\sum_i M_i=I$. 

\end{ex}

\begin{ex}\label{ex:qchannels}(\textbf{Spaces of quantum channels.})
Let $\mathcal H_A,\mathcal H_{A'}$ be finite dimensional Hilbert spaces. We will denote by $\mathcal C_{A,A'}$ the set of all quantum 
channels $\mathcal H_A\to \mathcal H_{A'}$, that is, all completely positive and trace preserving linear maps $B(\mathcal H_A)\to 
B(\mathcal H_{A'})$. We now describe the corresponding cones and measurements for $\mathcal C_{A,A'}$, see \cite{jencova2012generalized} for details.

By the Choi representation, $\mathcal C_{A,A'}$ is isomorphic to a compact convex subset of the quantum state space $\states_{A'A}$.  Using this isomorphism, 
$V(\mathcal C_{A,A'})$ can be identified with the subspace 
\[
V(\Ce_{A,A'})\equiv \{X\in B_h(\mathcal H_{A'A}), \ \ptr_{A'} X \in \mathbb R I\}, 
\]
where $\ptr_{A'}$ is the partial trace over $\Ha_{A'}$. We then have
\begin{align*}
\mathcal C_{A,A'}&=V(\mathcal C_{A,A'})\cap \mathfrak S_{A'A}.
\end{align*}
Consequently, $A(\Ce_{A,A'})$ is a quotient of $B_h(\mathcal H_{A'A})$ and $A(\Ce_{A,A'})^+$ is the 
set of equivalence classes containing some positive element. The base norm $\|\cdot\|_{\Ce_{A,A'}}$ is identified with 
the diamond norm $\|\cdot\|_\diamond$ \cite{jencova2014base}.
One can also show that any measurement  $f\in \aff(\Ce_{A,A'},\Delta_n)$ has the form 
\[
f_i(\Phi)= \Tr M_i(\Phi\otimes id_{R})(\rho_{AR})
\]
for some POVM $M_0,\dots, M_n$ on $\mathcal H_{A'R}$  
and some state $\rho\in \mathfrak S_{AR}$ where  $\mathcal H_R$ is an ancilla, $d_R\le d_A$, but the representation in this form is not unique, see also \cite{ziman2008process}. In particular, the unit effect $1_{\Ce_{A,A'}}$ is obtained from any state 
$\rho_{AR}$ and the trivial measurement $M_0=I_{A'R}$. 
 
\end{ex}

\begin{ex}\label{ex:CC_QC} (\textbf{Spaces of classical channels.}) The set of classical channels $\aff(\Delta_m,\Delta_n)$  is isomorphic to a subset of $\Ce_{A,A'}$, 
with $d_A=m+1$, $d_{A'}=n+1$.  Such an isomorphism is obtained by fixing orthonormal bases $|i\>_A$ of $\Ha_A$ and $|j\>_{A'}$ of $\Ha_{A'}$ and putting for any stochastic matrix $T\in \aff(\Delta_m,\Delta_n)$,
\begin{equation}\label{eq:CC_T}
\Phi_T(\sigma)=\sum_{i,j} \<i,\sigma |i\>_A T(j|i) |j\>\<j|_{A'},\qquad \sigma\in \states(\Ha).
\end{equation}
Quantum channels of this form are  called classical-to-classical, or c-c channels. The cones and measurements for this state space will be identified later (cf. Proposition \ref{prop:ccchannel_polysimp}). 

\end{ex}

\subsection{Composition of state spaces: tensor products}

Let $K_A$ and $K_B$ be  state spaces, corresponding to two systems in a GPT. To describe the state space of the joint system, we need the notion of a tensor product of state spaces. Let  the composite state space be denoted by $K_A\widetilde\otimes K_B$. Assuming the local tomography axiom \cite{hardy2001quantum, chdp2010probabilistic,cdp2011informational},  $K_A\widetilde\otimes K_B$ is a subset of the tensor product $V(K_A)\otimes V(K_B)$  such that
\begin{enumerate}
\item[(a)]  $x_A\otimes x_B\in  K_A\widetilde\otimes K_B$ for all $x_A\in K_A$, $x_B\in K_B$,
\item[(b)] $f_A\otimes f_B\in E(K_A\widetilde\otimes K_B)$ for all $f_A\in E(K_A), f_B\in E(K_B)$ 
\item[(c)] $1_{K_A\widetilde\otimes K_B}= 1_{A}\otimes 1_{B}$, here $1_A:=1_{K_A}$, $1_B:=1_{K_B}$.
\end{enumerate}
This is based on the requirement  that for the composite system, all product states  and all product effects are valid.
These conditions determine the minimal and the  maximal tensor product of state spaces. Let 
\begin{align*}
K_A\otimes K_B&:=co\{x_i\otimes y_i,\ x_i\in K_A,\ y_i\in K_B\}\\
K_A\widehat\otimes K_B&:=\{y\in V(K_A)\otimes V(K_B), \<f_A\otimes f_B,y\>\ge0,\\
                     & \forall f_A\in E(K_A), f_B\in E(K_B), \ \<1_{A}\otimes 1_{B},y\>=1\}
\end{align*}
Note that  both sets satisfy the conditions for a composite state space and we always have
\[
K_A\otimes K_B\subseteq K_A\widetilde\otimes K_B\subseteq K_A\widehat\otimes K_B.
\]
The states in $K_A\otimes K_B$ are called separable, all other states in $K_A\widetilde\otimes K_B$ are called entangled. 
 The particular form of the composite state space is specified by the theory in question, see the examples below.
 In terms of the related spaces and cones, we have
\begin{align*} 
V(K_A\otimes K_B)&\simeq V(K_A\widehat\otimes K_B)\simeq V(K_A)\otimes V(K_B),\\
A(K_A\otimes K_B)&\simeq A(K_A\widehat\otimes K_B)\simeq A(K_A)\otimes A(K_B),\\
V(K_A\otimes K_B)^+&\simeq V(K_A)^+\otimes_{min} V(K_B)^+,
\\ V(K_A\widehat\otimes K_B)^+&\simeq V(K_A)^+\otimes_{max} V(K_B)^+,
\\A(K_A\otimes K_B)^+&\simeq A(K_A)^+\otimes_{max} A(K_B)^+,
\\ A(K_A\widehat\otimes K_B)^+&\simeq A(K_A)^+\otimes_{min} A(K_B)^+,
\end{align*}
this follows easily from the definitions and duality relations.

\begin{ex}\label{ex:composite_class} For classical state spaces, we have
\[
\Delta_{n_A}\otimes \Delta_{n_B}=\Delta_{n_A}\widehat\otimes \Delta_{n_B}=\Delta_{n_{AB}},
\]
$n_{AB}:=n_An_B+n_A+n_B$, is the set of probability measures on $\{0,\dots,n_A\}\times\{0,\dots,n_B\}$.  
In fact,  we have $K\otimes \Delta_n= K\widehat \otimes \Delta_n$ for any state space  $K$ and this property characterizes the simplices in a general infinite-dimensional setting, see \cite{naph1969tensor}. 

\end{ex}

\begin{ex}\label{ex:composite_quant} For quantum state spaces, we have
\[
\states_{A}\widetilde \otimes \states_{B}=\states_{AB},
\]
with the usual tensor product of Hilbert spaces $\Ha_{AB}=\Ha_A\otimes\Ha_B$. Here the minimal tensor product 
$\states_{A}\otimes \states_{B}$ is the subset of separable states and the maximal tensor product 
$\states_{A}\widehat \otimes \states_{B}$ is the set of entanglement witnesses (with unit trace). 
\end{ex}

\begin{ex}\label{ex:composite_QC} For spaces of quantum channels, let  $\Phi\in \Ce_{A,A'}\widetilde \otimes \Ce_{B,B'}$.
It is natural to require that $\Phi\in \Ce_{AB,A'B'}$, so that $\Phi$  is a bipartite quantum channel. 
By the condition (b), each product of effects is a valid effect, in particular, $1_{\Ce_{A,A'}}\otimes f_B$ is a valid effect for any
 $f_B\in E(\Ce_{B,B'})$.
By Example \ref{ex:qchannels}, $1_{\Ce_{A,A'}}$ is obtained from any state $\rho_{RA}$ and the identity $I_{RA'}$. Let $f_B$ be given by $\sigma_B$ and an effect $M_{B'}$, then 
\[
\<1_{\Ce_{A,A'}}\otimes f_B,\Phi\>= \Tr (I_{RA}\otimes M_{B'})(id_R\otimes \Phi)(\rho_{RA}\otimes \sigma_B)
\]
and this expression does not depend on $\rho_{RA}$.
It follows  that  
\[
\sigma_{B}\mapsto \ptr_{RA}(id_{R}\otimes \Phi)(\rho_{RA}\otimes \sigma_{B})
\]
defines a channel in $\Ce_{B,B'}$ that does not depend on $\rho$; similarly, we obtain a channel in $\Ce_{A,A'}$ 
 by applying the unit effect on the second part. Channels with this property are called causal or no-signalling bipartite channels, see \cite{bgnp2001causal}, the set of all such channels is denoted by $\Ce^{caus}_{AB,AB'}$. 
 We define the composite state space as
 \[
\Ce_{A,A'}\widetilde \otimes \Ce_{B,B'}:=\Ce^{caus}_{AB,A'B'}.
 \]
 The minimal tensor product is the set of local bipartite channels, which are convex combinations of 
channels prepared by each party separately. The maximal tensor product is strictly larger than $\Ce^{caus}_{AB,A'B'}$, 
since its elements are not necessarily completely positive.

\end{ex}

\begin{ex}\label{ex:composite_CC} 
 It is clear that 
\[
\aff(\Delta_{m_A},\Delta_{n_A})\widehat \otimes \aff(\Delta_{m_B},\Delta_{n_B})\subset \aff(\Delta_{m_{AB}},\Delta_{n_{AB}}),
\]
where $\Delta_{m_{AB}}=\Delta_{m_Am_B+m_A+m_B}=\Delta_{m_A}\otimes \Delta_{m_B}$ and similarly for $\Delta_{n_{AB}}$.
We will see later (Section \ref{sec:nonlocQC}) that the maximal tensor product is the set of all classical bipartite causal 
channels, characterized by the no-signalling conditions \eqref{eq:no_signallingA} and \eqref{eq:no_signallingB}.

\end{ex}

\subsection{Channels and positive maps}

Channels in GPT describe transformations of the systems allowed in the theory and are represented by affine maps between state spaces. 
Although all affine maps between simplices are classical channels, we do not assume in general that all elements in $\aff(K,K')$ for state spaces $K$ and $K'$ correspond to valid channels. 
For the spaces of quantum states and channels, it is required that the maps have completely positive extensions.
Completely positive maps $B(\Ha_{A'A})\to B(\Ha_{B'B})$ that map $\Ce_{AA'}$ into $\Ce_{BB'}$ are called  quantum supermaps \cite{cdp2008supermaps}
 or quantum combs \cite{cdp2009framework} and belong to a hierarchy describing quantum networks.

Any  $T\in \aff(K,V(K'))$ extends uniquely to a linear map 
 $T: V(K)\to V(K')$ and  $\aff(K,V(K'))$ has the structure of a real vector space. The subset 
 $\aff(K,V(K')^+)\subseteq\aff(K,V(K')) $ is a closed convex cone of elements whose extensions are positive maps. 
 With this cone,   $\aff(K,V(K'))$ is an ordered vector space.

Let $T_A\in \aff(K_A,V(K_A')^+)$ and $T_B\in \aff(K_B,V(K_B')^+)$. It is easy to see that $T_A\otimes T_B$ is positive 
 with respect to both the maximal and minimal tensor product cones, that is,
\[
T_A\otimes T_B\in \aff(K_A\otimes K_B, 
V(K_A'\otimes K_B')^+)
\]
and  
\[
T_A\otimes T_B\in \aff(K_A\widehat \otimes K_B, V(K_A'\widehat \otimes K_B')^+).
\]
Indeed, the first inclusion  follows from the definition of the minimal tensor product and the second one from
\[
\<(T_A\otimes T_B)(y), f_A'\otimes f_B'\>=\<y, T_A^*(f_A')\otimes (T_B)^*(f_B')\>,
\]
for all $y\in K_A\widehat\otimes K_B$, $f_A'\in A(K_A')^+$ and $f_B'\in A(K_B')^+$, here $T^*$ denotes the adjoint of the linear extension of $T$. We say that $T_A$ is entanglement breaking (ETB) if for any state space $K_B$, we have 
$T_A\otimes id_{K_B}\in \aff(K_A\widehat \otimes K_B, V(K_A'\otimes K_B)^+)$. The set of all ETB maps will be denoted by 
$\aff_{sep}(K_A, V(K_A')^+)$, it is  a closed convex subcone in   $\aff(K_A, V(K_A')^+)$. 

There is a well known relation between linear maps and tensor products of vector spaces, with respect to which the positive maps
 correspond to elements of the maximal tensor product and ETB maps to elements of the minimal one. Details on these relations, as well as the proofs of the following results, are given in Appendix \ref{app:positive}.

\begin{prop}\label{prop:ETB} Let $T\in \aff(K,V(K')^+)$. Then 
 $T$ is ETB if and only if $T$  factorizes through a simplex: there are a simplex $\Delta_n$ and maps $T_0\in \aff(K,V(\Delta_n)^+)$ and $T_1\in \aff(\Delta_n, V(K')^+)$ such that $T=T_1T_0$. 
If $T$ is a channel, $T_0$ and $T_1$   may be chosen to be channels as well.
\end{prop}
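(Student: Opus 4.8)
The plan is to prove the two implications separately. The easier direction is that a factorization through a simplex forces $T$ to be ETB; the harder direction uses the correspondence between positive maps and tensor products referred to just above the proposition.

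\emph{Factorization implies ETB.} Suppose $T=T_1T_0$ with $T_0\in\aff(K,V(\Delta_n)^+)$ and $T_1\in\aff(\Delta_n,V(K')^+)$. For an arbitrary state space $K_B$ write $T\otimes id_{K_B}=(T_1\otimes id_{K_B})(T_0\otimes id_{K_B})$. Since $T_0$ has a positive extension, $T_0\otimes id_{K_B}$ is positive with respect to the maximal tensor product cones (as noted before the proposition), so it carries $K\widehat\otimes K_B$ into $V(\Delta_n)^+\otimes_{max}V(K_B)^+$; but by Example~\ref{ex:composite_class} we have $\Delta_n\widehat\otimes K_B=\Delta_n\otimes K_B$, hence $V(\Delta_n)^+\otimes_{max}V(K_B)^+=V(\Delta_n)^+\otimes_{min}V(K_B)^+$. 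Finally $T_1\otimes id_{K_B}$ is positive with respect to the minimal tensor product cones, so it carries this into $V(K')^+\otimes_{min}V(K_B)^+=V(K'\otimes K_B)^+$. Thus $T\otimes id_{K_B}$ maps $K\widehat\otimes K_B$ into $V(K'\otimes K_B)^+$ for every $K_B$, i.e.\ $T$ is ETB.

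\emph{ETB implies factorization.} Here I would use the correspondence established in Appendix~\ref{app:positive}: a linear map $V(K)\to V(K')$ is encoded by an element $\chi_T\in V(K')\otimes A(K)$ (with $A(K)$ identified with $V(K)^*$), in such a way that $T(x)=\sum_i h_i(x)\mu_i$ whenever $\chi_T=\sum_i\mu_i\otimes h_i$, and ETB maps correspond exactly to the elements of the minimal tensor product cone $V(K')^+\otimes_{min}A(K)^+$. Granting this, decompose $\chi_T=\sum_{i=0}^n\mu_i\otimes h_i$ with $\mu_i\in V(K')^+$ and $h_i\in A(K)^+$; absorbing nonnegative scalars (and discarding zero terms) we may assume $\mu_i\in K'$ while keeping $h_i\in A(K)^+$. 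Put $h:=\sum_i h_i\in A(K)^+$. Since $1_K$ is an order unit in $A(K)$ there is $c>0$ with $h_{n+1}:=c\,1_K-h\in A(K)^+$, so that $\sum_{i=0}^{n+1}h_i=c\,1_K$. Define $T_0\in\aff(K,\Delta_{n+1})$ by $T_0(x)=c^{-1}(h_0(x),\dots,h_{n+1}(x))$ (the entries are nonnegative and sum to $1$), and $T_1\in\aff(\Delta_{n+1},V(K')^+)$ by prescribing $T_1(e_i)=c\,\mu_i$ for $i\le n$, $T_1(e_{n+1})=0$ on the vertices $e_i$ of $\Delta_{n+1}$ and extending affinely (the image is a convex hull of points of $V(K')^+$, hence lies in $V(K')^+$). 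Then $T_1T_0(x)=\sum_{i=0}^n h_i(x)\mu_i=T(x)$, which is the required factorization; here $\Delta_{n+1}$ plays the role of the simplex in the statement.

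\emph{The channel case.} If $T$ is a channel, then $\langle 1_{K'},T(x)\rangle=1$ for all $x\in K$; writing this out with $\mu_i\in K'$ gives $\sum_i h_i(x)=1$, i.e.\ $\sum_i h_i=1_K$ already, so no padding is needed. One then takes $\Delta_n$ with $T_0(x)=(h_0(x),\dots,h_n(x))\in\Delta_n$ and $T_1(e_i)=\mu_i\in K'$, both of which are channels; for the converse, the $T_0,T_1$ just built are channels precisely when $T$ is, and the first part of the argument applies verbatim.

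The real content, and the step I expect to be the main obstacle, is the correspondence used in the second part: making precise the identification of maps with elements of $V(K')\otimes A(K)$ and proving that the ETB condition --- a priori quantified over all auxiliary state spaces $K_B$ --- is equivalent to membership of $\chi_T$ in the minimal tensor product cone. The reduction to a single $K_B$ is the analogue of the Choi isomorphism: it suffices to test with a $K_B$ for which $V(K_B)$ carries the order structure dual to $A(K)$ and to apply $T\otimes id_{K_B}$ to the canonical element corresponding to $id_{V(K)}$, which reproduces $\chi_T$; the case of a general $K_B$ then follows by composing with positive maps. Once this is in hand, everything else is the elementary decomposition-and-normalization above.
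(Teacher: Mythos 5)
Your proof is correct and follows essentially the same route as the paper's, which establishes this proposition in Appendix~A (Proposition~\ref{prop:ETBapp}) via the equivalence of ETB with separability of $(T\otimes id)(\chi_K)$ --- exactly the correspondence you invoke and correctly identify as the crux, with your sketch of testing on the canonical element over the dual state space matching Lemma~\ref{lemma:chiKK}. The only deviations are cosmetic: you pad the simplex with an extra vertex so that $T_0$ is a channel from the outset, whereas the paper keeps $T_0$ merely positive and rescales only in the channel case.
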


It is clear that any constant map factorizes through the 1-dimensional simplex $\Delta_0$ and hence must be ETB.

We now look at the dual spaces and cones. For $T\in \aff(K,V(K))$, let $\Tr T$ denote the usual trace of its linear extension.
It is not difficult to see that 
the dual space of $\aff(K,V(K'))$ can be identified with $\aff(K',V(K))$, with duality $\<S,T\>=\Tr ST$.

\begin{prop}\label{prop:aff_dual}
The dual cone to $\aff(K,V(K')^+)$ is $\aff_{sep}(K',V(K)^+)$. 

\end{prop}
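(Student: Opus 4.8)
The plan is to reduce the statement to the defining duality between the minimal and maximal tensor cones, via the canonical identification of affine maps with tensors developed in Appendix \ref{app:positive}. Concretely, a map $T\in\aff(K,V(K'))$ is the same datum as a linear map $V(K)\to V(K')$, hence as an element of $V(K)^*\otimes V(K')\equiv A(K)\otimes V(K')$; under this identification $\aff(K,V(K')^+)$ corresponds to $A(K)^+\otimes_{max}V(K')^+$, since $T_z(v)=\sum_i a_i(v)w_i$ for $z=\sum_i a_i\otimes w_i$ lies in $V(K')^+$ for every $v\in V(K)^+$ precisely when $\<v\otimes f',z\>\ge 0$ for all $v\in V(K)^+=(A(K)^+)^*$ and $f'\in A(K')^+=(V(K')^+)^*$, i.e. precisely when $z\in\big((A(K)^+)^*\otimes_{min}(V(K')^+)^*\big)^*$. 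Dually, $\aff(K',V(K))$ is identified with $A(K')\otimes V(K)$, and a direct bookkeeping check shows that the trace pairing $\<S,T\>=\Tr ST$ becomes the canonical pairing between $A(K')\otimes V(K)$ and $A(K)\otimes V(K')$ (pairing $A(K')$ against $V(K')$ and $V(K)$ against $A(K)$). These are exactly the "well known relations" referred to before the statement; the only point requiring care here is verifying that the trace pairing really is the canonical tensor pairing and not a twisted variant of it.

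Given this, I would compute the dual cone directly. By the definition of the maximal tensor cone, $A(K)^+\otimes_{max}V(K')^+=\big(V(K)^+\otimes_{min}A(K')^+\big)^*$, so its dual cone is $\big(V(K)^+\otimes_{min}A(K')^+\big)^{**}$, the closed convex cone generated by $V(K)^+\otimes_{min}A(K')^+$. It remains to identify $V(K)^+\otimes_{min}A(K')^+\subseteq V(K)\otimes A(K')\equiv\aff(K',V(K))$ with $\aff_{sep}(K',V(K)^+)$ and to note that this cone is closed, so the double dual collapses. Both follow from Proposition \ref{prop:ETB}: an element of the min tensor cone is a finite sum $S(\psi')=\sum_k e_k(\psi')x_k$ with $e_k\in A(K')^+$, $x_k\in V(K)^+$, which is of measure-and-prepare form and hence ETB; conversely, if $S$ is ETB then by Proposition \ref{prop:ETB} it factors as $S=S_1S_0$ through a simplex $\Delta_n$, and reading off the coordinate functionals $e_k\in A(K')^+$ of the linear extension of $S_0$ together with the images $x_k=S_1(\delta_k)\in V(K)^+$ of the vertices $\delta_k$ of $\Delta_n$ recovers $S(\psi')=\sum_k e_k(\psi')x_k$. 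Thus $\aff_{sep}(K',V(K)^+)$ equals $V(K)^+\otimes_{min}A(K')^+$ on the nose; in particular it is closed (as also asserted in the text), so $\big(V(K)^+\otimes_{min}A(K')^+\big)^{**}=V(K)^+\otimes_{min}A(K')^+=\aff_{sep}(K',V(K)^+)$, which is the claim.

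As a cross-check one can verify the easy inclusion $\aff_{sep}(K',V(K)^+)\subseteq\big(\aff(K,V(K')^+)\big)^*$ by hand: if $S=S_1S_0$ factors through a simplex and $T\in\aff(K,V(K')^+)$, then $\Tr ST=\Tr(S_0TS_1)$ by cyclicity, and $S_0TS_1$ is a positive endomorphism of $V(\Delta_n)\equiv\mathbb R^{n+1}$, i.e. a matrix with nonnegative entries, whose trace is therefore $\ge 0$. The genuinely substantive content is the reverse inclusion, which in the approach above is carried entirely by the bipolar theorem together with the two structural facts that $\aff(K,V(K')^+)$ is the dual of the min tensor cone and that $\aff_{sep}(K',V(K)^+)$ is closed (equivalently, by Proposition \ref{prop:ETB}, already equals the un-closed min tensor cone). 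I expect the main obstacle, such as it is, to be the routine but error-prone bookkeeping of the identifications $A(K)\leftrightarrow V(K)^*$, $A(K')\leftrightarrow V(K')^*$, and making sure the trace duality lines up with the canonical tensor pairing so that the max/min dualization applies verbatim.
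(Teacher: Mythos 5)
Your proposal is correct and follows essentially the same route as the paper: both reduce the statement to the min/max tensor-cone duality by identifying $\aff(K,V(K')^+)$ with the maximal tensor cone and $\aff_{sep}(K',V(K)^+)$ with the minimal one, the latter via the simplex-factorization characterization of ETB maps. The only cosmetic difference is that the paper implements the map--tensor identification concretely through the canonical element $\chi_K$ and Lemma \ref{lemma:chiKK}~(iv) rather than through the abstract isomorphism $\aff(K,V(K'))\simeq A(K)\otimes V(K')$ and an explicit bipolar argument.
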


 \section{Polysimplices and their structure}\label{sec:poly}
Let $k,l_0,\dots,l_k\in \mathbb N$. A polysimplex is a Cartesian product of simplices
\[
\Simp_{l_0,\dots,l_k}:=\Delta_{l_0}\times\dots\times \Delta_{l_k},
\]
with pointwise defined convex structure. This is a compact convex set, more precisely  a convex polytope. Elements of  $\Simp_{l_0,\dots,l_k}$ represent 
states of a device determined by a set of inputs indexed by $0,\dots,k$, each of which has an allowed set of outputs $0,\dots,l_i$. 
Such devices were introduced by Popescu and Rohrlich \cite{poro1994quantum} as toy theories exhibiting super-quantum correlations.
In the framework of GPT, theories with state spaces  of this form  were  studied in \cite{barrett2007information, jala2013generalized}.

If $l_0=\dots=l_k=n$, the polysimplex  will be denoted by  $\Delta_n^{k+1}$. The $(k+1)$-hypercube $\Delta_1^{k+1}$ will be denoted by $\square_{k+1}$. If $l_0,\dots,l_k\in \mathbb N$ are   assumed fixed, we often drop the multiindex $l_0,\dots,l_k$ and denote the polysimplex by $\Simp$.

Let $f^0,\dots, f^k$ be a collection of measurements on $K$, such that $f^i\in \aff(K,\Delta_{l_i})$. By definition of the Cartesian product, 
 such collections correspond precisely to elements of $\aff(K,\Simp_{l_0,\dots,l_k})$. 
Explicitly, the relations between $f^i\in \aff(K,\Delta_{l_i})$, $i=0,\dots,k$ and $F=(f^0,\dots,f^k)\in  \aff(K,\Simp_{l_0,\dots,l_k})$ are given by
\begin{equation}\label{eq:fF}
F(x)=(f^0(x),\dots, f^k(x));\quad f^i=\mea^i F,\ \forall i 
\end{equation}
where $\mea^i:\Simp_{l_0,\dots,l_k}\to \Delta_{l_i}$ is the projection onto the $i$-th component.
Since our main results are based on the relation
  of properties of such maps to incompatibility, it will be necessary to describe the structure of the 
  polysimplices and related spaces and cones. 

The vertices of $\Simp=\Simp_{l_0,\dots,l_k}$ are  the $(k+1)$-tuples
\[
\simp_{n_0,\dots,n_k}:=(\delta^0_{n_0},\dots,\delta^k_{n_k}),\quad n_i=0,\dots,l_i,\ i=0\dots,k,
\]
where $\delta^i_{n_i}$ denotes the $n_i$-th vertex of the $i$-th simplex. If $u_i\in \Delta_{l_i}$ is the uniform probability distribution for all $i$, then
\[
\bar\simp:=(u_0,\dots,u_k)=\frac1{\Pi_i(l_i+1)}\sum_{n_0,\dots,n_k} \simp_{n_0,\dots,n_k}
\]
is the barycenter of $\Simp$.  Further, note that each projection $\mea^i$ is a measurement on 
$\Simp$, with effects determined by 
\[
\mea^i_j(\simp_{n_0,\dots,n_k})= \left\{ \begin{array}{cc} 1 & \mbox{ if } n_i=j\\
   0 & \mbox{ otherwise}
   \end{array}\right..
\]
Since all faces of $\Simp$ have the form $F_0\times\dots\times F_k$, where $F_i$ is a face of $\Delta_{l_i}$,
it is clear that the maximal faces are precisely the null spaces of $\mea^i_j$. 

It will be convenient to fix a pair of dual bases of the spaces $A(\Simp)$ and $V(\Simp)$, such that the basis of $A(\Simp)$ consists
 of the effects $1_\Simp$ and $\mea^i_j$. Since $\sum_j\mea^i_j=1_\Simp$ for all $i$, we will  fix a linearly independent subset. 
For the dual basis, we need to describe the edges of $\Simp$. Since the edges  are 1-dimensional faces, they have the form 
\[
\{\delta^0_{n_0}\}\times\dots\times \{\delta^{i-1}_{n_{i-1}}\}\times E_i\times \{\delta^{i+1}_{n_{i+1}}\}\times\dots\times\{\delta^k_{n_k}\},
\]
where $E_i$ is an edge of $\Delta_{l_i}$. We see that  the vertices adjacent to a vertex $\simp_{n_0,\dots,n_k}$ are those that differ
 from $\simp_{n_0,\dots,n_k}$ in exactly one index, that is, the vertices
\[
\simp_{n_0,\dots,n_{i-1},j,n_{i+1},\dots,n_k},\ j\ne n_i,i=0,\dots,k. 
\]
Pick the vertex $\simp_{l_0,\dots,l_k}$ and let  
\[
\edg^i_j:= \simp_{l_0,\dots,l_{i-1},j,l_{i+1},\dots,l_k}-\simp_{l_0,\dots,l_k}
\]
denote the vectors given by the adjacent edges.

\begin{lemma}
\begin{enumerate}
\item[(i)] The extreme rays of the cone $A(\Simp)^+$ are generated by the effects $\mea^i_j$, $i=0,\dots,k$, $j=0,\dots,l_i$.
\item[(ii)] The effects
\begin{equation}\label{eq:basis}
1_\Simp, \mea^0_0,\dots,\mea^0_{l_0-1},\mea^1_0,\dots,\mea^1_{l_1-1},\dots, \mea^k_0,\dots,\mea^k_{l_k-1}
\end{equation}
form a basis of the vector space $A(\Simp)$.
\item[(iii)] The elements
\begin{equation}\label{eq:dualbasis}
\simp_{l_0,\dots,l_k},\edg^0_0,\dots,\edg^0_{l_0-1},\edg^1_0,\dots,\edg^1_{l_1-1},\dots,\edg^k_0,\dots,\edg^k_{l_k-1}
\end{equation}
form a basis of the vector space $V(\Simp)$, dual to \eqref{eq:basis}. 
\end{enumerate}

\end{lemma}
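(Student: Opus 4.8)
The plan is to dispose of (ii) and (iii) by direct linear algebra over the vertices and then to derive (i) from (ii) together with the description of the maximal faces of $\Simp$ recorded above. First I would note that $\dim A(\Simp)=\dim V(\Simp)=\dim\Simp+1=\sum_{i=0}^{k}l_i+1$, which is exactly the number of vectors in \eqref{eq:basis} and in \eqref{eq:dualbasis}; hence for (ii) it is enough to prove linear independence, and for (iii) it is enough to check that \eqref{eq:dualbasis} is biorthogonal to \eqref{eq:basis}, since biorthogonality together with the right cardinality forces both families to be bases. For (ii) I would evaluate a hypothetical relation $a\,1_\Simp+\sum_i\sum_{j<l_i}a^i_j\mea^i_j=0$ at the vertices: using $\mea^i_j(\simp_{n_0,\dots,n_k})=\delta_{n_i,j}$ it reads $a+\sum_i\tilde a^i_{n_i}=0$ with $\tilde a^i_n:=a^i_n$ for $n<l_i$ and $\tilde a^i_{l_i}:=0$, so taking all $n_i=l_i$ gives $a=0$ and then taking $n_i=j<l_i$ with the other indices maximal gives $a^i_j=0$. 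For (iii) the needed pairings follow from the same vertex formula, from $\edg^i_j=\simp_{l_0,\dots,l_{i-1},j,l_{i+1},\dots,l_k}-\simp_{l_0,\dots,l_k}$, and from linearity: one obtains $1_\Simp(\simp_{l_0,\dots,l_k})=1$, $\mea^i_j(\simp_{l_0,\dots,l_k})=0$ for $j<l_i$, $1_\Simp(\edg^{i'}_{j'})=0$, and $\mea^i_j(\edg^{i'}_{j'})=\delta_{i,i'}\delta_{j,j'}$ for $j<l_i$, $j'<l_{i'}$, which is precisely the biorthogonality sought.

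For (i), each $\mea^i_j$ is a nonzero element of $A(\Simp)^+$, and I would show it spans an extreme ray as follows: if $\mea^i_j=g+h$ with $g,h\in A(\Simp)^+$, then $g$ and $h$ vanish on the maximal face $N^i_j:=\{x\in\Simp:\mea^i_j(x)=0\}$ (because $\mea^i_j$ does and $g,h\ge0$), and since $N^i_j$ spans a hyperplane in $aff(\Simp)$ the affine functions on $\Simp$ vanishing on $N^i_j$ form a one-dimensional space, necessarily $\mathbb{R}\,\mea^i_j$; hence $g,h\in\mathbb{R}_{\ge0}\,\mea^i_j$. To rule out further extreme rays I would prove $A(\Simp)^+=\mathrm{cone}\{\mea^i_j:i,j\}$, since the extreme rays of a finitely generated cone lie among its generators. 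Using $A(\Simp)^+=(V(\Simp)^+)^*$ and $(\mathrm{cone}\{\mea^i_j:i,j\})^*=\{y\in V(\Simp):\mea^i_j(y)\ge0\ \forall i,j\}$, this reduces to the identity $V(\Simp)^+=\{y\in V(\Simp):\mea^i_j(y)\ge0\ \forall i,j\}$. The inclusion from left to right is clear because $V(\Simp)^+$ is the positive hull of the vertices of $\Simp$, on which $\mea^i_j$ takes values in $\{0,1\}$. For the reverse inclusion, given $y$ with all $\mea^i_j(y)\ge0$ I would set $\lambda:=1_\Simp(y)=\sum_j\mea^i_j(y)\ge0$ (the same value for every $i$): if $\lambda=0$ then $y$ is annihilated by $1_\Simp$ and all $\mea^i_j$, hence $y=0$ by (ii); if $\lambda>0$ then $x=(x_0,\dots,x_k)$ with $(x_i)_j:=\lambda^{-1}\mea^i_j(y)$ lies in $\Simp$, and $\lambda x$ agrees with $y$ on the spanning family $\{1_\Simp\}\cup\{\mea^i_j\}$, so $y=\lambda x\in V(\Simp)^+$.

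I expect the only step needing a genuine idea rather than bookkeeping to be the last one — identifying $V(\Simp)^+$ as the cone cut out by the inequalities $\mea^i_j\ge0$ — and the idea there is exactly to invoke (ii): it guarantees that a (sub)normalized element of $V(\Simp)$ is reconstructed from the numbers $\mea^i_j(y)$, which is what produces the witnessing point of $\Simp$. The remaining ingredients are routine: the vertex formula for the $\mea^i_j$, the already-noted fact that the maximal faces of $\Simp$ are the null sets $N^i_j$, and the standard facts that affine functions extend off a face and that the extreme rays of a finitely generated cone lie among its generators.
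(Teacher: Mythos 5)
Your proof is correct and follows essentially the same route as the paper: parts (ii) and (iii) are handled by the same biorthogonality-plus-dimension-count argument, and part (i) rests on the same observation that the null spaces of the $\mea^i_j$ are exactly the maximal faces of $\Simp$. The only difference is that you spell out in full the standard polytope-duality facts (that an effect vanishing on a facet spans an extreme ray, and that $A(\Simp)^+$ is the cone generated by the $\mea^i_j$) which the paper invokes in a single sentence; your added detail is sound.
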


\begin{proof}
Since the null spaces $(\mea^i_j)^{-1}(0)$ are exactly the maximal faces of $\Simp$, 
these effects generate the extreme rays of $A(\Simp)^+$.  
Further, let $f_1,f_2,\dots$ denote the elements in \eqref{eq:basis} and $x_1,x_2,\dots$ the elements of \eqref{eq:dualbasis}. It is easy to see that 
$\<f_i,x_j\>= \delta_{ij}$, so that both sets are linearly independent. The statements (ii) and (iii) now follow from the fact that 
$\dim(A(\Simp))=\sum_{i=0}^k l_i+1$.

\end{proof}
The basis elements \eqref{eq:dualbasis} are visualized in Fig. \ref{fig:basis}. 
With respect to this basis, the vertices are  expressed as
\begin{equation}\label{eq:vertices}
\simp_{n_0,\dots,n_k}=\simp_{l_0,\dots,l_k}+\sum_{i=0}^k \edg^i_{n_i}.
\end{equation}

\begin{rem}\label{rem:basis} We can get another pair of dual bases using any vertex $\simp_{n_0,\dots,n_k}$ and its adjacent edges
 for a basis of $V(\Simp)$ and  
\[
\{1_\Simp\}\cup \{\mea^i_j,\ j\ne n_i, i=0,\dots,k\}
\]
for the dual basis of $A(\Simp)$. 

\end{rem}
\begin{figure*}
\begin{minipage}[c]{0.5\textwidth}
\centering
\includegraphics{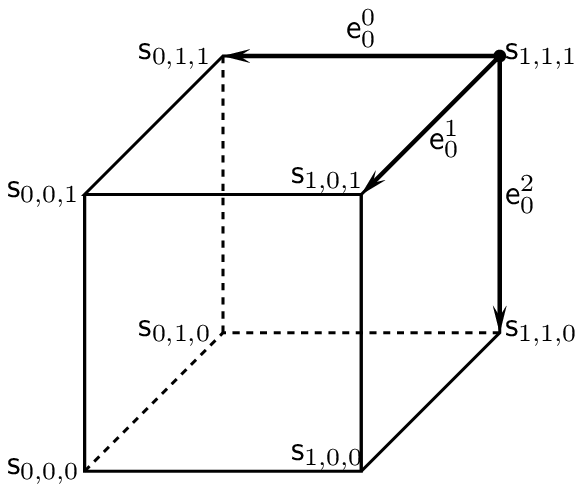}
\end{minipage}%
\hfill
\begin{minipage}[c]{0.5\textwidth}
\centering
\includegraphics{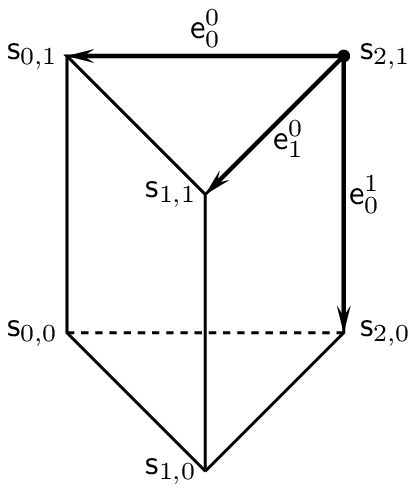}
\end{minipage}
\caption{Basis elements $\{\simp_{1,1,1},\edg^0_0,\edg^1_0,\edg^2_0\}$ for the cube $\square_3$, and $\{\simp_{2,1},\edg^0_0,\edg^0_1,\edg^1_0\}$ for the polysimplex $\Simp_{2,1}$.}\label{fig:basis}
\end{figure*}

\begin{ex}\label{ex:square}(\textbf{The square}.)
The simplest example  is the square $\square_2=\Delta_1\times \Delta_1$. The vertices  
$\simp_{0,0},\simp_{0,1},\simp_{1,0},\simp_{1,1}$  satisfy the relation $\simp_{0,0}+\simp_{1,1}=\simp_{0,1}+\simp_{1,0}$.
This state space is also called the gbit or square-bit, see \cite{poro1994quantum, barrett2007information, bhss2013comparing}.
The dual cone $A(\square_2)^+$ is generated by four effects $\mea^0_0,\mea^0_1,\mea^1_0,\mea^1_1$. Since 
we have $\mea^0_0+\mea^0_1=1_{\square_2}=\mea^1_0+\mea^1_1$, 
these effects again form a square, so that  $V(\square_2)^+$ is weakly self-dual. Note that the square is the only 
 polysimplex  with this property.
(This follows from the fact that the extreme rays of  $V(\Simp)^+$ are generated by vertices of $\Simp$, whereas the extreme rays of $A(\Simp)^+$ correspond to maximal faces of $\Simp$. Therefore, $V(\Simp)^+\simeq A(\Simp)^+$ implies  $\Pi_{i=0}^k (l_i+1)=\sum_{i=0}^k(l_i+1)$, this holds only for the square). 
 
\end{ex}

The following relation of polysimplices and spaces of classical channels is immediate (see also Examples \ref{ex:classical} and \ref{ex:CC_QC}). 
 \begin{prop}\label{prop:ccchannel_polysimp} We have  $\Delta_n^{k+1}\simeq \aff(\Delta_{k},\Delta_n)$. More generally, any polysimplex 
 $\Simp_{l_0,\dots,l_k}$ is isomorphic to a face of $\aff(\Delta_{k},\Delta_n)$, with $n\ge \max_i l_i$. This isomorphism is given by \[
 s\mapsto T_s\in \aff(\Delta_{k},\Delta_n),\quad T_s(j|i)=\left\{\begin{array}{cc} \mea_j^i(s), &
\mbox{ if } j\le l_i\\
0 & \mbox{otherwise}\end{array}\right..
 \]

 \end{prop}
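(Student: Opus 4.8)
\textit{Proof proposal.} The plan is to identify an affine map $\Delta_k\to\Delta_n$ with the $(k+1)$-tuple of its values at the vertices of $\Delta_k$. A simplex is freely generated by its vertices as a convex set, so any assignment of the $k+1$ vertices of $\Delta_k$ to points of $\Delta_n$ extends uniquely to an element of $\aff(\Delta_k,\Delta_n)$, and every such map arises this way; moreover affine combinations of maps correspond to pointwise affine combinations of the associated tuples. Hence $T\mapsto (T(\cdot|0),\dots,T(\cdot|k))$ is an affine isomorphism from $\aff(\Delta_k,\Delta_n)$ onto $\Delta_n\times\dots\times\Delta_n=\Delta_n^{k+1}$, which already gives the first claim.

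For a general polysimplex I would use the structure of faces of $\Delta_n$: these are exactly the subsimplices spanned by subsets of the $n+1$ vertices, so for $l\le n$ the set $F^{(l)}$ of probability distributions supported on $\{0,\dots,l\}$ is a face of $\Delta_n$ affinely isomorphic to $\Delta_l$. When $n\ge\max_i l_i$, the subset $\prod_{i=0}^k F^{(l_i)}\subseteq\Delta_n^{k+1}$ corresponds, under the isomorphism above, to those $T$ with $T(\cdot|i)$ supported on $\{0,\dots,l_i\}$ for all $i$. Since a product of faces is a face of the product of convex sets (the same fact used above for polysimplices), $\prod_i F^{(l_i)}$ is a face of $\Delta_n^{k+1}$ and therefore corresponds to a face of $\aff(\Delta_k,\Delta_n)$; this face is affinely isomorphic to $\prod_i\Delta_{l_i}=\Simp_{l_0,\dots,l_k}$.

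It then remains to check that this isomorphism is the stated one. Tracing the identifications, a point $s\in\Simp_{l_0,\dots,l_k}$ has $i$-th component with coordinates $\mea^i_j(s)$, $j=0,\dots,l_i$; sending this component into the face $F^{(l_i)}\subseteq\Delta_n$, i.e. padding with zeros, produces exactly the matrix $T_s$ with $T_s(j|i)=\mea^i_j(s)$ for $j\le l_i$ and $T_s(j|i)=0$ otherwise. Each column sums to $\sum_{j=0}^{l_i}\mea^i_j(s)=1$ with entries in $[0,1]$, so $T_s$ is a stochastic matrix, hence a genuine element of $\aff(\Delta_k,\Delta_n)$, and $s\mapsto T_s$ is precisely the composition of $\Simp_{l_0,\dots,l_k}\simeq\prod_i F^{(l_i)}$ with the inclusion into $\Delta_n^{k+1}\simeq\aff(\Delta_k,\Delta_n)$. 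I do not expect a real obstacle here; the only points needing (routine) care are that evaluation at vertices is an affine isomorphism \emph{in both directions}, and that $\prod_i F^{(l_i)}$ is genuinely a face of $\Delta_n^{k+1}$ rather than merely a convex subset, which follows by applying the definition of a face coordinatewise.
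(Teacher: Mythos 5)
Your argument is correct and follows exactly the route the paper has in mind: the paper declares the proposition ``immediate'' on the strength of Example \ref{ex:classical}, where elements of $\aff(\Delta_k,\Delta_n)$ are identified with stochastic matrices via their values on the vertices of $\Delta_k$, and your face-of-a-product observation supplies the (routine) details the paper omits. Nothing further is needed.
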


There is also a relation of polysimplices and spaces of quantum channels. To describe this relation, we will need the following notion.

Let $K$ and $K'$ be state spaces. A map $R\in \aff(K,K')$ such that there is a map $S\in \aff(K',K)$ with $R S=id_{K'}$ is called a retraction. The   map $S:K'\to K$ is then 
called a section. For any retraction-section pair $(R,S)$, the map $P=S R$ is a projection on $K$ onto the range of $S$, that is, an affine idempotent map $K\to K$ such that $P(K)=S(K')$. Moreover, any map in $\aff(K',C)$ for a convex set $C$ has an extension to a map in $\aff(K,C)$. 

\begin{prop}\label{prop:qchannel_polysimp}  There exists a retraction-section pair 
$R\in \aff(C_{A,A'}, \Delta^{d_A}_{d_{A'}-1})$ and $S\in \aff(\Delta^{d_A}_{d_{A'}-1},C_{A,A'})$, determined by 
\[
\mea^i_jR(\Phi)=\<j,\Phi(|i\>\<i|_A)|j\>_{A'},\quad \forall i,j; \Phi\in \Ce_{A,A'}
\]
 and 
\[
S(s)=\sum_{i,j} \<i,\cdot|i\>_A\mea^i_j(s)|j\>\<j|_{A'},\quad  s\in \Delta^{d_A}_{d_{A'}-1}.
\]

\end{prop}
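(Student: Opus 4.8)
The plan is to verify directly that the two displayed formulas define affine maps of the asserted types and that $RS=id$; the only points requiring genuine care are the well-definedness of $S$ and of $R$.

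First I would check that $S$ is well defined, i.e.\ that $S(s)$ is an actual quantum channel for every $s\in\Delta^{d_A}_{d_{A'}-1}$. Writing $s=(s^0,\dots,s^{d_A-1})$ and $T(j|i):=\mea^i_j(s)=s^i(j)$, each column $T(\cdot|i)=s^i$ lies in $\Delta_{d_{A'}-1}$, so $T$ is a stochastic matrix and $S(s)$ is precisely the c-c channel $\Phi_T$ of \eqref{eq:CC_T} from Example~\ref{ex:CC_QC}. Its complete positivity and trace preservation are immediate from the measure-and-prepare form: with $K_{ij}:=\sqrt{T(j|i)}\,|j\>\<i|_{A'A}$ one has $S(s)(\sigma)=\sum_{i,j}K_{ij}\sigma K_{ij}^*$ and $\sum_{i,j}K_{ij}^*K_{ij}=\sum_i\big(\sum_j T(j|i)\big)|i\>\<i|_A=I_A$, so $S(s)\in\Ce_{A,A'}$. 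Since $s\mapsto T_s$ and $T\mapsto\Phi_T$ are affine (in fact linear on the ambient spaces), $S\in\aff(\Delta^{d_A}_{d_{A'}-1},\Ce_{A,A'})$.

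Next I would check that $R$ is well defined, i.e.\ that the numbers $\<j,\Phi(|i\>\<i|_A)|j\>_{A'}$ are the barycentric coordinates of a point of the polysimplex. This is where the channel property of $\Phi$ enters: for each $i$, $\Phi(|i\>\<i|_A)$ is a density operator on $\Ha_{A'}$, so its diagonal entries in the basis $\{|j\>_{A'}\}$ are nonnegative and sum to $\Tr\Phi(|i\>\<i|_A)=1$; thus $(R(\Phi))^i:=\big(\<j,\Phi(|i\>\<i|_A)|j\>_{A'}\big)_j\in\Delta_{d_{A'}-1}$, and the resulting $d_A$-tuple is a point of $\Delta^{d_A}_{d_{A'}-1}$. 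Affinity of $R$ is clear, as $\Phi\mapsto\Phi(|i\>\<i|_A)$ and $\rho\mapsto\<j,\rho|j\>_{A'}$ are linear.

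Finally I would compute $RS$. Inserting the formula for $S(s)$ into that for $R$ and using orthonormality of $\{|i\>_A\}$ gives $S(s)(|i\>\<i|_A)=\sum_{j}\mea^i_{j}(s)\,|j\>\<j|_{A'}$, whence $\mea^i_j\,R(S(s))=\<j,S(s)(|i\>\<i|_A)|j\>_{A'}=\mea^i_j(s)$ for all $i,j$; since $\mea^i_j(\cdot)$ simply reads off the $j$-th coordinate of the $i$-th component, this says $R(S(s))=s$, i.e.\ $RS=id_{\Delta^{d_A}_{d_{A'}-1}}$, so $(R,S)$ is a retraction-section pair. (Combined with Proposition~\ref{prop:ccchannel_polysimp}, this also exhibits the set of c-c channels $\aff(\Delta_{d_A-1},\Delta_{d_{A'}-1})$ as a retract of $\Ce_{A,A'}$.) I do not expect any serious obstacle here: the whole argument is a short computation, and the only step warranting attention is the well-definedness of $S$ — confirming that the measure-and-prepare prescription genuinely produces a completely positive trace-preserving map landing in $\Ce_{A,A'}$ rather than merely an affine map into the ambient space $V(\Ce_{A,A'})$ — which is settled by the explicit Kraus decomposition above.
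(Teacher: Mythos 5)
Your proof is correct and follows essentially the same route as the paper: verify that $R(\Phi)$ lands in the polysimplex because each $\Phi(|i\>\<i|_A)$ is a density operator, verify that $S(s)$ is the c-c channel $\Phi_{T_s}$ and hence completely positive and trace preserving, and check $RS=id$ by direct computation on the components $\mea^i_j$. The only difference is that you supply the explicit Kraus decomposition for $S(s)$, which the paper leaves implicit by citing the measure-and-prepare form of c-c channels.
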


\begin{proof} Since $\mea^i_jR(\Phi)\ge 0$ for all $i,j$ and $\sum_j \mea^i_jR(\Phi)=\Tr \Phi(|i\>\<i|)=1$ for all $i$,
 $R$ is a well defined element in $\aff(C_{A,A'}, \Delta^{d_A}_{d_{A'}-1})$. For each $s$, $S(s)$ is a c-c channel and is therefore completely positive. It is quite clear that 
 \[
\mea^i_jRS(s)=\mea^i_j(s),\quad \forall i,j,
 \]
so that $RS=id$.

\end{proof}

\begin{rem}\label{rem:channels}
The above Proposition implies that there is a projection  of $\Ce_{A,A'}$ onto a set of c-c channels and that
any map in $\aff(\Delta^{d_A}_{d_{A'}-1},C)$ can be extended to a map in $\aff(\Ce_{A,A'},C)$, for any convex set $C$. The consequences of this fact will become clear later on.
\end{rem}

\section{Incompatibility of measurements}\label{sec:inc}

 Let $K$ be a state space and  
let $f^i\in \aff(K,\Delta_{l_i})$ be a measurement with values in $\{0,\dots,l_i\}$, $i=0,\dots,k$. We say that  $f^0,\dots,f^k$  are compatible if they are the marginals of a single joint measurement with values in $\{0,\dots,l_0\}\times\dots\times \{0,\dots,l_k\}$. For an exposition of incompatibility in our setting,  see \cite{hmz2016aninvitation}.

The joint measurement is described by a map $g\in \aff(K,\Delta_L)$ with  $L:=\Pi_i(l_i+1)-1$ (note that $\Delta_L\simeq \bigotimes_i \Delta_{l_i}$) and can be parametrized as
 \[
 g(x)=\sum_{n_i\in \{0,\dots,l_i\}} g_{n_0,\dots,n_k}(x) \delta_{n_0,\dots,n_k},
 \]
where $g_{n_0,\dots,n_k}\in E(K)$ and $\delta_{n_0,\dots,n_k}$ is the probability measure concentrated at $(n_0,\dots,n_k)$. We then have
 \begin{equation}\label{eq:comaptible}
 f^i_j=\sum_{n_0,\dots,n_{i-1},n_{i+1},\dots,n_k} g_{n_0,\dots,n_{i-1},j,n_{i+1},\dots,n_k}.
 \end{equation}
for the effects of $f^i$. Let $J\in \aff( \Delta_{L},\Simp)$ be determined by $J(\delta_{n_0,\dots,n_k})=\simp_{n_0,\dots,n_k}$. Then it is easy to see that \eqref{eq:comaptible} can be written as  
$f^i_j=\mea^i_j Jg$. In other words, $f^0,\dots,f^k$ is compatible if and only if the corresponding $F\in \aff(K,\Simp)$ satisfies
\begin{equation}\label{eq:compatibleF}
F=Jg,\qquad\mbox{for some } g\in \aff(K,\Delta_L).
\end{equation}
The following observation is simple but important.

 \begin{thm}\label{thm:incompatibility} The measurements $f^0,\dots,f^k$ are compatible if and only if the corresponding channel $F$ is  entanglement breaking.
 \end{thm}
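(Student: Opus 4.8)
The plan is to reduce the statement to Proposition \ref{prop:ETB}, exploiting the parametrization of compatibility via the factorization \eqref{eq:compatibleF} through the simplex $\Delta_L$. The easy direction is compatibility $\Rightarrow$ ETB: if $f^0,\dots,f^k$ are compatible, then by \eqref{eq:compatibleF} we have $F=Jg$ for some $g\in\aff(K,\Delta_L)$, and since $\Delta_L$ is a simplex and both $J$ and $g$ are channels, this exhibits the channel $F$ as factoring through a simplex, so Proposition \ref{prop:ETB} gives that $F$ is entanglement breaking.

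For the converse, suppose $F$ is ETB. By Proposition \ref{prop:ETB}, since $F$ is a channel, there exist a simplex $\Delta_n$ and channels $T_0\in\aff(K,\Delta_n)$ and $T_1\in\aff(\Delta_n,\Simp)$ with $F=T_1T_0$. The task is to convert this into a factorization of the special form $F=Jg$. The key observation is that $J$, although not injective, is surjective onto $\Simp$ in a controlled way: for a point $t=(p^0,\dots,p^k)\in\Simp$, the product distribution $q=p^0\otimes\dots\otimes p^k\in\Delta_L$ satisfies $J(q)=t$, because by the defining relation $J(\delta_{n_0,\dots,n_k})=\simp_{n_0,\dots,n_k}=(\delta^0_{n_0},\dots,\delta^k_{n_k})$ the $i$-th component of $J(q)$ is exactly the $i$-th marginal of $q$. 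Hence, writing $T_1(\delta_m)=(p^0_m,\dots,p^k_m)$ for each vertex $\delta_m$ of $\Delta_n$ and setting $q_m:=p^0_m\otimes\dots\otimes p^k_m\in\Delta_L$, the assignment $\delta_m\mapsto q_m$ extends uniquely to a channel $\tilde T_1\in\aff(\Delta_n,\Delta_L)$ (affine maps out of a simplex are freely prescribed on vertices). Then $J\tilde T_1$ and $T_1$ are affine maps on $\Delta_n$ agreeing on all vertices, hence equal, so $g:=\tilde T_1 T_0\in\aff(K,\Delta_L)$ satisfies $Jg=J\tilde T_1 T_0=T_1T_0=F$. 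By \eqref{eq:compatibleF}, the measurements $f^0,\dots,f^k$ are compatible.

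The only delicate point is this ``rerouting'' of the intermediate simplex: one must replace the abstract factor $T_1\colon\Delta_n\to\Simp$ supplied by Proposition \ref{prop:ETB} with the canonical map $J\colon\Delta_L\to\Simp$ without destroying affinity, and this is precisely what the product-distribution preimages of the vertices $T_1(\delta_m)$ provide. I do not anticipate any obstacle beyond this bookkeeping and the routine verification that the maps involved are genuine channels.
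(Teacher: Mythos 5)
Your proof is correct and follows essentially the same route as the paper: both directions reduce to Proposition \ref{prop:ETB}, and your explicit product-distribution lift $\delta_m\mapsto p^0_m\otimes\dots\otimes p^k_m$ is exactly the construction behind the paper's one-line appeal to the fact that all measurements on a simplex are compatible, which yields the same $h\in\aff(\Delta_n,\Delta_L)$ with $T=Jh$.
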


\begin{proof}  If $f^0,\dots,f^k$ are compatible, then $F$ is ETB by \eqref{eq:compatibleF} and  Proposition \ref{prop:ETB}.
Conversely, let $F$ be ETB. Proposition \ref{prop:ETB} implies that there is some simplex $\Delta_n$ and 
 channels $g'\in \aff(K,\Delta_n)$, $T\in \aff(\Delta_n,\Simp)$ such that $F=Tg'$. The channel $T$ corresponds to a collection of 
 measurements $t^i:=\mea^i T\in \aff(\Delta_n,\Delta_{l_i})$. Since all  measurements on a simplex are compatible, there is some
 $h\in \aff(\Delta_n,\Delta_{L})$ such that $T=Jh$.
Putting $g=hg'$ finishes the proof.

\end{proof}

\begin{rem}\label{rem:incomp_in_combs}
The above characterization of incompatible measurements as non-ETB channels 
suggests that these channels should be admissible in the GPT in question, which also means that we need to include the polysimplices into the theory. For quantum theory this might seem strange, since the polysimplices are certainly not quantum state spaces. On the other hand, the retraction-section pairs of Proposition \ref{prop:qchannel_polysimp}
 allow us to include maps in $\aff(\states(\Ha),\Simp)$ into the larger setting of quantum networks. If $F\in \aff(\states(\Ha), \Delta_n^{k+1})$ is a collection of quantum measurements and $S$ is the section of Proposition \ref{prop:qchannel_polysimp}, then $SF$ is a map from states 
  into quantum channels. Using the Choi representation, one can see that this map is also completely positive, hence a quantum comb, \cite{cdp2009framework}.  Moreover, since $RS=id$, $SF$ is ETB iff $F$ is. One should be aware that ''entanglement breaking'' has a different meaning here than for usual cp maps: $F$ is compatible iff $(SF\otimes id)(\rho)$ is a local bipartite channel
for  any bipartite state $\rho$.

\end{rem}

\subsection{Incompatibility witnesses}

Let $F=(f^0,\dots,f^k)\in \aff(K,\Simp)$ be a collection of measurements. By Proposition \ref{prop:aff_dual}, $F$ is non-ETB if and only if 
 there is some $W\in \mathcal A(\Simp,V(K)^+)$ such that $\Tr F W<0$. Such a $W$  will be called an 
incompatibility witness. As can be seen from Proposition \ref{prop:ETB}, this notion has a close relation to entanglement witnesses.

Any $W\in \mathcal A(\Simp,V(K))$ is determined by the images of the vertices of $\Simp$. The elements
\[
w_{n_0,\dots,n_k}:=W(\simp_{n_0,\dots,n_k})
\]
 will be called the vertices of $W$ (although not all of these points must be vertices of the image  $W(\Simp)$). The map $W$ is positive if and only if  all its vertices are in $V(K)^+$.  The image of the barycenter of $\Simp$,
  $\bar w:=W(\bar \simp)$,  will be called the barycenter of $W$.  We say that $W$ is degenerate if $\dim(W(\Simp))<\dim(\Simp)$.
A description of  the cones $\mathcal A(\Simp,V(K)^+)$ and $\mathcal A_{sep}(\Simp,V(K)^+)$ can be found in Appendix \ref{app:cones}.

It is clear that an ETB map cannot be an incompatibility witness. As we shall see (Fig. \ref{fig:qubitex} below), not all non-ETB maps are witnesses.
For a characterization of witnesses, we will need the following notion. 
Let $W,\tilde W\in \aff(\Simp,V(K)^+)$. We say that $\tilde W$  is a translation of $W$ in the direction $v\in V(K)$ if $\tilde W=W+L_v$ where $L_v$ is the constant map $L_v(s)\equiv v$. Equivalently, the vertices of $\tilde W$ satisfy
$\tilde w_{n_0,\dots,n_k}=w_{n_0,\dots,n_k}+v$ for all $n_0,\dots,n_k$. If $v$ is such that $\<1_K,v\>=0$, we say that $\tilde W$ is a translation of $W$ along $K$.

\begin{thm}\label{thm:witness} A map  $W\in \aff(\Simp,V(K)^+)$ is an incompatibility witness if and only if no translation of $W$ along $K$ is 
ETB.

\end{thm}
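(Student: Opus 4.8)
The plan is to reformulate ``$W$ is an incompatibility witness'' via the duality of Proposition~\ref{prop:aff_dual} and then argue by cone separation. Write $V_0:=\{v\in V(K):\<1_K,v\>=0\}$ and $\mathcal V_0:=\{L_v:v\in V_0\}$ for the (injective linear image of the) space of translation directions along $K$; recall $L_v$ extends to $V(\Simp)$ by $L_v(\phi)=\<1_\Simp,\phi\>v$. The one fact driving everything is that a collection of measurements $F\in\aff(K,\Simp)$ is unital, $F^*(1_\Simp)=1_K$, so that
\[
\Tr(FL_v)=\<F^*(1_\Simp),v\>=\<1_K,v\>=0\qquad(v\in V_0),
\]
i.e.\ translating $W$ along $K$ never changes $\Tr FW$ for a measurement collection $F$. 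Since moreover $W\in\aff(\Simp,V(K)^+)$ lies in the dual cone of $\aff_{sep}(K,V(\Simp)^+)$ (Proposition~\ref{prop:aff_dual}), Theorem~\ref{thm:incompatibility} gives $\Tr FW\ge0$ for every \emph{compatible} $F$, so ``$W$ is a witness'' is equivalent to ``$\Tr FW<0$ for some $F\in\aff(K,\Simp)$''. For the ``only if'' implication (in contrapositive form) I would assume some translation $W+L_v$, $v\in V_0$, is ETB; by Proposition~\ref{prop:aff_dual} it satisfies $\Tr F(W+L_v)\ge0$ for all $F\in\aff(K,\Simp)\subseteq\aff(K,V(\Simp)^+)$, and the displayed identity then yields $\Tr FW=\Tr F(W+L_v)\ge0$ for all $F$, so $W$ detects nothing.

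For the converse, suppose $\Tr FW\ge0$ for all $F\in\aff(K,\Simp)$, and let $\mathcal C$ be the cone of nonnegative multiples of elements of $\aff(K,\Simp)$, so that $W\in\mathcal C^*$. I claim $\mathcal C=\aff(K,V(\Simp)^+)\cap\mathcal H$, where $\mathcal H:=\{G:G^*(1_\Simp)\in\mathbb R\,1_K\}$ is a linear subspace: a positive $G$ with $G^*(1_\Simp)=\alpha1_K$ has $\alpha\ge0$, $\alpha=0$ forces $G=0$ since $1_\Simp$ is an order unit, and for $\alpha>0$ the map $\alpha^{-1}G$ is positive and unital, hence a measurement collection. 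Since $\<L_v,G\>=\<G^*(1_\Simp),v\>$ one gets $\mathcal H=\mathcal V_0^\perp$, hence $\mathcal H^\perp=\mathcal V_0$. By the formula for the dual of an intersection together with Proposition~\ref{prop:aff_dual},
\[
\mathcal C^*=\overline{\aff(K,V(\Simp)^+)^*+\mathcal H^\perp}=\overline{\aff_{sep}(\Simp,V(K)^+)+\mathcal V_0}.
\]
So it remains to show that the cone $\aff_{sep}(\Simp,V(K)^+)+\mathcal V_0$ is already closed; granting this, $W=W_0+L_v$ with $W_0$ ETB and $v\in V_0$, i.e.\ the translation $W-L_v=W_0$ of $W$ along $K$ is ETB, which is exactly what we want.

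The only genuine obstacle is thus the closedness of $\aff_{sep}(\Simp,V(K)^+)+\mathcal V_0$, and here I would use the polyhedrality of $\Simp$. Since the extreme rays of $A(\Simp)^+$ are the finitely many rays generated by the $\mea^i_j$, the description of the ETB cone (Appendix~\ref{app:cones}) lets one write any ETB map as $W_0=\sum_{i,j}\<\mea^i_j,\cdot\>\psi_{ij}$ with $\psi_{ij}\in V(K)^+$; evaluating at a vertex gives $W_0(\simp_{n_0,\dots,n_k})=\sum_i\psi_{i,n_i}$, so $0\le\psi_{ij}\le W_0(\simp_{n_0,\dots,n_k})$ whenever $n_i=j$ and hence $\<1_K,\psi_{ij}\>\le\max_{\vec n}\<1_K,W_0(\simp_{\vec n})\>$. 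For a convergent sequence $W_0^{(n)}+L_{v_n}\to W'$ with $W_0^{(n)}$ ETB and $v_n\in V_0$, the identity $\<1_K,W_0^{(n)}(\simp_{\vec n})\>=\<1_K,W'(\simp_{\vec n})\>+o(1)$ (using $\<1_K,v_n\>=0$) confines the $\psi^{(n)}_{ij}$ to a compact subset of $V(K)^+$; passing to a subsequence, $\psi^{(n)}_{ij}\to\psi_{ij}\in V(K)^+$, so $W_0^{(n)}\to W_0\in\aff_{sep}(\Simp,V(K)^+)$ and then $L_{v_n}\to W'-W_0\in\mathcal V_0$, giving $W'\in\aff_{sep}(\Simp,V(K)^+)+\mathcal V_0$. (Alternatively, this closedness can be quoted directly from the structural description of the relevant cones in Appendix~\ref{app:cones}.)
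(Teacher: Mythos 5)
Your proof is correct, and while it shares the paper's skeleton, the decisive step is carried out by a genuinely different tool. Both arguments identify the cone generated by $\aff(K,\Simp)$ inside $\aff(K,V(\Simp)^+)$ as its intersection with the subspace $\mathcal H=\{G:G^*(1_\Simp)\in\mathbb R 1_K\}$, recognize the annihilator of that subspace as the constant maps $L_v$ with $\<1_K,v\>=0$, and handle the easy direction by the same computation $\Tr FL_v=\<1_K,v\>=0$. For the hard direction, however, the paper extends the positive functional $F\mapsto\Tr FW$ from $(\mathcal V,\mathcal V^+)$ to the whole space by Krein's extension theorem, using that $\mathcal V^+$ has nonempty interior, and reads off the ETB translation from the extension; you instead apply the bipolar identity $(C_1\cap C_2)^*=\overline{C_1^*+C_2^*}$ together with Proposition~\ref{prop:aff_dual} and then remove the closure by proving directly that $\aff_{sep}(\Simp,V(K)^+)+\mathcal V_0$ is closed. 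That closedness argument is sound: the decomposition $W_0=\sum_{i,j}\mea^i_j(\cdot)\psi^i_j$ of ETB maps, the bound $\psi^i_j\le W_0(\simp_{n_0,\dots,n_k})$ whenever $n_i=j$, and the fact that $\<1_K,v_n\>=0$ keeps $\<1_K,W_0^{(n)}(\simp_{n_0,\dots,n_k})\>$ bounded all check out, and the sets $\{\psi\in V(K)^+:\<1_K,\psi\>\le M\}$ are indeed compact since $K$ is a compact base of $V(K)^+$. What your route buys is a self-contained, elementary finite-dimensional proof that avoids Krein's theorem; the price is the compactness bookkeeping that the paper's interior-point observation dispatches in one line. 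One caveat: your parenthetical suggestion to quote the closedness from Appendix~\ref{app:cones} would not work, since that appendix describes the cones $\aff(\Simp,V(K)^+)$ and $\aff_{sep}(\Simp,V(K)^+)$ but says nothing about the sum with $\mathcal V_0$ being closed; keep the explicit argument.
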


\begin{proof} Note that  $\mathcal A(K,\Simp)$ is a compact convex subset of $\mathcal A(K,V(\Simp)^+)$. We need to describe the generated space and cone. For shorter notations, let us denote $\mathcal V:=V(\mathcal A(K,\Simp))$ and $\mathcal V^+:=V(\mathcal A(K,\Simp))^+$. We have
\[
\mathcal V= \{T\in \mathcal A(K,V(\Simp)), 1_\Simp  T\in \mathbb R 1_K\},
\]
 and $\mathcal V^+=\mathcal A(K,V(\Simp)^+)\cap \mathcal V$. Let $\mathcal V^\perp$ be the annihilator of $\mathcal V$  in the dual space $\aff(\Simp,V(K))$, then it is not difficult to see that 
\[
 \mathcal V^\perp= \{L_v,\ v\in V(K),\<1_K,v\>=0\}.
\]
Since  $int(\mathcal V^+)\ne \emptyset$ (for example, any constant map of $K$ onto $s\in ri(\Simp)$ is in $int(\mathcal V^+)$), Krein's theorem \cite{naimark1959normed} implies that
 any positive functional  on $(\mathcal V,\mathcal V^+)$  extends to an element in the dual cone $\aff(K,V(\Simp)^+)^*=\aff_{sep}(\Simp,V(K)^+)$.
If  $W\in \aff(\Simp,V(K)^+)$ is not a witness, then $F\mapsto \Tr FW$ extends to a positive functional on $(\mathcal V,\mathcal V^+)$, so that  there is some $\tilde W\in \aff_{sep}(\Simp, V(K)^+)$ such that 
\[
\Tr FW=\Tr F\tilde W,\qquad F\in \mathcal A(K,\Simp).
\]
Hence   $W-\tilde W\in \mathcal V^\perp$, so that  $\tilde W$ is an ETB   translation of $W$ along $K$. Conversely, assume that $W$ is a witness. Let $F\in \aff(K,\Simp)$
 be such that $\Tr FW<0$, then for any translation $\tilde W$ of $W$ along $K$, we have $\Tr F\tilde W=\Tr FW<0$. It follows that $\tilde W$ is a witness as well and cannot be ETB. 

\end{proof}

\begin{figure*}
\begin{minipage}[c]{\textwidth}
\centering
\includegraphics{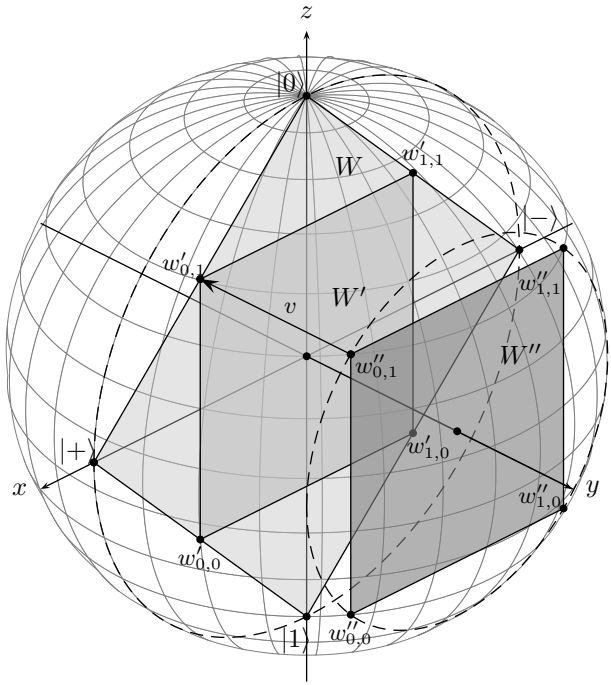}
\end{minipage}%
\caption{\footnotesize \textbf{Incompatibility witnesses for qubit states.} Three examples of maps $W, W', W''$ from the square into  the Bloch ball. The vertices of  $W$ are the pure states $|0\>$, $|1\>$, $|+\>$, $|-\>$ so that $W$ 
 is extremal and not ETB by Corollary \ref{coro:extr_nondeg}. It is easy to see that there is no nontrivial translation of $W$ along $K$,
  hence  $W$ is a witness. The map $W'$ is ETB, since the vertices $w_{i,j}'$ have a decomposition as in 
  Proposition \ref{prop:witness}, Appendix \ref{app:cones} (where  the elements $\psi^i_j$ are the vertices of $W$ multiplied by $\tfrac{1}{2}$). The map $W''$ has
   extremal vertices $w_{i,j}''$,  so that it is again extremal and not ETB by Corollary \ref{coro:extr_nondeg}. But $W''$ is 
   not a witness by Theorem \ref{thm:witness}, since the ETB map $W'$ is a translation of $W''$ along $K$.  }
\label{fig:qubitex}
\end{figure*}

We will find another characterization of incompatibility witnesses for two-outcome measurements later (Corollary \ref{coro:2witnesses}). 

\subsubsection{Extremal and non-ETB elements in $\aff(\square_2,V(K)^+)$}

For detection of incompatibility, it suffices to use witnesses that are extremal in the cone $\aff(\Simp,V(K)^+)$.
More precisely, for an ordered vector space $(V, V^+)$, we say that and element $v\in V^+$  is extremal  if it is nonzero and lies on an 
extreme ray of $V^+$. Alternatively, $v$ is extremal if $v\ne 0$ and $v'\le v$ for any  $v'\in V^+$ implies that $v'=tv$ for some $t\ge 0$. A description of extremal elements in $\aff(\Simp, V(K)^+)$ will be also useful in the next section. 

So far, we can do this in the simplest case when $\Simp=\square_2$.
 Any $W\in \aff(\square_2, V(K)^+)$ is given by four 
vertices  $w_{i,j}\in V(K)^+$, $i,j=0,1$, satisfying
\begin{equation}\label{eq:squaremap}
w_{0,0}+w_{1,1}=w_{0,1}+w_{1,0}=2\bar w.
\end{equation}

\begin{prop}\label{prop:extr} Let $W\in \mathcal A(\square_2,V(K)^+)$ have vertices  $w_{i,j}$, $i,j=0,1$ and barycenter $\bar w\ne 0$. Let 
$F_{i,j}$ denote the face of $V(K)^+$ generated by $w_{i,j}$ and let $L_{i,j}=F_{i,j}-F_{i,j}$ be the generated subspace. Then $W$ 
is extremal if and only if   
\begin{align}
L_{0,0}\cap L_{1,1}=L_{0,1}\cap L_{1,0}&=\{0\}\label{eq:extr_1}\\
(L_{0,0}\oplus L_{1,1})\cap (L_{0,1}\oplus L_{1,0})&=\mathbb R\bar w\label{eq:extr_2}.
\end{align}

\end{prop}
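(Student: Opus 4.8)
The plan is to characterize extremality of $W$ in the cone $\mathcal A(\square_2,V(K)^+)$ by analyzing which maps $W'$ can satisfy $0\le W'\le W$ pointwise. Since $W$ is determined by its four vertices $w_{i,j}$ subject to the single linear constraint \eqref{eq:squaremap}, any perturbation $W'=W-\epsilon W''$ staying in the cone corresponds to vertices $w''_{i,j}$ with $w''_{i,j}\in L_{i,j}$ (because $w_{i,j}\pm \delta w''_{i,j}$ must remain in $V(K)^+$ for small $\delta$, forcing $w''_{i,j}$ into the subspace spanned by the face $F_{i,j}$ generated by $w_{i,j}$), and satisfying the homogeneous version of \eqref{eq:squaremap}, namely $w''_{0,0}+w''_{1,1}=w''_{0,1}+w''_{1,0}$. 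Extremality of $W$ means every such admissible $W''$ is a scalar multiple of $W$, i.e. $w''_{i,j}=t\,w_{i,j}$ for a common $t$.

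First I would set up the linear-algebra reformulation: $W$ is extremal iff the only solution of
\begin{equation}\label{eq:extr_system}
w''_{i,j}\in L_{i,j}\ (i,j=0,1),\qquad w''_{0,0}+w''_{1,1}=w''_{0,1}+w''_{1,0}
\end{equation}
(up to the obvious trivial direction $w''_{i,j}=t w_{i,j}$) is the trivial one. Note the common value $\bar w'':=\tfrac12(w''_{0,0}+w''_{1,1})=\tfrac12(w''_{0,1}+w''_{1,0})$ lies in $(L_{0,0}\oplus L_{1,1})\cap(L_{0,1}\oplus L_{1,0})$. The scalar-multiple solutions contribute exactly the line $\mathbb R\bar w$ to this intersection (using $\bar w\ne 0$). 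The strategy is then to show: a nontrivial solution of \eqref{eq:extr_system} exists iff \eqref{eq:extr_1} or \eqref{eq:extr_2} fails.

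For the "if" direction (failure of conditions $\Rightarrow$ $W$ not extremal): if $0\ne v\in L_{0,0}\cap L_{1,1}$, put $w''_{0,0}=w''_{1,1}=v$ and $w''_{0,1}=w''_{1,0}=0$; this solves \eqref{eq:extr_system} nontrivially, and symmetrically for $L_{0,1}\cap L_{1,0}$. If instead $v\in(L_{0,0}\oplus L_{1,1})\cap(L_{0,1}\oplus L_{1,0})$ with $v\notin\mathbb R\bar w$, write $v=a_{00}+a_{11}=a_{01}+a_{10}$ with $a_{ij}\in L_{ij}$, and set $w''_{ij}=a_{ij}$; this again solves the system and is not a multiple of $W$ since $\bar w''=v\notin\mathbb R\bar w$. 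For the "only if" direction, I would run this in reverse: given any solution of \eqref{eq:extr_system} with common barycenter $\bar w''$, condition \eqref{eq:extr_2} forces $\bar w''=t\bar w$ for some $t$; subtracting $tW$ reduces to the case $\bar w''=0$, i.e. $w''_{1,1}=-w''_{0,0}\in L_{0,0}\cap L_{1,1}$ and $w''_{1,0}=-w''_{0,1}\in L_{0,1}\cap L_{1,0}$, and \eqref{eq:extr_1} then forces these to vanish. The one subtlety to handle carefully is passing between "$w_{i,j}\pm\delta w''_{i,j}\in V(K)^+$ for small $\delta$" and "$w''_{i,j}\in L_{i,j}$": this uses that $F_{i,j}$ is the smallest face of $V(K)^+$ containing $w_{i,j}$, so $w_{i,j}$ is a relative-interior point of $F_{i,j}$, hence a full neighborhood of $w_{i,j}$ within $L_{i,j}$ lies in $F_{i,j}\subseteq V(K)^+$, and conversely any $v\notin L_{i,j}$ pushes $w_{i,j}$ out of $V(K)^+$ in one of the two directions.

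The main obstacle is exactly this last point — making the translation between the cone-geometric statement ($0\le W'\le W$) and the subspace statement ($w''_{ij}\in L_{ij}$) airtight, in particular checking that an extremal element of the cone is insensitive to perturbations transverse to the $L_{ij}$ only because $w_{ij}\in ri(F_{ij})$. Everything else is the routine linear algebra of the $2\times 2$ incidence pattern of the square's vertices encoded in \eqref{eq:squaremap}.
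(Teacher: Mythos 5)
Your proposal follows essentially the same route as the paper's proof: reduce extremality to the homogeneous system $w''_{i,j}\in L_{i,j}$, $w''_{0,0}+w''_{1,1}=w''_{0,1}+w''_{1,0}$, use the fact that $w_{i,j}$ is a relative-interior point of $F_{i,j}$ to pass between the order relation $W'\le W$ and membership in the subspaces $L_{i,j}$, and then split the converse according to which of \eqref{eq:extr_1}, \eqref{eq:extr_2} fails. The only slip is in your construction for a nonzero $v\in L_{0,0}\cap L_{1,1}$: setting $w''_{0,0}=w''_{1,1}=v$ and $w''_{0,1}=w''_{1,0}=0$ violates the constraint (it gives $2v=0$); you need $w''_{0,0}=v$, $w''_{1,1}=-v$ (and analogously on the other diagonal), which is exactly the $\eta_{i,j}$ with $\eta_{0,0}+\eta_{1,1}=\eta_{0,1}+\eta_{1,0}=0$ used in the paper, after which everything goes through.
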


\begin{proof} 
Assume  that the two conditions are fulfilled  and let  $W'\le W$, with vertices $w_{i,j}'$ and barycenter $\bar w'$. Then clearly  $w_{i,j}'\in F_{i,j}$. 
By \eqref{eq:extr_2}  we must have $\bar w'=t\bar w$ for some $t\in [0,1]$, but then 
 $w'_{i,j}=tw_{i,j}$ for all $i,j$ by \eqref{eq:extr_1}. It follows that   $W$ is extremal.

Conversely, let us denote the subspace on the LHS of \eqref{eq:extr_2} by $L$ and assume that  there is some $\psi\neq t\bar w$ in $L$.  Then there are some $\psi_{i,j}\in L_{i,j}$ such that $\psi=\psi_{0,0}+\psi_{1,1}=\psi_{0,1}+\psi_{1,0}$. By definition of $L_{i,j}$, there is some $u>0$ such that $w_{i,j}^\pm:=
\tfrac12 w_{i,j}\pm u\psi_{i,j}\in V(K)^+$. Obviously, $w_{i,j}^+$ and $w_{i,1}^-$ are vertices of some  $W^+$ and $W^-$, which are 
not multiples of $W$, and we have  $W=W^++W^-$. It follows that  $W$ is not extremal. 

If \eqref{eq:extr_1} is not true, then there are some $\eta_{i,j}\in L_{i,j}$ such that not all of them are 0 and $\eta_{0,0}+\eta_{1,1}=\eta_{0,1}+\eta_{1,0}=0$.  We may then proceed as above to show that $W$ is not extremal. 

\end{proof}

We are now interested in extremal elements that are non-ETB. We start with a simple observation.

\begin{lemma}\label{lemma:degenerate} Let  $W\in \aff(\square_2,V(K)^+)$. If $W$ is degenerate, then it is ETB.
\end{lemma}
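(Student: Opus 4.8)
The plan is to show that a degenerate $W\in\aff(\square_2,V(K)^+)$ factorizes through a simplex, so that it is ETB by Proposition \ref{prop:ETB}. Recall that $W$ is determined by four vertices $w_{0,0},w_{0,1},w_{1,0},w_{1,1}\in V(K)^+$ subject to the single relation $w_{0,0}+w_{1,1}=w_{0,1}+w_{1,0}$ from \eqref{eq:squaremap}. A generic, non-degenerate $W$ has $\dim(W(\square_2))=2$; degeneracy means the image is contained in an affine subspace of dimension at most $1$, i.e. the four vertices are collinear (or coincide). So the first step is to translate the hypothesis ``$\dim(W(\square_2))<\dim(\square_2)=2$'' into the statement that $w_{0,0},w_{0,1},w_{1,0},w_{1,1}$ lie on a common affine line $\ell$ in $V(K)$.

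The second step is to realize such a $W$ as $W=T_1T_0$ with $T_0\in\aff(\square_2,\Delta_n)$ and $T_1\in\aff(\Delta_n,V(K)^+)$ for a suitable simplex $\Delta_n$. Since the four vertices lie on a line, we may pick two points $a,b\in\ell$ (endpoints of the smallest segment of $\ell$ containing all four vertices) and write each $w_{i,j}=\lambda_{i,j}a+(1-\lambda_{i,j})b$ with $\lambda_{i,j}\in[0,1]$; because $a,b$ need not be in $V(K)^+$ we instead take $a,b$ among the vertices $w_{i,j}$ themselves, which lie in $V(K)^+$ — concretely, order the four points along $\ell$ and let $a,b$ be the two extreme ones. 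Define $T_1\in\aff(\Delta_1,V(K)^+)$ by sending the two vertices of $\Delta_1$ to $a$ and $b$, and define $T_0\in\aff(\square_2,\Delta_1)$ by sending $\simp_{i,j}$ to the point of $\Delta_1$ with coordinates $(\lambda_{i,j},1-\lambda_{i,j})$; this is a well-defined affine map on $\square_2$ because the defining relation \eqref{eq:squaremap} is affine and is respected by the coordinates $\lambda_{i,j}$ (the map $w_{i,j}\mapsto\lambda_{i,j}$ is the restriction of an affine functional on $\ell$, hence preserves \eqref{eq:squaremap}). Then $T_1T_0=W$ on vertices, hence everywhere.

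Once $W=T_1T_0$ factors through the $1$-simplex $\Delta_1$, Proposition \ref{prop:ETB} gives immediately that $W$ is ETB, completing the proof. The only genuinely delicate point is the second step: one must be sure that the intermediate affine map $T_0\colon\square_2\to\Delta_1$ is well-defined, i.e. that the relation $\lambda_{0,0}+\lambda_{1,1}=\lambda_{0,1}+\lambda_{1,0}$ holds; but this is automatic, since it is the image of \eqref{eq:squaremap} under any affine functional that restricts to the barycentric coordinate on the line $\ell$. A small degenerate edge case — when all four vertices coincide — is handled even more simply, since then $W$ is constant and factors through $\Delta_0$, hence is ETB as already noted after Proposition \ref{prop:ETB}. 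So the argument is really just ``degenerate image $\Rightarrow$ image lies in a segment with endpoints in $V(K)^+$ $\Rightarrow$ factor through $\Delta_1$ $\Rightarrow$ ETB.''
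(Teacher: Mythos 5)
Your proof is correct, and the geometric core is the same as the paper's: once the four vertices are known to lie on a segment whose endpoints can be taken among the $w_{i,j}$ themselves (hence in $V(K)^+$), the positivity needed for an ETB certificate comes for free. Where you differ is in which characterization of ETB you invoke. The paper's proof is a one-liner that appeals to Proposition \ref{prop:witness}: it asserts that the additive decomposition $w_{i,j}=\psi^0_i+\psi^1_j$ can be built from (nonnegative combinations of) the endpoints of the segment, leaving the bookkeeping to the reader. You instead build an explicit factorization $W=T_1T_0$ through $\Delta_1$ and invoke Proposition \ref{prop:ETB}(iii); the only thing to check is that the barycentric coordinates $\lambda_{i,j}$ respect the relation \eqref{eq:squaremap}, which you correctly observe is automatic because they are obtained by applying an affine functional on the line $\ell$ to an affine identity. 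Your route is somewhat more economical (it factors through the two-point simplex $\Delta_1$ rather than implicitly through $\Delta_3$), it handles the constant case cleanly via $\Delta_0$, and it makes explicit the one genuinely delicate point that the paper's ``it is easy to see'' glosses over. Both arguments are sound; yours is just the same observation packaged through a different, equivalent criterion.
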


\begin{proof} $W$ is degenerate iff $\dim(W(\square_2))\le 1$. If the dimension is 0, then $W$ is constant, hence clearly ETB.
Assume that the dimension is 1, then  all vertices $w_{i,j}$ of $W$ lie on a segment. It is easy to see that we may find a decomposition as in 
Proposition \ref{prop:witness} using multiples of the endpoints of the segment, so that $W$ is ETB.

\end{proof}

\begin{coro}\label{coro:extr_nondeg}
Assume that $W$ is non-degenerate and each vertex is extremal in $V(K)^+$. Then $W$ is non-ETB and extremal in 
$\aff(\square_2,V(K)^+)$.
\end{coro}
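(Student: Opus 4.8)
The plan is to derive both conclusions from the two preceding results: Proposition \ref{prop:extr} for extremality, and Lemma \ref{lemma:degenerate} together with the structure of ETB maps (Proposition \ref{prop:witness} in Appendix \ref{app:cones}) for the non-ETB claim. The key observation is that when a vertex $w_{i,j}$ is extremal in $V(K)^+$, the face $F_{i,j}$ it generates is the ray $\mathbb{R}^+ w_{i,j}$, so the generated subspace $L_{i,j} = \mathbb{R} w_{i,j}$ is one-dimensional. This reduces both conditions of Proposition \ref{prop:extr} to statements about linear dependence among the four vertices.

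First I would verify extremality. Since each $L_{i,j} = \mathbb{R} w_{i,j}$, condition \eqref{eq:extr_1} says $w_{0,0}$ and $w_{1,1}$ are linearly independent and likewise $w_{0,1}, w_{1,0}$; and condition \eqref{eq:extr_2} says $\operatorname{span}\{w_{0,0}, w_{1,1}\} \cap \operatorname{span}\{w_{0,1}, w_{1,0}\} = \mathbb{R}\bar w$. Non-degeneracy means $\dim W(\square_2) = 2$, i.e. the affine span of the four vertices is a plane; equivalently, since $\bar w = \tfrac12(w_{0,0}+w_{1,1}) = \tfrac12(w_{0,1}+w_{1,0}) \ne 0$ by assumption, the four vertices span a $2$-dimensional subspace $\Pi$ of $V(K)$ that contains $\bar w$. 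I would argue: if $w_{0,0}, w_{1,1}$ were dependent, then together with \eqref{eq:squaremap} all four vertices would lie on a line through the origin, contradicting $\dim(\Pi)=2$; same for $w_{0,1}, w_{1,0}$. This gives \eqref{eq:extr_1}. For \eqref{eq:extr_2}, both spans are $2$-dimensional subspaces of the $2$-dimensional space $\Pi$ (each contains $\bar w \ne 0$ and its two generators are independent), hence each equals $\Pi$; but if the two spans coincided entirely as $\Pi$, I need the intersection to be exactly $\mathbb{R}\bar w$, not all of $\Pi$ — so here I must be more careful: I would instead use non-degeneracy to show $\{w_{0,0}, w_{1,1}, w_{0,1}\}$ (say) is a basis of $\Pi$, whence $\operatorname{span}\{w_{0,0},w_{1,1}\} \cap \operatorname{span}\{w_{0,1},w_{1,0}\}$ is spanned by any common element, and $\bar w$ is one such; a dimension count ($2 + 2 - 3 = 1$ inside $\Pi$) forces the intersection to be exactly $\mathbb{R}\bar w$. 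This is the step I expect to require the most care.

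For the non-ETB claim I would argue by contradiction using Proposition \ref{prop:witness} of Appendix \ref{app:cones}: if $W$ were ETB, its vertices would admit a decomposition $w_{i,j} = \sum_{a} \psi^0_a + \psi^1_a$-type sum (as referenced in Fig. \ref{fig:qubitex}) through positive elements lying in a simplex factorization. Concretely, ETB for a square map means each $w_{i,j}$ is a sum of the form $w_{i,j} = \psi^0_i + \psi^1_j$ with all $\psi^0_i, \psi^1_j \in V(K)^+$; but then $w_{0,0}$ would dominate $\psi^0_0 \in V(K)^+$, and extremality of $w_{0,0}$ forces $\psi^0_0 = t\, w_{0,0}$, and similarly each partial sum is a multiple of a vertex, which quickly collapses the four vertices onto a segment (a degenerate configuration), contradicting non-degeneracy via Lemma \ref{lemma:degenerate} (or directly). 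Alternatively, and more cleanly, once extremality is established: an extremal element of $\aff(\square_2, V(K)^+)$ is either ETB-decomposable only trivially, and Lemma \ref{lemma:degenerate} shows the ETB extremal ones are exactly the degenerate ones, so a non-degenerate extremal map cannot be ETB. I would phrase the final write-up using whichever of these two routes yields the shortest argument, most likely the second: extremality (just proved) plus the fact that every ETB map in $\aff(\square_2, V(K)^+)$ that is extremal must be degenerate by Lemma \ref{lemma:degenerate}'s converse reasoning, contradicting the hypothesis.
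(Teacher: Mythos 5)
Your overall strategy is the same as the paper's: reduce the non-ETB claim to extremality (an ETB map that is extremal in $\aff(\square_2,V(K)^+)$ lies on an extreme ray of the subcone of ETB maps, hence has the form $\mea^i_j(\cdot)\phi$ by Proposition \ref{prop:witness} and is degenerate), and then verify the two conditions of Proposition \ref{prop:extr} using $\dim(L_{i,j})=1$ together with non-degeneracy. The paper compresses the second step into ``if \eqref{eq:extr_1} or \eqref{eq:extr_2} fails then $\dim(W(\square_2))=1$,'' and it is precisely at this point that your execution goes wrong.

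The concrete error is the assertion that the four vertices span a $2$-dimensional subspace $\Pi$. Non-degeneracy says the \emph{affine} hull of the $w_{i,j}$ is a $2$-plane; the \emph{linear} span is $2$-dimensional only if that plane passes through the origin, and under your hypotheses it cannot (e.g.\ for the qubit witness $W$ of Fig.~\ref{fig:qubitex} the span of the four rank-one vertices is $\mathrm{span}\{I,\sigma_x,\sigma_z\}$, which is $3$-dimensional). Your subsequent argument is then internally inconsistent: a three-element set cannot be a basis of a $2$-dimensional $\Pi$, and the count $2+2-\dim\Pi$ gives $1$ only when $\dim\Pi=3$. The fix is to prove $\dim\Pi=3$ (equivalently, that the only linear relation among the vertices is the multiple of \eqref{eq:squaremap}); with that in hand your dimension count for \eqref{eq:extr_2} is exactly right. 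A related gap occurs in your treatment of \eqref{eq:extr_1}: if $w_{1,1}=cw_{0,0}$, the relation \eqref{eq:squaremap} alone does \emph{not} force $w_{0,1}$ and $w_{1,0}$ onto the line $\mathbb R w_{0,0}$ --- it only constrains their sum. You need the order structure: $2\bar w=(1+c)w_{0,0}$ is then extremal, so the face of $V(K)^+$ it generates is the ray $\mathbb R^+w_{0,0}$, and $w_{0,1},w_{1,0}\le 2\bar w$ places them on that ray, whence $W$ is degenerate. (The same face argument disposes of the case $\dim\Pi\le 2$, closing the first gap as well.) Your first route to the non-ETB claim, via the decomposition $w_{i,j}=\psi^0_i+\psi^1_j$ and extremality of the vertices, is sound and is a workable alternative to the paper's one-line argument.
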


\begin{proof} Any ETB map which is extremal in $\aff(\square_2,V(K)^+)$ has the form $\mea^i_j(\cdot)\phi$ for 
 some $i,j\in\{0,1\}$ and $\phi$ an extremal element in $V(K)^+$. Such a map  is clearly degenerate. It is therefore enough to show that $W$ is extremal.
Since $w_{i,j}$ are extremal in $V(K)^+$, $\dim(L_{i,j})=1$ for all $i,j$. If \eqref{eq:extr_1} or \eqref{eq:extr_2} is not satisfied, then it is easy to see that 
$\dim(W(\square_2)=1$. Since $W$ is non-degenerate, this is impossible.
 
\end{proof}

\begin{coro}\label{coro:2dim_witness} Let $\dim(K)=2$ and assume that  $W\in \mathcal A(\square_2,V(K)^+)$ is non-ETB.
Then $W$ is extremal if and only if all its vertices are extremal in $V(K)^+$. 

\end{coro}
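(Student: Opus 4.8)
The plan is to combine Corollary \ref{coro:extr_nondeg} (which gives one direction immediately) with a dimension-counting argument for the converse. The "if" direction is free: if all vertices of $W$ are extremal in $V(K)^+$ and $W$ is non-ETB, then in particular $W$ is non-degenerate (a degenerate $W$ would be ETB by Lemma \ref{lemma:degenerate}), so Corollary \ref{coro:extr_nondeg} applies and $W$ is extremal. So the content is the "only if" direction: assuming $\dim(K)=2$, $W$ non-ETB and extremal, I want to conclude each vertex $w_{i,j}$ is extremal in $V(K)^+$.

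First I would set up the geometry. Since $\dim(K)=2$, we have $\dim(V(K))=3$ and $V(K)^+$ is a $3$-dimensional cone with base $K$ a $2$-dimensional convex body. An element $\psi\in V(K)^+$, $\psi\neq 0$, is either extremal (lies on an extreme ray, so the face $F_{i,j}$ it generates is a ray and $\dim(L_{i,j})=1$) or it is an interior element of a $2$-dimensional face or of the whole cone, in which case $\dim(L_{i,j})\in\{2,3\}$. The barycenter $\bar w$ is nonzero: indeed $\<1_K,\bar w\>=1$ because $W$ maps into $V(K)^+$ with the normalization inherited from $\Simp$ — wait, more carefully, $W\in\mathcal A(\square_2,V(K)^+)$ need not be a channel, but for it to be non-ETB it cannot be the zero-ish map; I would argue $\bar w\neq 0$ since otherwise $W$ would fail to be extremal in a cone (or handle it directly: if $\bar w=0$ then by \eqref{eq:squaremap} $w_{0,0}=-w_{1,1}$ forcing both to be $0$ by the salience of $V(K)^+$, hence $W$ constant zero, not non-ETB).

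Next comes the core step. Suppose for contradiction that some vertex, say $w_{0,0}$, is not extremal, so $\dim(L_{0,0})\geq 2$. Using extremality of $W$ via Proposition \ref{prop:extr}, conditions \eqref{eq:extr_1} and \eqref{eq:extr_2} hold. From \eqref{eq:extr_2}, $(L_{0,0}\oplus L_{1,1})\cap(L_{0,1}\oplus L_{1,0})=\mathbb R\bar w$, and both sides live in the $3$-dimensional space $V(K)$. A dimension count: $\dim(L_{0,0}+L_{1,1})+\dim(L_{0,1}+L_{1,0})-\dim\big((L_{0,0}+L_{1,1})\cap(L_{0,1}+L_{1,0})\big)\le 3$. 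Since the intersection is $1$-dimensional and $\dim(L_{0,0}+L_{1,1})\geq\dim(L_{0,0})\geq 2$, this forces $\dim(L_{0,1}+L_{1,0})\leq 2$, and moreover (using $w_{0,1},w_{1,0}\ni$ and that by \eqref{eq:squaremap} $\bar w\in \tfrac12(F_{0,1}+F_{1,0})$) one gets $L_{0,1}\oplus L_{1,0}$ has dimension exactly $2$ with $\dim L_{0,1}=\dim L_{1,0}=1$; and likewise $\dim(L_{0,0}+L_{1,1})\le 2$, forcing $\dim L_{0,0}=\dim L_{1,1}=1$ too — contradicting $\dim L_{0,0}\ge2$. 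Hence every vertex is extremal. The main obstacle I anticipate is being careful that the various sums $L_{i,j}+L_{i',j'}$ actually contain $\bar w$ in a way that makes the $\le 3$ inequality bite; this needs the relation \eqref{eq:squaremap} together with the nonvanishing of $\bar w$, and one must treat the possibility $\dim L_{0,0}=3$ (i.e. $w_{0,0}$ in the interior of the cone) uniformly — but that case is even easier since then $L_{0,0}+L_{1,1}=V(K)$ and \eqref{eq:extr_2} becomes $L_{0,1}\oplus L_{1,0}=\mathbb R\bar w$, which is $\le 2$-dimensional and again collapses everything.
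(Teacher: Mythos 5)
Your overall strategy is the same as the paper's: the ``if'' direction via Corollary \ref{coro:extr_nondeg} (using Lemma \ref{lemma:degenerate} to get non-degeneracy from non-ETB), and the ``only if'' direction by dimension counting with the conditions \eqref{eq:extr_1}, \eqref{eq:extr_2} of Proposition \ref{prop:extr} inside the $3$-dimensional space $V(K)$. The counting itself is sound \emph{provided} all four vertices are nonzero, since then $\dim(L_{i,j})\ge 1$ for each $(i,j)$ and $\dim(L_{0,0}\oplus L_{1,1})+\dim(L_{0,1}\oplus L_{1,0})\le \dim(V(K))+1=4$ forces every $L_{i,j}$ to be one-dimensional.

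The gap is exactly that proviso. You rule out $\bar w=0$, but not the possibility that a single vertex, say $w_{0,1}$, vanishes while the others do not. A zero vertex is non-extremal by the paper's definition (extremal elements are nonzero), yet it has $\dim(L_{0,1})=0$, so your opening move ``non-extremal $\Rightarrow \dim(L_{0,0})\ge 2$'' fails for it, and the subsequent count also breaks: with $\dim(L_{0,1})=0$ you can no longer conclude $\dim(L_{0,1}\oplus L_{1,0})\ge 2$, and the chain of inequalities no longer yields a contradiction. This case is not vacuous a priori and the paper spends the first half of its proof on it: if $w_{0,1}=0$ then $w_{0,0}+w_{1,1}=w_{1,0}=2\bar w$, so $F_{0,0},F_{1,1}\subseteq F_{1,0}$; condition \eqref{eq:extr_2} then forces $L_{1,0}=\mathbb R\bar w$, hence $w_{1,0}$ is extremal and $w_{0,0},w_{1,1}$ are multiples of $\bar w$, making $W$ degenerate and therefore ETB by Lemma \ref{lemma:degenerate} --- contradicting the hypothesis. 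Inserting this preliminary step (all vertices are nonzero) makes your argument complete and essentially identical to the paper's.
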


\begin{proof} We will use the notation of Proposition \ref{prop:extr}. Assume that $W$ is extremal and that, say, $w_{0,1}=0$.
Then $w_{0,0}+w_{1,1}=w_{0,1}=2\bar w$ and $F_{0,0}, F_{1,1}\subseteq F_{0,1}$. By \eqref{eq:extr_2}, we must have
$L_{0,1}=\mathbb R\bar w$, so that $\phi$ must be extremal in $V(K)^+$. Consequently, both $w_{0,0}$ and $w_{1,1}$ are multiples of 
$\bar w$, but then $W$ is  degenerate and hence ETB.  It follows  that all vertices must be nonzero.  Then it follows from  \eqref{eq:extr_1}, \eqref{eq:extr_2} by dimension counting that  we must have $\dim(L_{i,j})=1$, so that all vertices are extremal in $V(K)^+$. The converse is Corollary \ref{coro:extr_nondeg}. 
\end{proof}

\begin{ex} (\textbf{The square})\label{ex:squarewitness} Since $\dim(\square_2)=2$, all non-ETB  extremal maps must have extremal vertices. Therefore,
$w_{0,0}$ and $w_{1,1}$ must be some multiples  of opposite  vertices, similarly $w_{0,1}$ and $w_{1,0}$ must be multiples 
of  the other pair of opposite vertices. Applying effects $\mea^i_j$ to the equality \eqref{eq:squaremap}, we see that all coefficients must be the same. It follows that $W$ is  (a multiple of) an automorphism of $\square_2$. Hence there are 8  extremal rays in $\mathcal A(\square_2,V(\square_2)^+)$ that are non-ETB. It is easy to see that elements in these rays are witnesses, since 
they have no nontrivial translations along $\square_2$.

\end{ex}

\begin{ex}(\textbf{Quantum state spaces}) \label{ex:quantumwitness} Let $W\in \mathcal A(\square_2, B(\mathcal H)^+)$ be extremal and let $\rho=\bar w$ be the barycenter. Let $P=\mathrm{supp}(\rho)$ be the support projection of $\rho$. Let  $0\le E_{i,j}\le P$ be effects such that 
$\frac12w_{i,j}=\rho^{1/2}E_{i,j}\rho^{1/2}$. We will show that all $E_{i,j}$ are projections. Indeed,
let $M$ be an effect majorized by both $E_{0,0}$ and $I-E_{0,0}$. Then $\sigma:=\rho^{1/2}M\rho^{1/2}\le w_{0,0},w_{1,1}$, so that $\sigma\in L_{0,0}\cap L_{1,1}=\{0\}$. Since $M\le P$, it follows that $M=0$, so that $E_{0,0}\wedge (I-E_{0,0})=0$ and this implies that $E_{0,0}$ is a projection. Then $E_{1,1}=P-E_{0,0}$ is a projection as well, orthogonal to $E_{0,0}$. Similarly for $E_{0,1}$ and 
$E_{1,0}$.

Let $A\in B_h(P\mathcal H)$  be such that $A$ commutes with both $E_{0,0}$ and $E_{0,1}$. Then $\rho^{1/2}AE_{i,j}\rho^{1/2}\in L_{i,j}$ and $\rho^{1/2}A\rho^{1/2}\in (L_{0,0}+L_{1,1})\cap(L_{0,1}+L_{1,0})$. By \eqref{eq:extr_2}, this implies 
$\rho^{1/2}A\rho^{1/2}=t\rho$ for some $t\in \mathbb R$, so that $A=tP$.  Hence any element commuting with both $E_{0,0}$ and $E_{1,1}$ must be a multiple of $P$. But the commutant of two projections is always nontrivial, unless one of the following two cases occurs:
\begin{enumerate}
\item[(a)] $P$ is rank one and all $E_{i,j}$ are either $P$ or 0. Then $\rho$ is rank one and $W=\mea^i_j(\cdot)\rho$ for some $i,j\in \{0,1\}$, these are precisely the  ETB  extremal maps.
\item[(b)] $P$ is rank 2 and $E_{0,0}$ and $E_{0,1}$ are rank one  non-commuting projections. In that case, $\rho$ is rank 2 and 
all $w_{i,j}$ are rank 1 operators.
\end{enumerate}
 It follows that a non-ETB  map $W \in\mathcal A(\square_2, B(\mathcal H)^+) $ is extremal if and only if all its vertices are extremal.

\end{ex}

\subsection{Incompatibility degree}

Incompatibility degree of a collection of measurements  can be defined as the least amount of noise that has to be added to obtain a compatible collection.  Following \cite{hmz2016aninvitation}, the noise will have the form of coin-toss measurements, 
see \cite{bhss2013comparing, caskr2016quantitative,wopefe2009} for related definitions.  

For $p\in \Delta_{l}$, a coin-toss measurement is defined as a constant map $f_p(x)\equiv  p$. It is immediate that $f_p$ 
  is compatible with any $g\in \aff(K,\Delta_l)$ (since the map $(g,f_p)$ factorizes through $\Delta_l\times\Delta_0\simeq \Delta_l$). Let us again fix $l_0,\dots,l_k\in \mathbb N$ and let $p^i\in \Delta_{l_i}$, $0=1,\dots,k$. Then  the channel given by the collection of coin-tosses
 $(f_{p^0},\dots,f_{p^k})$ is the constant map $F_s(x)\equiv s:=(p^1,\dots,p^k)\in \Simp$.

For  $F\in \mathcal A(K, \Simp)$ and $s\in \Simp$,  we  define the incompatibility degree as
\[
ID_{s}(F):=\min\{\lambda\in [0,1],\ (1-\lambda)F+\lambda F_{s}\ \mbox{is ETB}\}.
\]  
We  also define
\[
ID(F):=\inf_{s\in \Simp} ID_{s}(F).  
\]
We first show that $ID(F)$ is attained at an interior point of $\Simp$.

\begin{lemma}\label{lemma:IDinf} $ID(F)=\inf_{s\in ri(\Simp)} ID_{s}(F)$.
\end{lemma}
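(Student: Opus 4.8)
The statement asserts that in computing $ID(F) = \inf_{s \in \Simp} ID_s(F)$, it suffices to take the infimum over the relative interior $ri(\Simp)$. Since $ri(\Simp) \subseteq \Simp$, the inequality $ID(F) \le \inf_{s \in ri(\Simp)} ID_s(F)$ is trivial, so the real content is the reverse: for any boundary point $s_0 \in \Simp$, I need to produce interior points $s$ with $ID_s(F)$ no larger (in the limit) than $ID_{s_0}(F)$. The plan is to approach $s_0$ by interior points $s_t := (1-t)s_0 + t\bar\simp$ for small $t > 0$, where $\bar\simp$ is the barycenter, and show $ID_{s_t}(F) \to ID_{s_0}(F)$ as $t \to 0^+$ — actually I only need $\limsup_{t\to 0^+} ID_{s_t}(F) \le ID_{s_0}(F)$.

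First I would record the key mechanism: if $(1-\lambda)F + \lambda F_{s_0}$ is ETB, I want to exhibit, for each small $t$, some $\lambda_t$ close to $\lambda$ with $(1-\lambda_t)F + \lambda_t F_{s_t}$ ETB. The trick is that mixing a coin-toss $F_{s_0}$ further with another coin-toss $F_{\bar\simp}$ just gives a coin-toss $F_{s_t}$, and mixing an ETB map with a coin-toss (which is ETB, being constant, by Proposition~\ref{prop:ETB}) keeps it ETB. Concretely, write $G := (1-\lambda)F + \lambda F_{s_0}$, assumed ETB. For $\mu \in [0,1]$ consider $(1-\mu)G + \mu F_{\bar\simp}$; this is ETB (convex combination of ETB maps, and the ETB cone is convex by the discussion after Proposition~\ref{prop:ETB}). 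Expanding, $(1-\mu)G + \mu F_{\bar\simp} = (1-\mu)(1-\lambda)F + [(1-\mu)\lambda + \mu]\,\tilde F$ where $\tilde F$ is the constant map onto $\frac{(1-\mu)\lambda\, s_0 + \mu \bar\simp}{(1-\mu)\lambda + \mu}$. So this equals $(1-\lambda_\mu)F + \lambda_\mu F_{s_\mu}$ with $\lambda_\mu = (1-\mu)\lambda + \mu$ and $s_\mu$ a point that, for any $\mu \in (0,1]$, lies in $ri(\Simp)$ because it has a strictly positive $\bar\simp$-component. Hence $ID_{s_\mu}(F) \le \lambda_\mu = \lambda + \mu(1-\lambda)$.

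Now I let $\lambda \downarrow ID_{s_0}(F)$ (using that the min defining $ID_{s_0}$ is attained, or just taking $\lambda = ID_{s_0}(F)$ directly since the ETB cone is closed) and $\mu \downarrow 0$: for every $\varepsilon > 0$ I get an interior point $s_\mu$ with $ID_{s_\mu}(F) \le ID_{s_0}(F) + \varepsilon$. Taking the infimum over $s_0 \in \Simp$ on the left and noting $s_\mu \in ri(\Simp)$, this gives $\inf_{s \in ri(\Simp)} ID_s(F) \le ID_{s_0}(F) + \varepsilon$ for all $s_0$ and $\varepsilon$, hence $\inf_{s \in ri(\Simp)} ID_s(F) \le \inf_{s_0 \in \Simp} ID_{s_0}(F) = ID(F)$, completing the proof. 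I would also note in passing why $s_\mu \in ri(\Simp)$: a point of $\Simp = \Delta_{l_0}\times\cdots\times\Delta_{l_k}$ is in the relative interior iff each component has all coordinates strictly positive, and $\bar\simp = (u_0,\dots,u_k)$ has exactly that property, so any strictly positive mixture $(1-\mu)s_0 + \mu\bar\simp$ with $\mu>0$ does too.

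**Main obstacle.** There is no deep obstacle here — this is a soft lemma. The only points requiring a little care are: (i) justifying that the ETB cone $\aff_{sep}(K,V(\Simp)^+)$ is closed under convex combinations and is closed (so that taking $\lambda = ID_{s_0}(F)$ exactly, rather than a sequence $\lambda_n \downarrow ID_{s_0}(F)$, is legitimate; closedness is stated in the excerpt right before Proposition~\ref{prop:ETB}); and (ii) the bookkeeping identity rewriting $(1-\mu)[(1-\lambda)F + \lambda F_{s_0}] + \mu F_{\bar\simp}$ in the form $(1-\lambda_\mu)F + \lambda_\mu F_{s_\mu}$, which is routine once one checks that a convex combination of constant maps onto $s_0$ and $\bar\simp$ is the constant map onto the corresponding convex combination of points — immediate from the pointwise convex structure on $\Simp$. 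Everything else is just chasing the definitions of $ID_s$, $ID$, and $ri$.
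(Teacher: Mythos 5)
Your proof is correct and follows essentially the same route as the paper's: both approach a boundary point $s_0$ along a segment toward a relative-interior point (the paper uses an arbitrary $s_1\in ri(\Simp)$, you use $\bar\simp$), and both transfer the ETB property of $(1-\lambda)F+\lambda F_{s_0}$ to a nearby mixture with an interior coin-toss, yielding a noise parameter that converges to $ID_{s_0}(F)$. The only difference is bookkeeping — the paper decomposes the target expression $(1-\lambda)F+\lambda F_{s_t}$, while you build it up by mixing the known ETB map with $F_{\bar\simp}$ — which amounts to the same computation read in the opposite direction.
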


\begin{proof} Let $s_0\in \partial \Simp$ and let $s_1\in ri(\Simp)$, then $s_t:= ts_1+(1-t)s_0\in ri(\Simp)$ for all $t\in (0,1]$. Put 
$\mu:=\frac{\lambda(1-t)}{1-\lambda t}$. We have $F_{s_t}=tF_{s_1}+(1-t)F_{s_0}$ and
\[
(1-\lambda) F+\lambda F_{s_t}= (1-\lambda t)\left((1-\mu)F+\mu F_{s_0}\right)+
\lambda tF_{s_1}.
\]
Assume that $ID_{s_0}(F)= \mu$, then $(1-\mu) F+\mu F_{s_0}$ is ETB, so that  $(1-\lambda) F+\lambda F_{s_t}$ is ETB as well. It follows that 
\[
\inf_{s\in ri(\Simp)} ID_s(F)\le ID_{s_t}(F)\le \lambda=\frac{ID_{s_0}(F)}{1-t(1-ID_{s_0}(F))}.
\]
Letting $t\to 0$ implies the result.

\end{proof}

 The next result shows that the incompatibility degree can be obtained  using incompatibility witnesses. Note that the minimum $q_s(F)$ below is attained at an extremal element in $\aff(\Simp, V(K)^+)$.

For pairs of two-outcome measurements, the following  expression for incompatibility degree 
 is related to the dual linear program of \cite{plavala2016allmeasurements, jepl2017conditions}.
In the quantum case, similar results using SDP were obtained in \cite{wopefe2009}.

\begin{prop}\label{prop:max_idegree} Let $s\in ri(\Simp)$.  Let us denote
\[
\mathcal W_{s}:=\{ W\in \mathcal A(\Simp, V(K)^+), \ W(s)\in K\}
\]
and for $F\in \mathcal A(K,\Simp)$, 
\[
q_{s}(F):= \min_{W\in \mathcal W_{s}} \Tr FW.
\]
Then  
\[
ID_{s}(F)=\left\{ \begin{array}{cc} 0 & \mbox{if } q_{s}(F)>0\\
                                    \frac{-q_{s}(F)}{1-q_{s}(F)} & \mbox{otherwise}.
\end{array}\right.
\]
\end{prop}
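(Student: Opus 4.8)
The plan is to establish the formula for $ID_s(F)$ by relating the ETB-ness of the noisy channel $(1-\lambda)F + \lambda F_s$ directly to the quantity $q_s(F)$, using the witness characterization of non-ETB maps from Propositions \ref{prop:aff_dual} and \ref{prop:ETB}. First I would fix $s \in ri(\Simp)$ and observe the trivial direction: since $F_s$ is a constant map, it is ETB (it factorizes through $\Delta_0$), and moreover for any $W \in \mathcal A(\Simp, V(K)^+)$ the number $\Tr F_s W = \langle 1_K, W(s)\rangle \cdot (\text{something})$ — more precisely, writing things out, $\Tr F_s W$ equals $\langle 1_K, W(s)\rangle$ when $W(s)$ is normalized appropriately; the constraint $W(s) \in K$ in the definition of $\mathcal W_s$ forces $\langle 1_K, W(s)\rangle = 1$, so $\Tr F_s W = 1$ for all $W \in \mathcal W_s$. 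This is the key normalization that makes the affine combination work out cleanly.

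Next I would compute, for $G_\lambda := (1-\lambda)F + \lambda F_s$, the pairing $\Tr G_\lambda W = (1-\lambda)\Tr FW + \lambda$ for $W \in \mathcal W_s$. By Proposition \ref{prop:aff_dual}, $G_\lambda$ is ETB if and only if $\Tr G_\lambda W \ge 0$ for all $W$ in the dual cone $\mathcal A(\Simp, V(K)^+)$; but since ETB-ness and the pairing are both homogeneous, and every $W \in \mathcal A(\Simp, V(K)^+)$ with $\Tr G_\lambda W < 0$ can be rescaled — here I need the observation that for a witness $W$ one may assume (after adding a translation along $K$, which doesn't change $\Tr FW$ by Theorem \ref{thm:witness}'s proof technique, or more simply by rescaling and shifting) that $W(s) \in K$, i.e. $W \in \mathcal W_s$. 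This reduction to $\mathcal W_s$ is where I'd need to be careful: one uses that $s \in ri(\Simp)$ so that $W(s) \in ri(W(\Simp)) \subseteq V(K)^+$, hence $\langle 1_K, W(s)\rangle > 0$ unless $W = 0$, and then rescale $W$ by $1/\langle 1_K, W(s)\rangle$ to land in $\mathcal W_s$. Thus $G_\lambda$ is ETB $\iff$ $(1-\lambda)\Tr FW + \lambda \ge 0$ for all $W \in \mathcal W_s$ $\iff$ $(1-\lambda)q_s(F) + \lambda \ge 0$.

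Finally I would solve the inequality $(1-\lambda)q_s(F) + \lambda \ge 0$ for the least $\lambda \in [0,1]$. If $q_s(F) \ge 0$ it holds for $\lambda = 0$, giving $ID_s(F) = 0$ (covering in particular $q_s(F) > 0$; the boundary case $q_s(F) = 0$ also gives $0$, consistent with the stated formula since $-0/(1-0) = 0$). If $q_s(F) < 0$, rewriting gives $\lambda(1 - q_s(F)) \ge -q_s(F)$, i.e. $\lambda \ge \frac{-q_s(F)}{1 - q_s(F)}$, and since $q_s(F) < 0$ this lower bound lies in $[0,1)$, so $ID_s(F) = \frac{-q_s(F)}{1-q_s(F)}$. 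It remains to note, for the parenthetical claim preceding the proposition, that $q_s(F)$ is attained: $\mathcal W_s$ is compact (it is a closed bounded slice of the cone $\mathcal A(\Simp, V(K)^+)$ cut by the affine conditions $W(s) \in K$) and $W \mapsto \Tr FW$ is continuous, and the minimum over a polytope-like compact convex set is attained at an extreme point, which one checks corresponds to an extremal element of the cone. The main obstacle I anticipate is the rescaling/normalization step — precisely justifying that the dual-cone condition from Proposition \ref{prop:aff_dual} can be restricted to the normalized slice $\mathcal W_s$ without losing any violating witnesses — which hinges on $s$ being in the relative interior so that $\langle 1_K, W(s)\rangle$ is strictly positive for nonzero $W$.
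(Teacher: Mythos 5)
Your proof is correct and follows essentially the same route as the paper: what you do by hand — rescaling any violating witness by $\<1_K,W(s)\>^{-1}$ (strictly positive for $W\neq 0$ because $s\in ri(\Simp)$) to land in the normalized slice $\mathcal W_s$ of the dual cone from Proposition \ref{prop:aff_dual}, and then solving the linear inequality in $\lambda$ — is exactly what the paper packages into the order-unit duality Lemma \ref{lemma:ou}, applied with $u=F_s$ as an interior element of the ETB cone. No gaps.
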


For the proof, we will need the following lemma.

\begin{lemma}\label{lemma:ou} Let $(V,V^+)$ be an ordered vector space and $u\in V$ an order unit. 
Let $\states(V,V^+,u)=\{\sigma\in (V^+)^*,\ \<\sigma,u\>=1\}$. 
Then for $v\in V$, 
\[
\inf\{t, v+tu\in V^+\}=\max_{\sigma\in \states(V,V^+,u)}-\<\sigma,v\>.
\]
\end{lemma}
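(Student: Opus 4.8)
The statement is a standard duality fact for ordered vector spaces with an order unit, and the plan is to prove the two inequalities separately. Write $t_0 := \inf\{t,\ v+tu\in V^+\}$ and $s_0 := \sup_{\sigma}-\<\sigma,v\>$, the supremum being over $\states(V,V^+,u)$. First I would check that $t_0$ is finite and the infimum is attained: since $u$ is an order unit, $v+tu\in V^+$ for all sufficiently large $t$, so the set is nonempty and bounded below (if it were unbounded below, then $v+tu\in V^+$ for arbitrarily negative $t$, forcing $u=\lim_{t\to-\infty}(v/(-t)+u)\in -V^+$, contradicting $V^+\cap -V^+=\{0\}$ together with $u\neq 0$); closedness of $V^+$ gives $v+t_0u\in V^+$.

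For the easy direction $t_0\ge s_0$: take any $\sigma\in\states(V,V^+,u)$. Since $v+t_0u\in V^+$ and $\sigma\in(V^+)^*$, we get $\<\sigma,v\>+t_0\<\sigma,u\>\ge 0$, i.e. $\<\sigma,v\>+t_0\ge 0$, hence $-\<\sigma,v\>\le t_0$; taking the supremum over $\sigma$ gives $s_0\le t_0$. For the reverse inequality $t_0\le s_0$, I would use a Hahn–Banach separation argument. By definition of $t_0$, the element $v+(t_0-\varepsilon)u$ does not lie in $V^+$ for any $\varepsilon>0$; fix such an $\varepsilon$ and separate the point $w:=v+(t_0-\varepsilon)u$ from the closed convex cone $V^+$ by a linear functional $\varphi\in V^*$ with $\<\varphi,w\><0\le\<\varphi,x\>$ for all $x\in V^+$. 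The latter says $\varphi\in(V^+)^*$. Moreover $\<\varphi,u\>>0$: indeed $\<\varphi,u\>\ge0$ since $u\in V^+$, and if $\<\varphi,u\>=0$ then, $u$ being an order unit, every $x\in V$ satisfies $-su\le x\le su$ for some $s>0$, giving $0\le\<\varphi, su-x\>$ and $0\le\<\varphi,su+x\>$, hence $\<\varphi,x\>=0$ for all $x$, contradicting $\varphi\neq0$. Normalizing $\sigma:=\varphi/\<\varphi,u\>\in\states(V,V^+,u)$ and using $\<\sigma,w\><0$ yields $\<\sigma,v\>+(t_0-\varepsilon)<0$, i.e. $-\<\sigma,v\>>t_0-\varepsilon$, so $s_0>t_0-\varepsilon$. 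Letting $\varepsilon\to0$ gives $s_0\ge t_0$, and combined with the first inequality, $s_0=t_0$. Finally, since $\states(V,V^+,u)$ is compact (it is the base of the cone $(V^+)^*$ determined by the interior point $u$, hence bounded, and it is closed) and $\sigma\mapsto-\<\sigma,v\>$ is continuous, the supremum is a maximum.

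The only mild subtlety — the step I would watch most carefully — is verifying $\<\varphi,u\>>0$ rather than merely $\ge 0$ for the separating functional, since this is exactly where the order-unit property of $u$ does its work; everything else is routine separation and duality. I would also make sure the separation is applied in the correct form: $w\notin V^+$ with $V^+$ closed convex nonempty allows strict separation of the point from the cone, and homogeneity of the cone forces the bounding constant to be $0$, giving $\varphi\in(V^+)^*$.
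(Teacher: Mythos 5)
Your proof is correct, and both directions' conclusions match the paper's, but your argument for the nontrivial inequality $t_0\le s_0$ takes a genuinely different route. The paper observes that $\<\sigma,v+s_0u\>=\<\sigma,v\>+s_0\ge0$ for every $\sigma\in\states(V,V^+,u)$, hence (by homogeneity) for every element of $(V^+)^*$, and then concludes $v+s_0u\in V^+$ in one line from the biduality $(V^+)^{**}=V^+$ recorded earlier in the paper for closed cones. You instead unpack that biduality: for each $\varepsilon>0$ you separate $v+(t_0-\varepsilon)u\notin V^+$ from the closed cone by Hahn--Banach, check the separating functional is positive and nonzero on the order unit, normalize, and let $\varepsilon\to0$. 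Morally these are the same fact (strict separation is exactly how one proves the bipolar identity for closed convex cones), but your version is self-contained and more elementary, at the cost of length; it also explicitly settles finiteness of $t_0$, attainment of the infimum, and compactness of the state space so that the supremum is a maximum --- points the paper passes over silently even though the statement asserts a $\max$. One small slip worth fixing in your boundedness check: if $v+tu\in V^+$ for $t\to-\infty$, then dividing by $-t>0$ gives $v/(-t)-u\in V^+$, whose limit is $-u\in V^+$ (not $u\in-V^+$ via the expression $v/(-t)+u$ as you wrote); either way one gets $u\in V^+\cap(-V^+)=\{0\}$, contradicting that the order unit is nonzero, so the conclusion stands.
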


\begin{proof} Let $t_0$ denote the infimum on the LHS and $s_0$ the supremum on the RHS.
Let $t\in \mathbb R$ be such that $v+tu\in V^+$. Then for any $\sigma\in \states(V,V^+,u)$, we have 
\[
0\le \<\sigma,v+tu\>=\<\sigma,v\> +t
\] 
so that $-\<\sigma,w\>\le t$. It follows that $s_0\le t_0$. Conversely, note that 
\[
\<\sigma, v+s_0u\>=\<\sigma,v\>+s_0\ge 0
\] 
for all $\sigma\in \states(V,V^+,u)$, hence for all elements in $(V^+)^*$, this implies $t_0\le s_0$. 

\end{proof}

\begin{proof}[Proof of Proposition \ref{prop:max_idegree}]  Note that $F_{s}= 1_K(\cdot)s$ is an interior element in the cone 
$\mathcal A_{sep}(K,V(\Simp)^+)$, hence   an order unit and we have  $\Tr F_sW=\<1_K,W(s)\>$.  
Clearly, if $q_s(F)>0$ then $F$ is compatible. Otherwise, by the above Lemma, $-q_s(F)$ is the smallest $t\ge 0$ such that
$F+tF_s$ is ETB, so that $ID_s(F)=\frac{-q_s(F)}{1-q_s(F)}$.

\end{proof}

We next use  the pair of dual bases in \eqref{eq:basis} and \eqref{eq:dualbasis} to find a suitable expression for $\Tr FW$.
 To shorten the notations, put $w^i_j:=w_{l_0,\dots,l_{i-1},j,l_{i+1},\dots,l_k}$, 
 note that $w^i_{l_i}=w_{l_0,\dots,l_k}$ for all $i$. Then  
 \[
  W(\edg^i_j)= w^i_j-w^i_{l_i},\qquad j=0,\dots,l_i, \ i=0\dots,k.
 \]
 We have
\begin{align}
\Tr FW&= \<1_\Simp, FW(\simp_{l_0,\dots,l_k})\>+\sum_{i=0}^k\sum_{j=0}^{l_i-1}\<\mea^i_j, FW(\edg^i_j)\>\\
&=\<1_K,w_{l_1,\dots,l_k}\>+\sum_{i=0}^k\sum_{j=0}^{l_i-1}\<f^i_j,w^i_j-w^i_{l_i}\>\notag \\ 
&= \sum_{i=0}^k \sum_{j=0}^{l_i}\<f^i_j,w^i_j\>-k\<1_K,w_{l_1,\dots,l_k}\>\label{eq:tracefw}
\end{align}

Using this, we  obtain another characterization of witnesses for two-outcome measurements. Recall that $\edg^0_0$,\dots,$\edg^k_0$ are edges adjacent to the vertex $\simp_{1,\dots,1}$. 

\begin{coro}\label{coro:2witnesses} Let  $W\in \mathcal A(\square_{k+1}, V(K)^+)$. Then $W$ is a witness if and only 
if
\[
\sum_{i=0}^k\|W(\edg^i_0)\|_K> 2\<1_K,\bar w\>.
\]
\end{coro}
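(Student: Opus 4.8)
The plan is to apply Theorem \ref{thm:witness}: $W$ is a witness if and only if no translation of $W$ along $K$ is ETB. For the hypercube $\square_{k+1}$ each simplex factor is $\Delta_1$, so $l_i=1$ for all $i$; in the notation preceding Corollary \ref{coro:2witnesses} we have $w^i_1=w_{1,\dots,1}=:w_*$ and $w^i_0=w_*+W(\edg^i_0)$, and every vertex of $W$ is obtained from $w_*$ by adding a subset of the edge vectors $W(\edg^0_0),\dots,W(\edg^k_0)$. First I would translate $W$ along $K$ so as to move its barycenter to the origin of $\operatorname{aff}(K)$-directions while keeping track of the base-norm bookkeeping: more precisely, among all translations $\tilde W = W + L_v$ with $\<1_K,v\>=0$, I want to determine when some $\tilde W$ stays positive (all vertices in $V(K)^+$) and is ETB. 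The key structural observation is that a translation along $K$ changes every vertex $w_{n_0,\dots,n_k}$ by the same $v$, hence preserves the edge vectors $W(\edg^i_0)=w^i_0-w^i_1$ and shifts $\bar w$ to $\bar w+v$, with $\<1_K,\bar w+v\>=\<1_K,\bar w\>$ unchanged.

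The main computation is to characterize when a map into $V(K)^+$ whose vertices all arise from a single base point $w_*$ by adding subsets of fixed vectors $e_i:=W(\edg^i_0)$ is ETB. By Proposition \ref{prop:witness} in Appendix \ref{app:cones} (the decomposition of ETB maps on $\square_{k+1}$, as already used in the proof of Lemma \ref{lemma:degenerate}), and by the degeneracy criterion, such a map is ETB precisely when it is degenerate in the relevant sense — and for the hypercube this should reduce to: the convex body $W(\square_{k+1})$, which is the zonotope $w_* + \sum_i [0,e_i]$, is "flat enough" to admit such a decomposition. Combining this with Lemma \ref{lemma:ou} applied to the order unit $F_s$ (as in Proposition \ref{prop:max_idegree}), the condition "no translation of $W$ along $K$ is ETB" should be dualized into an inequality. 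Concretely, I would use formula \eqref{eq:tracefw}: for $\square_{k+1}$ it reads $\Tr FW = \sum_{i=0}^k(\<f^i_0,w^i_0\>+\<f^i_1,w^i_1\>)-k\<1_K,w_*\>$, and since $f^i_0+f^i_1=1_K$ one rewrites this as $\<1_K,w_*\>+\sum_i\<f^i_0,e_i\>$ after simplification. The pair $(f^0,\dots,f^k)$ ranges over all two-outcome measurements, i.e. each $f^i_0$ ranges independently over $E(K)$; so $\inf_F \Tr FW$ decouples into $k+1$ independent minimizations $\inf_{0\le f^i_0\le 1_K}\<f^i_0,e_i\>$, each equal to $-\|e_i^-\|$-type quantity. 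Using that each $e_i=W(\edg^i_0)$ satisfies $\<1_K,e_i\>=0$ (because both endpoints of the edge map to base-norm-1 points... actually $\<1_K,w^i_0\>=\<1_K,w^i_1\>$ only after translation — this needs care), the minimum over $f^i_0\in E(K)$ of $\<f^i_0,e_i\>$ equals $-\tfrac12(\|e_i\|_K-\<1_K,e_i\>)$, and when $\<1_K,e_i\>=0$ this is $-\tfrac12\|e_i\|_K$. Then $W$ fails to be a witness iff $\inf_F\Tr FW\ge 0$ iff $\<1_K,w_*\>\ge \tfrac12\sum_i\|W(\edg^i_0)\|_K$; and since $\<1_K,w_*\>=\<1_K,\bar w\>$ this gives exactly the negation of the stated inequality.

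I expect the main obstacle to be the careful handling of the $1_K$-component: the vertices $w_{n_0,\dots,n_k}$ of a general $W\in \mathcal A(\square_{k+1},V(K)^+)$ need not have base norm $1$, so the edge vectors $e_i=W(\edg^i_0)$ need not lie in the kernel of $1_K$, and one must use the freedom of translating along $K$ together with rescaling to reduce to that case, or else carry the $\<1_K,\cdot\>$ terms through the whole computation. The cleanest route is probably to not translate at all but to directly compute $\inf\{\Tr FW : F\in\mathcal A(K,\square_{k+1})\}$ via \eqref{eq:tracefw} and the decoupling above, then invoke the fact (Proposition \ref{prop:aff_dual} together with Theorem \ref{thm:incompatibility}) that $W$ is a witness iff this infimum is strictly negative; the only genuinely delicate point is verifying that $\inf_{0\le g\le 1_K}\<g,e_i\> = -\tfrac12\bigl(\|W(\edg^i_0)\|_K - \<1_K,W(\edg^i_0)\>\bigr)$ and that the $\<1_K,W(\edg^i_0)\>$ terms cancel against the $-k\<1_K,w_*\>$ term to leave precisely $2\<1_K,\bar w\>$ on the right-hand side. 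This is a bounded, finite calculation using the definition of the base norm as the dual of $\|\cdot\|_{max}$.
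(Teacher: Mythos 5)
Your final ``cleanest route'' is exactly the paper's proof: characterize a witness by $\min_{F\in\aff(K,\square_{k+1})}\Tr FW<0$, expand $\Tr FW$ via \eqref{eq:tracefw}, decouple the minimization over the independent effects $f^i_0\in E(K)$, and apply $\max_{f\in E(K)}\<f,\psi\>=\tfrac12(\|\psi\|_K+\<1_K,\psi\>)$, with the $\<1_K,W(\edg^i_0)\>$ terms recombining with $\<1_K,w_*\>$ to give $\<1_K,\bar w\>$ (the paper just normalizes $\<1_K,\bar w\>=1$ first to streamline this bookkeeping). The opening detour through Theorem \ref{thm:witness}, translations and ETB degeneracy is unnecessary and can be dropped; the delicate point you flag is handled correctly by the direct computation you describe.
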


\begin{proof}  We may assume that $\<1_K,\bar w\>=1$, so that $W\in \mathcal W_{\bar\simp}$. Put $\mu^i_j:=\<1_K,w^i_j\>$. Then
 for any $F\in \aff(K,\square_{k+1})$, we have using \eqref{eq:tracefw}
\begin{align*}
\Tr FW&=\Tr (F-F_{\bar \simp})W+\Tr F_{\bar\simp}W\\
&=\sum_{i=0}^k \sum_{j=0}^1(\<f^i_j,w^i_j\>- \frac12\mu^i_j)+1.
\end{align*}

$W$ is a witness if and only if
\begin{align*}
0&>\min_{F\in \mathcal A(K,\square_{k+1})} \Tr FW\\
&=\sum_{i=0}^k \min_{f\in E(K)}(\<f,w^i_0\>+\<1-f,w^i_1\>- \frac12(\mu^i_0+\mu^i_1))+1\\
&=\sum_{i=0}^k(\frac12(\mu^i_1-\mu^i_0)-\max_{f\in E(K)}\<f,w^i_1-w^i_0\>)+1\\&=-\frac12\sum_{i=0}^k \|W(\edg^i_0)\|_K+1
\end{align*}
The last equality follows from  the fact that $\max_{f\in E(K)} \<f,\psi\>=\frac12(\|\psi\|_K+\<1_K,\psi\>)$ for all $\psi\in V(K)$.

\end{proof}

Observe that by the above proof, the maximal value of $ID_{\bar \simp}$ attainable by $k+1$ two-outcome measurements on a state space $K$ is obtained from the maximal value of $\sum_{i=0}^k \|W(\edg^i_0)\|_K$ over all maps $W\in \aff(\square_{k+1},V(K)^+)$ with $\bar w\in K$. For $k=1$, the  maximal value of $ID_{\bar\simp}$ for quantum state spaces was obtained in \cite{bggk2013degree}.
 We prove this result by our method  and show that  $ID$ attains the same value.

\begin{coro}\label{coro:quantummax} For a quantum state space 
$\states=\states(\mathcal H)$, we have  
\[
\max_{F\in \mathcal A(\states,\square_2)} ID(F)=1-\frac1{\sqrt{2}}.
\]
\end{coro}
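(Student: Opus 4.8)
The plan is to use the characterization from Corollary \ref{coro:2witnesses}, which identifies incompatibility witnesses on a state space $K$ by the condition $\sum_{i=0}^{k}\|W(\edg^i_0)\|_K > 2\<1_K,\bar w\>$, together with the remark following its proof: the maximal value of $ID_{\bar\simp}$ attainable by $k{+}1$ two-outcome measurements equals the maximal value of $\tfrac12\sum_{i}\|W(\edg^i_0)\|_K$ over maps $W\in\aff(\square_{k+1},V(K)^+)$ with $\bar w\in K$, via the formula $ID_{\bar\simp}(F)=\frac{-q_{\bar\simp}(F)}{1-q_{\bar\simp}(F)}$ of Proposition \ref{prop:max_idegree}. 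For $k=1$ and $\states=\states(\Ha)$ this reduces the computation of $\max_F ID_{\bar\simp}(F)$ to an optimization over pairs of Hermitian differences with bounded trace norm. First I would specialize \eqref{eq:squaremap}: a map $W\in\aff(\square_2,B(\Ha)^+)$ is given by $w_{i,j}\ge 0$ with $w_{0,0}+w_{1,1}=w_{0,1}+w_{1,0}=2\rho$, $\rho=\bar w$ a state. Writing $\psi_0:=W(\edg^0_0)=w_{0,1}-w_{1,1}$ and $\psi_1:=W(\edg^1_0)=w_{1,0}-w_{1,1}$, the constraints say exactly that $\rho\pm\tfrac12\psi_0\ge 0$ and $\rho\pm\tfrac12\psi_1\ge 0$ (the four vertices being $\rho\pm\tfrac12\psi_0\pm\tfrac12\psi_1$ with suitable signs — one checks all four combinations are nonnegative iff these hold, using that the four vertices are $w_{0,0},w_{0,1},w_{1,0},w_{1,1}$ and the affine relations). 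So I must maximize $\|\psi_0\|_1+\|\psi_1\|_1$ over Hermitian $\psi_0,\psi_1$ and a state $\rho$ subject to $-2\rho\le\psi_i\le 2\rho$.

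Next I would reduce to a qubit. Since $-2\rho\le\psi_i\le 2\rho$ forces $\operatorname{supp}(\psi_i)\subseteq\operatorname{supp}(\rho)$, and the whole problem decomposes over the support, it suffices to treat $\rho$ full-rank on $\Ha$; then a standard convexity/purification argument (or the fact that only a two-dimensional subspace is ever needed, as in \cite{bggk2013degree}) shows the optimum is achieved with $\dim\Ha=2$ and $\rho=\tfrac12 I$. With $\rho=\tfrac12 I$ the constraints become $\|\psi_i\|_\infty\le 1$, i.e. $\psi_i$ are Hermitian operators with operator norm $\le 1$; maximizing $\|\psi_0\|_1+\|\psi_1\|_1$ independently would give $4$, but the nondegeneracy coming from $W$ being a genuine map into the ball and the requirement that the four vertices be distinct points of $B(\Ha)^+$ — equivalently, that $\psi_0,\psi_1$ be linearly independent directions producing a non-ETB $W$ — is automatically satisfied for generic choices; the real content is that $ID_{\bar\simp}(F)=\frac{-q_{\bar\simp}(F)}{1-q_{\bar\simp}(F)}$ and $q_{\bar\simp}(F)=1-\tfrac12\sum_i\|W(\edg^i_0)\|_K$ after renormalizing $W(\bar\simp)=\rho\in K$. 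Plugging the optimal witness: for qubit effects one computes $\max_{f\in E(\Ha)}\<f,\psi\>=\tfrac12(\|\psi\|_1+\Tr\psi)$, and choosing $\psi_0,\psi_1$ proportional to Pauli operators $\sigma_x,\sigma_z$ scaled so that $\tfrac12 I\pm\tfrac12\psi_i\ge 0$, one gets $\|\psi_i\|_1=\sqrt2$ (not $2$) because the normalization that makes $W$ land in the Bloch ball with barycenter a state, after rescaling to $W(\bar\simp)\in K$, introduces the factor $1/\sqrt2$. This yields $\sum_i\|W(\edg^i_0)\|_K=2\sqrt2$, hence $q_{\bar\simp}(F)=1-\sqrt2$ and $ID_{\bar\simp}(F)=\frac{\sqrt2-1}{\sqrt2}=1-\tfrac1{\sqrt2}$.

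The final step is to show $ID(F)=ID_{\bar\simp}(F)$ for the optimizer, i.e. that the infimum over $s\in\Simp$ in the definition of $ID(F)$ is attained at the barycenter for this symmetric example, and that no $F$ does better for any other $s$. For the lower bound $\max_F ID(F)\ge 1-\tfrac1{\sqrt2}$ it suffices to exhibit one $F$ (the pair of complementary Pauli measurements $\sigma_x,\sigma_z$) and check $ID(F)=1-\tfrac1{\sqrt2}$ directly; symmetry of the Bloch ball under rotations and of the two measurements makes $ID_s$ minimized at $s=\bar\simp$, so $ID(F)=ID_{\bar\simp}(F)$. For the upper bound I use Proposition \ref{prop:max_idegree} and Lemma \ref{lemma:IDinf}: $ID(F)=\inf_{s\in ri(\Simp)}\frac{-q_s(F)}{1-q_s(F)}$ when negative, and for each $s$, $q_s(F)=\min_{W\in\mathcal W_s}\Tr FW$; bounding $-q_s(F)\le\tfrac12\sum_i\|W(\edg^i_0)\|_{\states}-\<1_\states,\bar w\>$ with $\bar w=W(s)\in\states$ and running the same qubit estimate shows $-q_s(F)\le\sqrt2-1$ times the appropriate normalization, uniformly in $s$, giving $ID(F)\le 1-\tfrac1{\sqrt2}$. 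The main obstacle I anticipate is the careful bookkeeping of normalizations: tracking how the constraint $W(s)\in K$ (rather than merely $\<1_K,W(s)\>=1$) together with positivity of all four vertices tightens the trace-norm bound from the naive $2$ to $\sqrt2$ per edge — this is exactly the place where the genuine quantum (Bloch-ball) geometry enters and reproduces the $1-\tfrac1{\sqrt2}$ of \cite{bggk2013degree}, and where one must be sure the bound for general $s$ is no weaker than at the barycenter.
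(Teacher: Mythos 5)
Your reduction of $ID_{\bar\simp}$ to maximizing $\tfrac12\sum_i\|W(\edg^i_0)\|_{\states}$, and your plan for showing $ID(F_0)=ID_{\bar\simp}(F_0)$ for the optimizer, are sound in outline. But the central step --- the upper bound $\sum_i\|W(\edg^i_0)\|_1\le 2\sqrt2$ for every positive $W$ with $W(\bar\simp)\in\states$ --- has a genuine gap, and it starts with an incorrect claim. Writing $\psi_0=W(\edg^0_0)$, $\psi_1=W(\edg^1_0)$, the four vertices are $\rho\pm\tfrac12(\psi_0+\psi_1)$ and $\rho\pm\tfrac12(\psi_0-\psi_1)$; their positivity is \emph{not} equivalent to the decoupled conditions $-2\rho\le\psi_i\le 2\rho$ (take $\rho=\tfrac12 I$, $\psi_0=\psi_1=\sigma_z$: each $\rho\pm\tfrac12\psi_i\ge0$, yet $\rho+\tfrac12(\psi_0+\psi_1)$ has a negative eigenvalue). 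The decoupled conditions are strictly weaker, and under them the maximum of $\|\psi_0\|_1+\|\psi_1\|_1$ is $4$, not $2\sqrt2$ --- as you yourself observe --- which would only give the trivial GPT bound $ID\le\tfrac12$. The passage where you recover the factor $1/\sqrt2$ ("the normalization \dots introduces the factor $1/\sqrt2$") only verifies that the Pauli example \emph{achieves} $2\sqrt2$; it does not show that nothing exceeds it. The entire quantum content of the statement lives in the coupling between $\psi_0$ and $\psi_1$ imposed by joint positivity of the four vertices, and your proposal never supplies it. (For comparison: in the traceless qubit case one can get it from the parallelogram law applied to $\alpha=\tfrac12(\psi_0+\psi_1)$, $\beta=\tfrac12(\psi_0-\psi_1)$ with $\|\alpha\|_\infty,\|\beta\|_\infty\le\tfrac12$; the paper instead restricts to witnesses extremal in $\aff(\square_2,B(\Ha)^+)$, whose structure is pinned down in Example \ref{ex:quantumwitness} --- rank-one vertices $\tfrac12 w_{i,j}=\rho^{1/2}|x_{i,j}\>\<x_{i,j}|\rho^{1/2}$ built from two orthonormal bases of a two-dimensional support --- and then H\"older's inequality together with $\<x_{0,1},x_{1,0}\>=0$ gives $c^2+d^2=1$, hence $c+d\le\sqrt2$. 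Your "standard convexity/purification argument" reducing to a qubit with $\rho=\tfrac12 I$ is also only asserted; the extremality route makes it unnecessary.)

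Two smaller points. For the upper bound on $\max_F ID(F)$ you do not need a bound on $q_s(F)$ uniform in $s$: since $ID(F)=\inf_s ID_s(F)\le ID_{\bar\simp}(F)$ by definition, the barycenter estimate alone suffices. For the lower bound, your symmetry argument that $ID_s(F_0)$ is minimized at $\bar\simp$ can probably be completed (using invariance of $ID_s$ under the symmetries of the configuration plus quasi-convexity of $s\mapsto ID_s$), but the paper's route is cleaner and already available to you: the optimal witness $W_0$ maps all of $\square_2$ into $\states$, so $W_0\in\mathcal W_s$ for \emph{every} $s\in ri(\square_2)$, whence $q_s(F_0)\le\Tr F_0W_0=1-\sqrt2$ for all such $s$ and $ID(F_0)=ID_{\bar\simp}(F_0)$ follows from Proposition \ref{prop:max_idegree} and Lemma \ref{lemma:IDinf}.
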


\begin{proof} Let $F\in\mathcal A(\states,\square_2)$ be incompatible and let $W\in \mathcal W_{\bar\simp}$ be such that 
$\Tr F W=q_{\bar\simp}(F)$. We may assume that $W$ is extremal in $\mathcal A(\square_2,B(\mathcal H)^+)$, so by 
 Example \ref{ex:quantumwitness}, there are some unit vectors $x_{i,j}\in \mathcal H$ and a rank 2 density operator
 $\rho$ with support projection $P$ such that 
 \[
 |x_{0,0}\>\<x_{0,0}|+|x_{1,1}\>\<x_{1,1}|= |x_{0,1}\>\<x_{0,1}|+|x_{1,0}\>\<x_{1,0}|=P
\]
 and the vertices of $W$ satisfy $\frac12w_{i,j}=\rho^{1/2}|x_{i,j}\>\<x_{i,j}|\rho^{1/2}$. By  H\"older's inequality, we have  
 \begin{align*}
\frac12\|W(\edg^0_0)\|_1&=\|\rho^{1/2}(|x_{0,1}\>\<x_{0,1}|-|x_{1,1}\>\<x_{1,1}|)\rho^{1/2}\|_1\\
&\le \||x_{0,1}\>\<x_{0,1}|-|x_{1,1}\>\<x_{1,1}|\|\\
&=\sqrt{1-|\<x_{0,1}|x_{1,1}\>|^2}=:c.
 \end{align*}
Similarly,
\begin{align*}
\frac12\|W(\edg^1_0)\|_1\le \sqrt{1-|\<x_{1,0}|x_{1,1}\>|^2}=:d.
 \end{align*}
 Since $\<x_{0,1},x_{1,0}\>=0$, we have $c^2+d^2=1$ and hence $c+d\le \sqrt{2}$.  
 By the proof of Corollary \ref{coro:2witnesses}, it follows that 
\begin{align*}
q_{\bar\simp}(F)&=\Tr F W\ge -\frac12(\|W(\edg^0_0)\|_1+\|W(\edg^1_0)\|_1)+1\\
&\ge 1-(c+d)\ge 1-\sqrt{2}.
\end{align*}
On the other hand, let $W_0$ have vertices $\rho_{i,j}=|x_{i,j}\>\<x_{i,j}|$, with unit vectors $x_{i,j}$ such that $|\<x_{0,1}|x_{1,1}\>|=|\<x_{1,0}|x_{1,1}\>|=\sqrt{2}/2$. In this case, $W(s)\in \states$ for any $s\in \square_2$. 
Let $F_0=(f^0,f^1)$ be determined by the effects
\begin{align*}
f^1_0&:=\mathrm{argmax}_{f\in E(\states)} \<f, \rho_{1,1}-\rho_{0,1}\>,\\ f^2_0&:=\mathrm{argmax}_{f\in E(\states)} \<f, \rho_{1,1}-\rho_{1,0}\>.
\end{align*}
Then we have
\begin{align*}
1-\sqrt{2}&\le q_{\bar\simp}(F_0)\le\Tr F_0 W_0\\
&=-\frac12(\|\rho_{0,1}-\rho_{1,1}\|_1+\|\rho_{1,0}-\rho_{1,1}\|_1)+1\\ &=1-\sqrt{2}
\end{align*}
It follows that $\max_{F} ID_{\bar\simp}(F)=1-\frac1{\sqrt{2}}$, 
this corresponds to the results obtained in \cite{bggk2013degree}. Note further that the witness $W_0\in \mathcal W_s$ for any $s\in ri(\square_2)$. We obtain
\[
q_s(F_0)=\min_{W\in \mathcal W_s}\Tr F_0 W\le \Tr F_0 W_0=q_{\bar\simp}(F_0).
\]
It follows that 
\[
ID_{\bar\simp}(F_0)=\frac{-q_{\bar\simp}(F_0)}{1-q_{\bar\simp}(F_0)}\le \frac{-q_{s}(F_0)}{1-q_{s}(F_0)}=ID_s(F_0)
\] 
so that $ID_{\bar\simp}(F_0)=ID(F_0)$. We then have
\[
ID(F)\le ID_{\bar\simp}(F)\le ID_{\bar\simp}(F_0)=ID(F_0)=1-\frac1{\sqrt{2}}.
\]

\end{proof}

\subsection{Maximally incompatible measurements}

Let $\Simp$ be any polysimplex and let $F=(f^0,\dots,f^k)\in \aff(K,\Simp)$ be a collection of measurements. It is well known that  
\[
ID_{s}(F)\le \frac{k}{k+1}
\]
for any $s\in \Simp$  (see e.g. \cite{hmz2016aninvitation}): the joint measurement for
$\tfrac{1}{k+1} F+\tfrac{k}{k+1} F_s$ can be defined  by choosing one of the measurements uniformly at random and replace all other measurements by the corresponding  coin tosses. If $ID(F)=\frac{k}{k+1}$, we say that $F$ is maximally incompatible. 
 We now give a general characterization of maximal incompatibility.

\begin{thm}\label{thm:maximalID} Let $F\in \mathcal A(K,\Simp)$. Then the following are equivalent.
\begin{enumerate}
\item[(i)] $F$ is maximally incompatible.
\item[(ii)] $ID_s(F)=\frac{k}{k+1}$ for all $s\in \Simp$.
\item[(iii)] $ID_s(F)=\frac{k}{k+1}$ for some $s\in ri(\Simp)$.
\item[(iv)] There is some  $W\in \aff(\Simp,K)$ such that $\Tr FW=-k$.
\item[(v)] There is  some $W\in \aff(\Simp,K)$ such that 
\[
\<f^i_j, w_{n_1,\dots,n_{i-1},j,n_{i+1},\dots,n_k}\>=0,\quad \forall i,j;\ n_0,\dots,n_k.
\]
\end{enumerate}

\end{thm}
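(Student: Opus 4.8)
The plan is to prove the cyclic chain of implications (i)$\Rightarrow$(ii)$\Rightarrow$(iii)$\Rightarrow$(iv)$\Rightarrow$(v)$\Rightarrow$(i), using the witness formalism of Proposition~\ref{prop:max_idegree} together with the bound $ID_s(F)\le\frac{k}{k+1}$ recalled just above the statement. The implication (i)$\Rightarrow$(ii) is the substantive one: since $ID(F)=\inf_s ID_s(F)$ and this infimum is attained in $ri(\Simp)$ by Lemma~\ref{lemma:IDinf}, maximal incompatibility means $ID_s(F)=\frac{k}{k+1}$ for some interior $s$, and combined with the uniform upper bound $ID_s(F)\le\frac{k}{k+1}$ we would want to upgrade this to \emph{all} $s$. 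Here I would use Lemma~\ref{lemma:IDinf}'s proof idea in reverse: for $s_0\in\partial\Simp$ and any interior $s_1$, the identity $F_{s_t}=tF_{s_1}+(1-t)F_{s_0}$ lets one express $(1-\lambda)F+\lambda F_{s_0}$ in terms of mixtures at interior points, so an ETB decomposition at some interior point of value $<\frac{k}{k+1}$ would propagate to give $ID_{s_1}(F)<\frac{k}{k+1}$ for nearby interior $s_1$; continuity/convexity of $s\mapsto ID_s(F)$ (which follows from its variational formula in Proposition~\ref{prop:max_idegree}, as $q_s(F)$ is a concave function of $s$) then forces the value $\frac{k}{k+1}$ to be constant on all of $\Simp$.

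Next, (ii)$\Rightarrow$(iii) is trivial since $ri(\Simp)\ne\emptyset$. For (iii)$\Rightarrow$(iv): fix $s\in ri(\Simp)$ with $ID_s(F)=\frac{k}{k+1}$. By Proposition~\ref{prop:max_idegree}, $ID_s(F)=\frac{-q_s(F)}{1-q_s(F)}$ where $q_s(F)=\min_{W\in\mathcal W_s}\Tr FW$, and solving $\frac{-q}{1-q}=\frac{k}{k+1}$ gives $q_s(F)=-k$. The minimizing $W$ lies in $\mathcal W_s$, so $W\in\aff(\Simp,V(K)^+)$ with $W(s)\in K$; replacing $W$ by its rescaling/normalization one checks that $W$ may in fact be taken in $\aff(\Simp,K)$ (all vertices in $V(K)^+$ with $\Tr FW=-k$ forces, via $\langle 1_K, w_{n_0,\dots,n_k}\rangle\ge 0$ and the constraint $W(s)\in K$, that the barycenter and hence a suitable normalization sends everything into $K$). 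This gives (iv).

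For (iv)$\Rightarrow$(v): given $W\in\aff(\Simp,K)$ with $\Tr FW=-k$, expand $\Tr FW$ using formula~\eqref{eq:tracefw}, or more transparently write $\Tr FW=\sum_{n_0,\dots,n_k}\langle g_{n_0,\dots,n_k}\cdot\text{(something)}\rangle$ — concretely, $\Tr FW=\sum_{i,j}\langle f^i_j, \text{(averaged vertices)}\rangle$; the cleanest route is to note that for $W$ with all vertices in $K$ each term $\langle f^i_j, w_{\dots,j,\dots}\rangle\ge 0$, and a careful bookkeeping (choosing the basis at \emph{each} vertex as in Remark~\ref{rem:basis}, summing, and using $\sum_j f^i_j=1_K$ and $\langle 1_K,w\rangle=1$) shows $\Tr FW \ge -k$ always, with equality iff every term $\langle f^i_j, w_{n_0,\dots,n_{i-1},j,n_{i+1},\dots,n_k}\rangle$ vanishes. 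Thus $\Tr FW=-k$ is equivalent to the system of equations in (v). Finally (v)$\Rightarrow$(i): the same expansion shows the $W$ in (v) satisfies $\Tr FW=-k$, hence $W\in\mathcal W_s$ for every $s$ (its vertices are in $K$, so $W(s)\in K$ for all $s\in\Simp$), giving $q_s(F)\le-k$ and therefore $q_s(F)=-k$ by the universal bound, so $ID_s(F)=\frac{k}{k+1}$ for all $s$ and $ID(F)=\frac{k}{k+1}$.

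The main obstacle I anticipate is the bookkeeping in the (iv)$\Leftrightarrow$(v) step: establishing the sharp inequality $\Tr FW\ge -k$ for all $W\in\aff(\Simp,K)$ with the correct equality condition. The subtlety is that formula~\eqref{eq:tracefw} is written with respect to one fixed vertex $\simp_{l_0,\dots,l_k}$ and so is not manifestly symmetric; to read off the vanishing of \emph{all} the inner products $\langle f^i_j, w_{n_0,\dots,n_{i-1},j,n_{i+1},\dots,n_k}\rangle$ one must either average~\eqref{eq:tracefw} over the choice of base vertex, or argue directly that $\Tr FW = \sum_{n_0,\dots,n_k}\langle g^{?}_{n_0,\dots,n_k}, w_{n_0,\dots,n_k}\rangle$-type identities force the claim; I expect this to require the most care, while the topological step in (i)$\Rightarrow$(ii) only needs the concavity of $q_s(F)$ in $s$, which is immediate from its definition as a minimum of affine functions of $s$.
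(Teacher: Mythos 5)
Your skeleton (cycle of implications driven by Proposition~\ref{prop:max_idegree}, the expansion of $\Tr FW$ anchored at each vertex via Remark~\ref{rem:basis}, and Lemma~\ref{lemma:IDinf} at the end) is exactly the paper's route, and the steps (ii)$\Rightarrow$(iii), (iv)$\Leftrightarrow$(v) and (v)$\Rightarrow$(i) are fine as you describe them. But two points need repair. First, (i)$\Rightarrow$(ii) needs no continuity or concavity at all: $ID(F)=\inf_{s}ID_s(F)=\tfrac{k}{k+1}$ means every $ID_s(F)$ is bounded below by $\tfrac{k}{k+1}$, and the universal bound gives the reverse inequality, so equality holds for all $s$ immediately. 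Moreover, the concavity you invoke is not ``immediate from the definition'': $q_s(F)=\min_{W\in\mathcal W_s}\Tr FW$ is a minimum over a feasible set $\mathcal W_s$ that itself depends on $s$, so it is not a pointwise minimum of affine functions of $s$ over a fixed index set. Fortunately none of this is needed.

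Second, and more seriously, your (iii)$\Rightarrow$(iv) step has a real gap. A minimizer $W\in\mathcal W_s$ with $\Tr FW=q_s(F)=-k$ is only known to satisfy $W(s)\in K$ at the single interior point $s$; you must show all vertices $w_{n_0,\dots,n_k}$ lie in $K$, i.e.\ $\<1_K,w_{n_0,\dots,n_k}\>=1$. ``Rescaling'' cannot achieve this: multiplying $W$ by a constant $c\ne1$ changes $\Tr FW$ to $-ck$, and a vertex-by-vertex renormalization destroys affinity; and the inequality $\<1_K,w_{n_0,\dots,n_k}\>\ge0$ you cite is far too weak. The argument the paper uses (and which fits naturally with the vertex-anchored bookkeeping you already plan for (iv)$\Leftrightarrow$(v)) is: for each vertex $\simp_{n_0,\dots,n_k}$, the expansion of $\Tr FW$ in the dual bases of Remark~\ref{rem:basis} anchored at that vertex gives
\[
-k=\sum_{i=0}^k\sum_{j=0}^{l_i}\<f^i_j,w_{n_0,\dots,n_{i-1},j,n_{i+1},\dots,n_k}\>-k\<1_K,w_{n_0,\dots,n_k}\>\ge -k\<1_K,w_{n_0,\dots,n_k}\>,
\]
hence $\<1_K,w_{n_0,\dots,n_k}\>\ge1$ for every vertex; since $W(s)$ with $s\in ri(\Simp)$ is a convex combination of all vertices with strictly positive weights and $\<1_K,W(s)\>=1$, each of these values must equal $1$, so $W\in\aff(\Simp,K)$. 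The same displayed identity then yields (v) by forcing each nonnegative term to vanish, which is the ``careful bookkeeping'' you anticipated.
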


\begin{proof} (i) $\implies$ (ii) follows from the definition of $ID(F)$ and the fact that $ID_s(F)\le \frac{k}{k+1}$ for all $s$,
 (ii) $\implies$ (iii) is trivial. Assume (iii), then by Proposition \ref{prop:max_idegree} there is some $W\in \mathcal W_s$ such that
 $\Tr FW=q_s(F)=-k$. For any $n_0,\dots,n_k$, choose the pair of dual bases of $V(\Simp)$ and $A(\Simp)$  by fixing the vertex $\simp_{n_0,\dots,n_k}$ as in Remark \ref{rem:basis}, then exactly as in  \eqref{eq:tracefw}, we obtain
 \begin{align*}
-k&= \sum_{i=0}^k\sum_{j=0}^{l_i} \<f^i_j,w_{n_0,\dots,n_{i-1},j,n_{i+1},\dots,n_k}\>-k\<1_K,w_{n_0,\dots,n_k}\>\\&\ge-k\<1_K,w_{n_0,\dots,n_k}\>
 \end{align*}
 This implies that $\<1_K,w_{n_0,\dots,n_k}\>\ge 1$ for all $n_0,\dots,n_k$. On the other hand, since $W(s)\in K$ is a convex combination  of all $w_{n_0,\dots,n_k}$ with nonzero coefficients, we must have $\<1_K,w_{n_0,\dots,n_k}\>=1$, hence (iv) holds.
 Further, if $W$ is as in (iv), the inequality in the above computation must be an equality, so that $\<f^i_j,w_{n_0,\dots,n_{i-1},j,n_{i+1},\dots,n_k}\>=0$ for all $i$ and $j$,
  hence (v) holds.

  Assume (v), then $W(s)\in K$ for any $s\in \Simp$ and $\Tr FW=-k$ by \eqref{eq:tracefw}. It follows that $ID_s(F)$ is maximal for all $s\in ri(\Simp)$. By Lemma \ref{lemma:IDinf}, this implies (i).

\end{proof}

Maximal incompatibility has a nice geometric interpretation for two-outcome  measurements. For $k=1$ the following results were essentially proved in \cite{jepl2017conditions}. Recall the definition of retraction-section pairs in Section \ref{sec:poly}. 

\begin{coro}\label{coro:maxcube} Let $F\in \aff(K,\square_{k+1})$. 
Then $F$ is maximally incompatible  if and only if $F$ is a retraction. 

\end{coro}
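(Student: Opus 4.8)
The plan is to read the corollary off the equivalence of conditions (i) and (v) in Theorem~\ref{thm:maximalID}, specialized to the hypercube. For $\square_{k+1}=\Delta_1^{k+1}$ every outcome index $l_i$ equals $1$, so condition (v) says there is a $W\in\aff(\square_{k+1},K)$ whose vertices $w_{n_0,\dots,n_k}\in K$ satisfy $\<f^i_j,w_{n_0,\dots,n_{i-1},j,n_{i+1},\dots,n_k}\>=0$ for all $i$ and all $j\in\{0,1\}$. Let $\alpha$ be the affine involution of $\square_{k+1}$ that swaps the two vertices in each factor $\Delta_1$; on vertices it acts by $\alpha(\simp_{n_0,\dots,n_k})=\simp_{1-n_0,\dots,1-n_k}$. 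The first key step is to show that, for $W\in\aff(\square_{k+1},K)$, condition (v) is equivalent to the single identity $FW=\alpha$. Indeed, the vanishing $\<f^i_j,w_m\>=0$ whenever $m_i=j$, combined with $f^i_0+f^i_1=1_K$ and $\<1_K,w_m\>=1$ (valid since $w_m\in K$), forces the $i$-th component $f^i(w_m)$ to be the vertex $\delta^i_{1-m_i}$ of $\Delta_1$; running over all $i$ gives $F(w_m)=\simp_{1-m}=\alpha(\simp_m)$, and the reverse implication is read off the same identities.

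Granting this, the corollary follows formally. Since $\alpha$ is an affine automorphism of $\square_{k+1}$ with $\alpha\circ\alpha=\mathrm{id}_{\square_{k+1}}$, the assignment $W\mapsto S:=W\circ\alpha$ is a bijection of $\aff(\square_{k+1},K)$ onto itself, and $FW=\alpha$ holds if and only if $FS=F\circ W\circ\alpha=\alpha\circ\alpha=\mathrm{id}_{\square_{k+1}}$. Therefore $F$ is maximally incompatible $\iff$ (Theorem~\ref{thm:maximalID}) there is $W\in\aff(\square_{k+1},K)$ with $FW=\alpha$ $\iff$ there is $S\in\aff(\square_{k+1},K)$ with $FS=\mathrm{id}_{\square_{k+1}}$ $\iff$ $F$ is a retraction, the last equivalence being just the definition of a retraction from Section~\ref{sec:poly}.

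So the order of work is: (1) write out condition (v) for $l_i=1$ and record that the vertices of $W$ lie in $K$ and hence are normalized by $1_K$; (2) use $f^i_0+f^i_1=1_K$ together with this normalization to reinterpret each scalar vanishing condition as ``$f^i$ sends $w_m$ to the opposite vertex of $\Delta_1$'', and assemble these into $FW=\alpha$; (3) invoke $\alpha\circ\alpha=\mathrm{id}$ to pass between the witness map $W$ and a genuine section $S=W\circ\alpha$, and read off the chain of equivalences. The one point I would be careful about — and the only place the argument can go wrong — is steps (2)--(3): the naive expectation that a maximally incompatible $F$ has $FW=\mathrm{id}$ for its witness map is incorrect; the correct statement involves the coordinatewise complementation $\alpha$, so one must not conflate the map $W$ produced by Theorem~\ref{thm:maximalID}(v) with the section realizing $F$ as a retraction. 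Everything else is bookkeeping with affine maps.
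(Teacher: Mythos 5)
Your proof is correct and follows essentially the same route as the paper: both arguments hinge on the coordinatewise complementation automorphism $U=\alpha$ of $\square_{k+1}$ and the correspondence $S\leftrightarrow W$ given by composing with it, and your key observation that condition (v) of Theorem~\ref{thm:maximalID} forces $f^i_j W=\mea^i_{1-j}$, hence $FW=\alpha$, is exactly the step in the paper's converse direction. The only (cosmetic) difference is that the paper handles the forward direction by computing $\Tr FW=\Tr U=-k$ and invoking condition (iv), whereas you use the equivalence with condition (v) symmetrically in both directions.
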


 \begin{proof} Assume that $F$ is a retraction and let $S:\Simp\to K$ be the corresponding section. Let $U$ be the automorphism 
 of $\Simp$ given as $U(\simp_{n_0,\dots,n_k})=\simp_{1-n_0,\dots,1-n_k}$ and put $W=SU$. Then $W\in\aff(\Simp, K)$ and we have
 \[
\Tr FW=\Tr FSU=\Tr U=1+\sum_{i=0}^k\<\mea^i_0,U(\edg^i_0)\>=-k.
 \]
Hence $F$ is maximally incompatible. Conversely, assume that $F$ is maximally incompatible and let $W$ be the witness as in Thm. 
\ref{thm:maximalID} (v). 
Observe that then 
\[
\mea^i_jFW=f^i_jW=\mea^i_{1-j},
\]
it follows that $FW =U$. Putting $S=WU$ we obtain $FS=id_\Simp$, so that $F$ is a retraction.

\end{proof}

\begin{coro}\label{coro:maxcube_exist} There exist $k+1$ maximally incompatible  two-outcome measurements on $K$ if and only if there exists a projection $K\to K$ whose range is affinely isomorphic to $\square_{k+1}$.

\end{coro}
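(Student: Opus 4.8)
The plan is to deduce this directly from Corollary \ref{coro:maxcube}, which identifies maximally incompatible collections of $k+1$ two-outcome measurements with retractions $F\in\aff(K,\square_{k+1})$, together with the elementary facts about retraction--section pairs recalled in Section \ref{sec:poly} (namely that for a retraction--section pair $(R,S)$ the composite $SR$ is an affine idempotent whose range equals $S$ of the domain).

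For the forward implication I would start from $k+1$ maximally incompatible two-outcome measurements, encode them as the corresponding map $F\in\aff(K,\square_{k+1})$, and invoke Corollary \ref{coro:maxcube} to obtain a section $S\in\aff(\square_{k+1},K)$ with $FS=id_{\square_{k+1}}$. Then $P:=SF$ is an affine idempotent map $K\to K$ with $P(K)=S(\square_{k+1})$. Since $FS=id_{\square_{k+1}}$ forces $S$ to be injective, $S$ is an affine isomorphism of $\square_{k+1}$ onto its image $P(K)$; hence the range of the projection $P$ is affinely isomorphic to $\square_{k+1}$, as required.

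For the converse, suppose $P\colon K\to K$ is a projection with $P(K)=:C$ and fix an affine isomorphism $\phi\colon C\to\square_{k+1}$. Set $F:=\phi\circ P\in\aff(K,\square_{k+1})$ and let $S:=\phi^{-1}$, viewed as a map $\square_{k+1}\to C\subseteq K$. Because $P$ is idempotent with range $C$, it restricts to the identity on $C$, so $FS=\phi\circ P|_{C}\circ\phi^{-1}=\phi\circ\phi^{-1}=id_{\square_{k+1}}$; thus $F$ is a retraction. By the no-restriction hypothesis, $F$ describes a valid collection of $k+1$ two-outcome measurements on $K$, and Corollary \ref{coro:maxcube} yields that this collection is maximally incompatible.

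The argument is essentially bookkeeping: the only point requiring a little care is matching the statement ``range affinely isomorphic to $\square_{k+1}$'' with the canonical form $FS=id_{\Simp}$ of a retraction--section pair, i.e.\ recalling that a projection's range is exactly the image of the section and that the section is automatically injective. Accordingly I expect no genuine obstacle here, only the need to keep the identifications $C\cong\square_{k+1}$ and $C\subseteq K$ straight.
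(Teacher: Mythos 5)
Your proof is correct and follows essentially the same route as the paper's: both directions reduce to Corollary \ref{coro:maxcube} by passing between a retraction--section pair $(F,S)$ and the idempotent $P=SF$, with the converse constructing $F=\phi\circ P$ from the isomorphism $\phi$ of the range onto the cube. The paper's argument is identical up to notation ($U$ in place of $\phi$), so there is nothing further to add.
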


\begin{proof}  If $F$ is maximally incompatible, then by Corollary \ref{coro:maxcube} there is a section $S\in \aff(\square_{k+1},K)$ such that $FS=id$. It follows that $P:=SF$ is a projection $K\to K$
 such that $PS=S$ and $FP=F$, hence the restriction of $F$ to the range of $P$ is an isomorphism onto $\square_{k+1}$ whose inverse is $S$.

Conversely, assume that $P:K\to K$ is such a projection and let $U:P(K)\to \square_{k+1}$ be the isomorphism onto the cube. Then 
$F=UP$ is obviously a retraction, since then $FU^{-1}=UPU^{-1}=UU^{-1}=id_{\square_{k+1}}$.

\end{proof}

\begin{ex}\label{ex:maximalS}(\textbf{Maximal incompatibility in $\Simp$}) We will show that any collection of effects 
$\mea^0_{n_0},\dots,\mea^k_{n_k}$ with $n_i\in \{0,\dots,l_i\}$ is  maximally incompatible. Indeed, let  $F\in \aff(\Simp, \square_{k+1})$ be the corresponding  channel and let $S\in \aff(\square_{k+1},\Simp)$ be determined by the collection of measurements $t^0,\dots,t^k$, given by
\[
t^i(s)=\mea^i_0(s)\delta^i_{n_i}+(1-\mea^i_0(s))\delta^i_{n_i'},\qquad s\in \square_{k+1},
\]
for some $n_i'\ne n_i$. Then $FS=id_{\square_{k+1}}$, so that $F$ is a retraction. Note that this also implies that the projections 
 $\mea^0,\dots,\mea^k$ of $\Simp$ are maximally incompatible as well, since they determine the identity map $id_\Simp$ and 
 if $W$ is a witness as in Theorem \ref{thm:maximalID} (iv),
  then $\Tr id_\Simp FW=\Tr FW=-k$ and $FW\in \aff(\Simp,\Simp)$.

  \end{ex}

We next show that  maximally incompatible  measurements exist in the space of quantum 
channels, cf. \cite{srcz2016incompatible, jepl2017conditions}. This result  is a simple consequence of Proposition \ref{prop:qchannel_polysimp} and Example \ref{ex:maximalS}, see also Remark \ref{rem:channels}.

\begin{coro}\label{coro:maximalQC}  There exists a maximally incompatible collection of $d_A$ two-outcome measurements on $\Ce_{A,A'}$.
\end{coro}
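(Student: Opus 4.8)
The plan is to chain the retraction of $\Ce_{A,A'}$ onto a polysimplex (Proposition~\ref{prop:qchannel_polysimp}) with the maximally incompatible collection that already lives inside any polysimplex (Example~\ref{ex:maximalS}), and then read off maximal incompatibility from the retraction characterization of Corollary~\ref{coro:maxcube}. The only auxiliary fact needed is that a composition of retractions is again a retraction, which is immediate from the definitions in Section~\ref{sec:poly}.

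First I would apply Proposition~\ref{prop:qchannel_polysimp} to obtain a retraction $R\in\aff(\Ce_{A,A'},\Simp)$, where $\Simp:=\Delta^{d_A}_{d_{A'}-1}$, together with a section $S\in\aff(\Simp,\Ce_{A,A'})$, so that $RS=id_\Simp$. If $d_{A'}=1$ the space $\Ce_{A,A'}$ is a single point and there is nothing to prove, so assume $d_{A'}\ge 2$. The key structural point is that $\Simp$ is a polysimplex with exactly $k+1=d_A$ Cartesian factors.

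Next, inside $\Simp$ I would invoke Example~\ref{ex:maximalS} applied to the effects $\mea^0_0,\dots,\mea^{d_A-1}_0$: the corresponding collection of $d_A$ two-outcome measurements $G\in\aff(\Simp,\square_{d_A})$ (whose $i$-th measurement has effects $\mea^i_0$ and $1_\Simp-\mea^i_0$) is maximally incompatible, hence by Corollary~\ref{coro:maxcube} it is a retraction, with some section $S_G\in\aff(\square_{d_A},\Simp)$ satisfying $GS_G=id_{\square_{d_A}}$. Then $F:=GR\in\aff(\Ce_{A,A'},\square_{d_A})$ is a collection of $d_A$ two-outcome measurements on $\Ce_{A,A'}$, and it is a retraction, since $F(SS_G)=G(RS)S_G=GS_G=id_{\square_{d_A}}$. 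Applying Corollary~\ref{coro:maxcube} once more, $F$ is maximally incompatible, which is exactly the assertion.

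There is essentially no genuine obstacle here: all the substance has been carried out in the preceding results, and the argument is just the observation (also flagged in Remark~\ref{rem:channels}) that $\Delta^{d_A}_{d_{A'}-1}$ is a retract of $\Ce_{A,A'}$, combined with the fact that polysimplices themselves already carry maximally incompatible two-outcome collections. The only point that deserves an explicit word is the coarse-graining turning the $(d_{A'}-1)$-outcome projections $\mea^i$ of $\Simp$ into two-outcome measurements, but this is precisely what Example~\ref{ex:maximalS} handles, since it treats arbitrary collections $\mea^0_{n_0},\dots,\mea^k_{n_k}$ rather than only the two-valued case.
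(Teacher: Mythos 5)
Your argument is correct and follows essentially the same route as the paper: compose the retraction $R\in \aff(\Ce_{A,A'},\Delta^{d_A}_{d_{A'}-1})$ from Proposition \ref{prop:qchannel_polysimp} with the retraction onto $\square_{d_A}$ supplied by Example \ref{ex:maximalS}, and conclude via Corollary \ref{coro:maxcube}. The extra details you supply (the explicit section $SS_G$ and the remark on composing retractions) are exactly what the paper leaves implicit.
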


\begin{proof} Let $R\in \aff(\Ce_{A,A'},\Delta^{d_A}_{d_{A'}-1})$ be the retraction as in  Proposition \ref{prop:qchannel_polysimp} and let $F\in \aff(\Delta^{d_A}_{d_{A'}-1},\square_{d_A})$ be as in Example \ref{ex:maximalS}. Then  $F$ is a retraction, so that $FR\in \aff(\Ce_{A,A'}, \square_{d_A})$ is a retraction as well.  

\end{proof}

\begin{rem} As in the above proof, any retraction $K\to \Simp$ is maximally incompatible. On the other hand, let $F'=(f^0,\dots,f^k)$ 
 be such that the collection $G$ of two-outcome measurements determined by the effects $f^0_{n_0},\dots, f^k_{n_k}$ is maximally incompatible. Then $F'$ is maximally incompatible as well. Indeed, with the notation of Example \ref{ex:maximalS}, $G=FF'$. Let $W'$ be the witness $\square_{k+1}\to K$ such that $\Tr GW'=-k$, then $W'F:\Simp\to K$ is a witness demonstrating maximal incompatibility of $F'$.
Since the other effects are not involved, $F'$ is not necessarily a retraction, so that Corollary \ref{coro:maxcube} cannot be extended to all polysimplices.

\end{rem}

\section{Steering and nonlocality}

Quantum steering refers to the property of entangled quantum states which allows one to ''steer'' the state 
of one component by choosing suitable measurements on the other \cite{schrodinger1936probability}. 
A rigorous operational definition was given in \cite{wjd2007steering} and can be easily rephrased in the setting of GPT.

\subsection{Steering in GPT}

Let $K_A,K_B$ be state spaces and let  $y\in K_A\widetilde \otimes K_B$ be a joint state. If a measurement  $f_A\in \aff(K_A,\Delta_n)$ is applied on system $A$, then $y$ is mapped onto some element $(f_A\otimes id_B)(y)\in \Delta_n\otimes K_B$. It means that there are some states $x_j\in K_B$ and a probability measure $p\in \Delta_n$ such that $(f_A\otimes id_B)(y)=\sum_{j=0}^n p(j)\delta_j\otimes x_j$.
This has the interpretation that with probability $p(j)$, the outcome $j$ is observed on $A$  and the state of $B$ turns into $x_j$. The collection  $\{p(j),x_j\}$ of states and probabilities is called an ensemble. If $B$ has no information about the outcome, the state of $B$ is just the average state $\sum_jp(j)x_j=(1_A\otimes id_B)(y)=:y_B$.  

Assume now that an observer on the system $A$ can choose from a collection of measurements $f^i_A\in \aff(K_A,\Delta_{l_i})$. Then 
we obtain a set of ensembles  $\{p(j|i), x_{j|i}\}$ with a common average state $y_B$. Such a set  is called an assemblage. According to \cite{wjd2007steering}, an assemblage does not demonstrate steering if there is a (finite) set $\Lambda$ 
of ''classical messages''  distributed according to a probability measure $q$, corresponding  elements  
$\{x_\lambda\in K_B, \ \lambda \in \Lambda\}$ and conditional probabilities $q(j|i,\lambda)$ such that 
\begin{equation}\label{eq:LHS}
p(j|i)x_{j|i}=\sum_\lambda q(\lambda) q(j|i,\lambda)x_\lambda.
\end{equation}
In this case, the assemblage can be explained by a local hidden state model, see \cite{wjd2007steering} for more details. 
The next result shows that steering can be conveniently expressed in terms of the minimal and maximal tensor products of 
compact convex sets.

\begin{thm}\label{thm:steering} Let $K$ be a state space and $\Simp$ a polysimplex. Let $\beta\in \Simp\widehat\otimes K$.
\begin{enumerate}
\item[(i)] There is an assemblage $\{p(j|i), x_{j|i}, j=0,\dots,l_i, i=0,\dots,k\}$ of elements in  $K$ with average state $\sum_j p(j|i)x_{j|i}=x\in K$, such that 
\begin{equation}\label{eq:assemblage}
\beta=\simp_{l_0,\dots,l_k}\otimes x+ \sum_{i=0}^k\sum_{j=0}^{l_i-1} \edg^i_j\otimes p(j|i)x_{j|i}.
\end{equation}

\item[(ii)] The assemblage in (i) does not demonstrate steering if and only if $\beta$ is separable.

\end{enumerate}
Moreover, any element of the form \eqref{eq:assemblage} is in  $\Simp\widehat\otimes K$.
\end{thm}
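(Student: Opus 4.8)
The plan is to work in the pair of dual bases \eqref{eq:basis} and \eqref{eq:dualbasis} of $A(\Simp)$ and $V(\Simp)$: the coordinate functionals on $V(\Simp)$ are $1_\Simp$ and the effects $\mea^i_j$, and I expect that applying $\mea^i_j\otimes id_K$ to $\beta$ produces exactly the unnormalized conditional states $p(j|i)x_{j|i}$ of the sought assemblage, while $1_\Simp\otimes id_K$ produces the common marginal $x$.

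\emph{Part (i) and the ``moreover'' clause.} For $\beta\in\Simp\widehat\otimes K$ and any $i,j$, the element $\mu^i_j:=(\mea^i_j\otimes id_K)(\beta)$ lies in $V(K)^+$, since for $g\in A(K)^+$ one has $\<\mea^i_j\otimes g,\beta\>\ge 0$ because $\mea^i_j\otimes g\in A(\Simp)^+\otimes_{min}A(K)^+$, the dual cone of $V(\Simp)^+\otimes_{max}V(K)^+$; likewise $x:=(1_\Simp\otimes id_K)(\beta)\in K$. Setting $p(j|i):=\<1_K,\mu^i_j\>$ and $x_{j|i}:=\mu^i_j/p(j|i)$ (an arbitrary element of $K$ when $p(j|i)=0$), the relation $\sum_j\mea^i_j=1_\Simp$ gives $\sum_jp(j|i)=1$ and $\sum_jp(j|i)x_{j|i}=x$ for each $i$; and since \eqref{eq:basis} is dual to \eqref{eq:dualbasis}, the expansion of $\beta$ in the basis \eqref{eq:dualbasis} tensored with $V(K)$ is exactly \eqref{eq:assemblage}. (Alternatively one may just note that $(\mea^i\otimes id_K)(\beta)\in\Delta_{l_i}\widehat\otimes K=\Delta_{l_i}\otimes K$ by Example~\ref{ex:composite_class}, which directly exhibits the ensemble $\{p(j|i),x_{j|i}\}$.) For the converse ``moreover'' direction I would run this backwards: given an assemblage with common average $x\in K$, form $\beta$ via \eqref{eq:assemblage} and check $\beta\in\Simp\widehat\otimes K$ by testing positivity against the extreme-ray generators $\mea^i_j\otimes g$ of the dual cone; for $j\le l_i-1$ this equals $p(j|i)\<g,x_{j|i}\>\ge 0$, and for $j=l_i$ one uses $\mea^i_{l_i}=1_\Simp-\sum_{j<l_i}\mea^i_j$ and $x=\sum_{j}p(j|i)x_{j|i}$ to get $p(l_i|i)\<g,x_{l_i|i}\>\ge 0$; the normalization $\<1_\Simp\otimes 1_K,\beta\>=\<1_K,x\>=1$ is immediate.

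\emph{Part (ii).} If $\beta$ is separable, write $\beta=\sum_\lambda q(\lambda)\,s_\lambda\otimes x_\lambda$ with $q(\lambda)>0$, $\sum_\lambda q(\lambda)=1$, $s_\lambda\in\Simp$, $x_\lambda\in K$, and put $q(j|i,\lambda):=\mea^i_j(s_\lambda)$, so $q(\cdot|i,\lambda)\in\Delta_{l_i}$; applying $\mea^i_j\otimes id_K$ turns this into $p(j|i)x_{j|i}=\sum_\lambda q(\lambda)q(j|i,\lambda)x_\lambda$, which is \eqref{eq:LHS}, so there is no steering. Conversely, from data $\Lambda,q,x_\lambda,q(j|i,\lambda)$ satisfying \eqref{eq:LHS}, define $s_\lambda\in\Simp$ by $\mea^i_j(s_\lambda):=q(j|i,\lambda)$ and set $\beta':=\sum_\lambda q(\lambda)\,s_\lambda\otimes x_\lambda\in\Simp\otimes K$; the $\edg^i_j$-coordinates of $\beta$ and $\beta'$ agree by \eqref{eq:LHS}, while the $\simp_{l_0,\dots,l_k}$-coordinates agree because $\sum_jq(j|i,\lambda)=1$ forces $\sum_\lambda q(\lambda)x_\lambda=\sum_jp(j|i)x_{j|i}=x$, so $\beta=\beta'$ is separable.

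The computations are otherwise routine; the one point that needs care is this last step in (ii) --- verifying that the marginal coordinate of the reconstructed separable element $\beta'$ actually coincides with that of $\beta$, which is where the normalization $\sum_jq(j|i,\lambda)=1$ is used. Everything else is bookkeeping with the dual-basis dictionary of \eqref{eq:basis}--\eqref{eq:dualbasis}.
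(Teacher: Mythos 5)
Your proposal is correct and follows essentially the same route as the paper: expand $\beta$ in the dual basis \eqref{eq:dualbasis}, identify the coefficients $(\mea^i_j\otimes id_K)(\beta)$ as the unnormalized conditional states, verify positivity and normalization against the generators $\mea^i_j\otimes g$ of the dual cone, and translate between LHS models and separable decompositions via $s_\lambda\mapsto q(j|i,\lambda)=\mea^i_j(s_\lambda)$. The only cosmetic difference is that in the direction ``separable $\Rightarrow$ no steering'' the paper decomposes $\beta$ over the vertices of $\Simp$ with deterministic conditionals, whereas you use a general product decomposition; both are valid.
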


\begin{proof} Using the basis \eqref{eq:dualbasis}, we have 
\[
\beta=\simp_{l_0,\dots,l_k}\otimes \phi+\sum_{i=0}^k\sum_{j=0}^{l_i-1}\edg^i_j\otimes \phi^i_j
\]
for some $\phi,\phi^i_j\in V(K)$. By definition of $\Simp\widehat\otimes K$, we must have $\<\beta, \mea^i_j\otimes f\>\ge 0$ for 
 all $i,j$ and $f\in E(K)$. For $j\ne l_i$ this is true if and only if  $\phi^i_j\in V(K)^+$.  We also have for all $i$ and $f\in E(K)$
\[
\sum_{j=0}^{l_i-1}\<\phi^i_j,f\>\le \sum_{j=0}^{l_i}\<\beta,\mea^i_j\otimes f\>=\<\beta,1_\Simp\otimes f\>=\phi,
\]
hence $\sum_j\phi^i_j\le \phi\in V(K)^+$ and  $\<\beta,1_\Simp\otimes 1_{K}\>=\<\phi,1_{K}\>=1$. Put $\phi^i_{l_i}:=\phi-\sum_{j=0}^{l_i-1}\phi^i_j$ and $p(j|i):=\<1_K,\phi^i_j\>$, $x_{j|i}:= p(j|i)^{-1}\phi^i_j$ (if $p(j|i)>0$, otherwise $x(j|i)$ can be anything) 
for all $i$ and $j$. Then $\{p(j|i),x_{j|i}\}$ is an assemblage with average state $x:=\phi$. This proves (i).

For (ii), assume that \eqref{eq:LHS} holds. Then $x=\sum_\lambda q(\lambda)x_\lambda$ and we have
\[
\beta=\sum_\lambda q(\lambda)s_\lambda\otimes x_\lambda,
\]
where $s_\lambda:=\simp_{l_0,\dots,l_k}+\sum_{i=0}^k\sum_{j=0}^{l_i-1} q(j|i,\lambda)\edg^i_j\in \Simp$, this follows from
 $\<\mea^i_j,s_\lambda\>=q(j|i,\lambda)\ge 0$ and $\<1_\Simp,s_\lambda\>=\<1_\Simp,\simp_{l_0,\dots,l_k}\>=1$. Hence $\beta$ is separable.  
Conversely, let 
\[
\beta=\sum_{n_0,\dots, n_k}\simp_{n_0,\dots,n_k}\otimes\alpha_{n_0,\dots,  n_k}
\]
for some $\alpha_{n_0,\dots,n_k}\in V(K)^+$.  Put  $\Lambda:=\{(n_0,\dots, n_k), n_i=0,\dots, l_i\}$,
$q(n_0,\dots,n_k):=\<1_{K},\alpha_{n_0\dots n_k}\>$, $x_{n_0,\dots,n_k}:= q(n_0,\dots,n_k)^{-1}\alpha_{n_0\dots n_k}$, 
and 
\[
q(j|i,(n_0,\dots,n_k)):=\left\{\begin{array}{cc} 1 & \mbox{ if } n_i=j\\ 
    0 & \mbox{otherwise}.\end{array}\right.
\]
 Then for all $i$ and $j$, 
 \begin{align*}
p(j|i)x_{j|i}&=\<\beta,\mea^i_j\otimes \cdot\>=\sum_{n_0,\dots,n_k, n_i=j} \alpha_{n_0,\dots,n_k}\\ &=
\sum_{\lambda\in \Lambda} q(\lambda)q(j|i,\lambda) x_\lambda.
 \end{align*}
This proves (ii). The last statement follows from the proof of (i).

 \end{proof}

In view of the preceding theorem, any element in $\Simp\widehat\otimes K$ will be called an assemblage. The common average state $x$ will be called the barycenter of $\beta$.

It is already known that 
there is no steering if $y$ is separable or the measurements $f^i_A$ are compatible. This can be seen immediately from Theorem 
\ref{thm:steering}. Let $F_A=(f^0_A,\dots, f^k_A)$, then $\beta=(F_A\otimes id_B)(y)\in \Simp\widehat \otimes K_B$ is the associated assemblage. If $y\in K_A\otimes K_B$, then $\beta$ must be separable and hence does not demonstrate steering. If the measurements are compatible, then $\beta$ is separable by Theorem \ref{thm:incompatibility}. 

In general, not all  assemblages in $\Simp\widehat \otimes K_B$ are obtained from some collection $F_A$ and  a bipartite state $y\in K_A\widetilde\otimes K_B$. If this is the case, we say that the state space $K_B$ admits steering. Note that quantum state spaces satisfy this condition. 
Somewhat stronger conditions  were studied in \cite{bgw2013ensemble} and their relations to homogeneity and weak self-duality 
of the state spaces were found.

\subsection{Steering degree}

Let $\beta\in \Simp\widehat \otimes K$ be an assemblage with barycenter $x$. A steering degree can be defined similarly as incompatibility degree, as the smallest amount of noise that has to be added to $\beta$ to obtain a separable element. For the noise, we use assemblages of the form $s\otimes x$ for 
$s\in \Simp$. This is a separable assemblage with $p(j|i)=\mea^i_j(s)$ and  $x_{j|i}=x$ for all $i,j$. 
We put 
\[
SD_{s}(\beta):=\min\{\lambda\in [0,1],\ (1-\lambda)\beta+\lambda s\otimes x\in \Simp\otimes K\}
\]  
and 
\[
SD(\beta):=\inf_{s\in \Simp} SD_{s}(\beta).  
\]
Observe that for any $F_A\in \aff(K_A,\Simp)$ and $y\in K_A\widetilde \otimes K_B$, we have 
\begin{equation}\label{eq:sd_le_id}
SD_s((F_A\otimes id_B)(y))\le ID_s(F_A).
\end{equation}
To see this, note that the barycenter of $(F_A\otimes id_B)(y)$ is the marginal $y_B$ and $s\otimes y_B= (F_s\otimes id_B)(y)$. 
We therefore have
\begin{align*}
(1-\lambda)(F_A\otimes id_B)(y)&+\lambda s\otimes y_B\\ &=\left(((1-\lambda)F_A+\lambda F_s)\otimes id_B \right)(y)
\end{align*}
and this is separable if $(1-\lambda)F_A+\lambda F_s$ is ETB. The possibility of attaining equality depends on the properties of $K_A$ and the form of composite state spaces in the GPT. Assume that $V(K_A)^+$ is weakly self-dual, so that there is an affine isomorphism 
 $\Psi:A(K_A)^+\to V(K_A)^+$. With the notations of Appendix \ref{app:positive}, we have  $(id\otimes \Psi)(\chi_{K_A})\in K_A\widehat \otimes K_A$, see Lemma \ref{lemma:chiKK}. We are now prepared to state the following result.

 \begin{thm}\label{thm:sd_id} Assume that there is an isomorphism  $\Psi:A(K_A)^+\to V(K_A)^+$ such that 
 $(id\otimes \Psi)(\chi_{K_A})\in K_A\widetilde \otimes K_A$.
  Then for any polysimplex $\Simp$, $F_A\in \aff(K,\Simp)$ and $s\in \Simp$, we have
  \[
  \sup_{y\in K_A\widetilde\otimes K_A} SD_s((F_A\otimes id_A)(y))=ID_s(F_A).
\]
 \end{thm}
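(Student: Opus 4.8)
The inequality $\le$ is already established in \eqref{eq:sd_le_id}, so the plan is to prove the reverse inequality: given $F_A \in \aff(K_A,\Simp)$ and $s \in ri(\Simp)$, I would exhibit a single bipartite state $y \in K_A \widetilde\otimes K_A$ such that $SD_s((F_A\otimes id_A)(y)) \ge ID_s(F_A)$. The natural candidate is $y := (id \otimes \Psi)(\chi_{K_A})$, the "maximally entangled" state of $K_A$ built from the weak self-duality isomorphism $\Psi$, which by hypothesis lies in $K_A \widetilde\otimes K_A$. First I would compute the assemblage $\beta := (F_A \otimes id_A)(y)$ explicitly and identify its barycenter as the marginal $y_B = (1_A \otimes id_A)(y)$, which up to normalization is $\Psi(1_{K_A})$; one should check this is a genuine interior state so that $SD_s$ is well-defined.

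The key step is to transport the witness characterization of $ID_s$ (Proposition \ref{prop:max_idegree}) into a separability-witness characterization of $SD_s$. By Proposition \ref{prop:max_idegree}, if $F_A$ is incompatible then $ID_s(F_A) = \frac{-q_s(F_A)}{1-q_s(F_A)}$ where $q_s(F_A) = \min_{W\in\mathcal W_s}\Tr F_A W$ and $W$ ranges over $\aff(\Simp,V(K_A)^+)$ with $W(s)\in K_A$. Dually, $SD_s(\beta)$ is governed by functionals that detect the failure of $\beta$ to lie in $\Simp\otimes K_A$, i.e.\ by elements of the dual cone $(V(\Simp)^+\otimes_{min}V(K_A)^+)^* = V(\Simp)^+\otimes_{max}A(K_A)^+$ (via the identifications in Appendix \ref{app:positive}). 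The plan is to show that for the specific choice $y = (id\otimes\Psi)(\chi_{K_A})$, every incompatibility witness $W\in\mathcal W_s$ for $F_A$ gives rise, through the correspondence $W \mapsto (id\otimes\Psi^{-1})\circ(\text{something built from }W)$, to a separability witness for $\beta$ achieving the same value. Concretely, using that $\chi_{K_A}$ corresponds to the identity map $A(K_A)\to V(K_A)$ in the map–tensor correspondence, one gets $\Tr\big[(F_A\otimes id)(y)\,\cdot\,(W\otimes\Psi^{-1})\big] = \Tr(F_A W)$ for every $W\in\aff(\Simp,V(K_A)^+)$; the hypothesis on $\Psi$ and the ETB$\leftrightarrow$min-tensor correspondence (Proposition \ref{prop:ETB}) ensure $W\otimes\Psi^{-1}$ is an admissible separability test exactly when $W\in\mathcal W_s$. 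Hence the optimal value over separability witnesses of $\beta$ is $\le q_s(F_A)$, which by the same order-unit computation as in Lemma \ref{lemma:ou} and the proof of Proposition \ref{prop:max_idegree} yields $SD_s(\beta)\ge \frac{-q_s(F_A)}{1-q_s(F_A)} = ID_s(F_A)$.

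Combining this with \eqref{eq:sd_le_id}, which gives $SD_s((F_A\otimes id_A)(y'))\le ID_s(F_A)$ for every $y'\in K_A\widetilde\otimes K_A$, the supremum over $y'$ equals $ID_s(F_A)$ and is attained at $y$. The compatible case $q_s(F_A)>0$ is handled separately but trivially: then $ID_s(F_A)=0$ and $SD_s\ge 0$ always, with equality from \eqref{eq:sd_le_id}. One small technical point to watch is the boundary of $\Simp$: the theorem as stated is for general $s\in\Simp$, but the witness formula in Proposition \ref{prop:max_idegree} requires $s\in ri(\Simp)$; for $s\in\partial\Simp$ one either argues directly that both sides vanish or are infinite in the degenerate directions, or restricts attention to $ri(\Simp)$ and uses Lemma \ref{lemma:IDinf}-type continuity.

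The main obstacle I expect is making the witness correspondence $W \leftrightarrow W\otimes\Psi^{-1}$ precise — in particular verifying that the constraint $W(s)\in K_A$ (equivalently $W$ being a legitimate competitor in $q_s$) matches exactly the constraint that $W\otimes\Psi^{-1}$ is a valid normalized test for membership in $\Simp\otimes K_A$, and that the optimization over separability witnesses of $\beta$ does not allow strictly better (more negative) values than those coming from this family. This is where the precise form of $\chi_{K_A}$ from Appendix \ref{app:positive} and the hypothesis $(id\otimes\Psi)(\chi_{K_A})\in K_A\widetilde\otimes K_A$ (rather than merely in the maximal tensor product) are essential: the latter guarantees $\beta$ is a physically admissible assemblage arising from an admissible state, so that $SD_s(\beta)$ is the relevant quantity and no "super-maximal" witnesses intervene.
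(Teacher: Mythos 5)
Your choice of the state $y=(id\otimes\Psi)(\chi_{K_A})$ and your use of \eqref{eq:sd_le_id} for the easy direction match the paper exactly, but your mechanism for the hard direction is genuinely different and considerably heavier than what the paper does. The paper never dualizes: it simply writes $(1-\lambda)\beta+\lambda\, s\otimes y_B=(F_\lambda\otimes id)(y)$ with $F_\lambda=(1-\lambda)F_A+\lambda F_s$, observes that $(F_\lambda\otimes id)(y)=(id\otimes\Psi)\bigl((F_\lambda\otimes id)(\chi_{K_A})\bigr)$ is separable iff $(F_\lambda\otimes id)(\chi_{K_A})$ is (since $\Psi$ is a cone isomorphism), and invokes Proposition \ref{prop:ETBapp}(ii) to conclude that this happens iff $F_\lambda$ is ETB; hence the admissible sets of $\lambda$ in the definitions of $SD_s(\beta)$ and $ID_s(F_A)$ coincide verbatim. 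Your route instead transports the witness characterization of Proposition \ref{prop:max_idegree} to the steering side, which requires you to first establish a quantitative separability-witness formula for $SD_s$ (the paper only gestures at steering witnesses in Remark \ref{rem:steer_witness}), to verify the pairing identity $\Tr[(F_A\otimes id)(y)\,\mu_W]=\Tr F_AW$ and the matching of normalizations, and — as you note — to treat $s\in\partial\Simp$ by a separate limiting argument, since the order-unit duality needs $s\in ri(\Simp)$; the paper's primal argument handles all $s\in\Simp$ uniformly with none of this. Two smaller points: the dual of $V(\Simp)^+\otimes_{min}V(K_A)^+$ should be $A(\Simp)^+\otimes_{max}A(K_A)^+$, not the hybrid cone you wrote; and your worry that the separability-witness optimization might yield ``strictly better (more negative) values'' is unnecessary for the direction you need — exhibiting a single witness attaining $q_s(F_A)$ already gives $SD_s(\beta)\ge ID_s(F_A)$, and the reverse bound is supplied by \eqref{eq:sd_le_id}. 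Your plan is completable, but the direct separable-iff-ETB equivalence of Proposition \ref{prop:ETBapp}(ii) is exactly the shortcut that makes the extra machinery unnecessary.
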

\begin{proof} By \eqref{eq:sd_le_id}, the supremum on the left is never larger than $ID_s(F_A)$. Put $y=(id\otimes \Psi)(\chi_{K_A})$ and let $F_\lambda:= (1-\lambda)F_A+\lambda F_s$. Then $(F_\lambda\otimes id)(y)$ is separable if and only if 
$(F_\lambda\otimes id)(\chi_{K_A})$ is separable, which by Proposition \ref{prop:ETBapp} (ii) means that $F_\lambda$ is ETB. 
It follows that $SD_s((F_A\otimes id_A)(y))\ge ID_s(F_A)$.

\end{proof}

The conditions in the previous theorem are fulfilled in quantum state spaces, where $y$ is a pure maximally entangled state. 
In this case, this result was proved in \cite[pp. 8-9]{caskr2016quantitative}.

\begin{rem}\label{rem:steer_witness} Similarly as for incompatibility, we may define steering witnesses and their relation to steering degree, maximal steering  degree, etc. The witnesses will now be elements in $A(\Simp\otimes K)^+$. We will not investigate this here, only remark that since 
 the assemblages generate all of the positive cone $V(\Simp\widehat\otimes K)^+$, any non-separable element in $A(\Simp\otimes K)^+$ is a witness.

\end{rem}

\subsection{Nonlocality and Bell's inequalities}

Let $f^i_A\in \aff(K_A,\Delta_{l^A_i})$, $i=0,\dots,k_A$ and $f^i_B\in \aff(K_B,\Delta_{l^B_i})$, $i=0,\dots,k_B$, and let $y\in K_A\widetilde\otimes K_B$. If a measurement $f^{i_A}_A$ is chosen for $A$ and $f^{i_B}_B$ for $B$, then the result is a pair $(j_A,j_B)$ with probability
\[
p(j_A,j_B| i_A,i_B):=\<(f^{i_A}_A)_{j_A}\otimes (f^{i_B}_B)_{j_B},y\>.
\]
These conditional probabilities satisfy the no-signalling properties
\begin{align}
\sum_{j_A} p(j_A,j_B| i_A,i_B)&=p_B(j_B|i_B),\quad \forall i_A\label{eq:no_signallingA}\\  
\sum_{j_B} p(j_A,j_B| i_A,i_B)&=p_A(j_A|i_A), \quad \forall i_B\label{eq:no_signallingB}
\end{align}
where $p_A(j_A|i_A):=\<(f^{i_A}_A)_{j_A},y_A\>$, $p_B(j_B|i_B):=\<(f^{i_B}_B)_{j_B},y_B\>$. 
Following \cite{wjd2007steering}, we say that the state $y$ is Bell local if for all  measurements $f^i_A$ and $f^i_B$,
these probabilities admit a local hidden variable  (LHV) model, that is, there is a probability distribution $q$ on a set $\Lambda$ and 
 conditional probabilities $q_A(j_A|i_A,\lambda)$ and $q_A(j_B|i_B,\lambda)$ such that 
\begin{equation}\label{eq:LHV}
p(j_A,j_B|i_A,i_B)=\sum_\lambda q(\lambda) q_A(j_A|i_A,\lambda)q_B(j_B|i_B,\lambda)
\end{equation}
Let $F_A=(f_A^0,\dots,f_A^{k_A})$ and let $\Simp_A$  be the related polysimplex, similarly define $F_B$  and $\Simp_B$. Then 
$\gamma:=(F_A\otimes F_B)(y)\in \Simp_A\widehat\otimes \Simp_B$ and 
\[
(\mea^{i_A}_{j_A}\otimes \mea^{i_B}_{j_B})(\gamma)=\<(f^{i_A}_A)_{j_A}\otimes (f^{i_B}_B)_{j_B},y\>=p(j_A,j_B| i_A,i_B).
\]
It can be seen by putting $K_B=\Simp_B$ in Theorem \ref{thm:steering} that the elements $\gamma \in \Simp_A\widehat\otimes \Simp_B$ 
are characterized by the property that $p(j_A,j_B|i_A,i_B):=(\mea^{i_A}_{j_A}\otimes \mea^{i_B}_{j_B})(\gamma)$ are no-signalling conditional probabilities  and  \eqref{eq:LHV} describes precisely the separable elements. 
The tensor product $\Simp_A\widehat\otimes \Simp_B$ is therefore called the no-signalling polytope and $\Simp_A\otimes \Simp_B$ the local polytope.  

The steering witnesses in this case (see Remark \ref{rem:steer_witness}) will be called Bell witnesses. These are precisely the elements of 
$A(\Simp_A\otimes \Simp_B)^+$ that are not separable. With some normalization, there is a finite number of extremal Bell witnesses $\mu_1,\dots,\mu_N$ that completely determine the local polytope: if $\gamma\in \Simp_A\widehat \otimes \Simp_B$, then $\gamma$ is local if and only if
\begin{equation}\label{eq:bell}
\<\mu_i,\gamma\>\ge 0,\qquad  i=1,\dots,N.
\end{equation}
These are the Bell inequalities. Lemma \ref{lemma:chiKK} (iv) shows that, similarly as in the case of incompatibility, the Bell witnesses correspond to affine maps of the polysimplex $\Simp_A$ into a positive cone, this time it is the cone $A(\Simp_B)^+$. 
All Bell inequalities are given by extremal non-ETB elements in $\aff(\Simp_A,A(\Simp_B)^+)$.

\begin{ex} (\textbf{The CHSH inequality})
\label{ex:chsh} Assume that there is a pair of two-outcome measurements on both sides, so that 
 $\Simp_A=\Simp_B=\square_2$. Since $V(\square_2)^+\simeq A(\square_2)^+$, we see by Example \ref{ex:squarewitness} that 
  all extremal witnesses   are precisely (multiples of) the isomorphisms $\Psi_{i,j,k}\in \aff(\square_2, A(\square_2)^+)$ that map the extreme points of $\square_2$ to the four effects $\mea^i_j$:
\[
 \simp_{0,0}\mapsto \mea^i_{j},\ \simp_{1,1}\mapsto \mea^i_{1-j},\ \simp_{0,1}\mapsto \mea^{1-i}_{k},\ \simp_{1,0}\mapsto \mea^{1-i}_{1-k}.
\]
Let $\mu_{i,j,k}$ be the witness corresponding to $\Psi_{i,j,k}$. Using the basis elements \eqref{eq:basis} and \eqref{eq:dualbasis}, 
we get
\begin{align*}
\mu_{i,j,k}&= \mea^i_{1-j}\otimes 1_{\square_2}+ (\mea^{1-i}_k-\mea^i_{1-j})\otimes \mea^0_0\\ &+ (\mea^{1-i}_{1-k}-\mea^i_{1-j})\otimes \mea^1_0.
\end{align*}
 Let $F_A=(f^0_A,f^1_A)\in \aff(K_A,\square_2)$,  $F_B=(f^0_B,f^1_B)\in \aff(K_B,\square_2)$ and put 
 \begin{align*}
 a_1&:=1-2(f^1_A)_0, \ a_2:=1-2(f^0_A)_0,\\
 b_1&:=1-2(f^0_B)_0,\ b2:=1-2(f^1_B)_0,
 \end{align*}
then we can see that 
\begin{align}
\<\mu_{0,1,0}, (F_A\otimes F_B)(y)\>&=\<(F_A\otimes F_B)^*(\mu_{0,1,0}),y\>\notag\\
&=\frac12\left(1-\frac12\mathbb B\right)\label{eq:chsh}
\end{align}
where $\mathbb B=\<a_1\otimes(b_1+b_2)+a_2\otimes(b_1-b_2),y\>$, so that  \eqref{eq:bell} becomes the CHSH inequality.

\end{ex}

\subsection{Bell inequalities and the incompatibility degree}

The maximal value of $\mathbb B$ in Example \ref{ex:chsh} that can be attained by two-outcome measurements is called the CHSH bound.
It is well known that the outcome probabilities satisfy the LHV model \eqref{eq:LHV} if and only if $\mathbb B\le 2$ and we always have 
 $\mathbb B\le 4$. In quantum state spaces, the Tsirelson bound holds:  $\mathbb B\le 2\sqrt{2}$. It was observed in  \cite{wopefe2009} that the incompatibility degree for pairs of quantum effects is connected to this bound. 

The relation of incompatibility degree and CHSH bound in GPT  was proved in \cite{buste2014steering}. 
We include a proof in our setting.

\begin{thm}\label{thm:bell_id} Let $\Simp_A, \Simp_B$ be polysimplices. Let $F_A\in \aff(K_A,\Simp_A)$, 
$F_B\in \aff(K_B,\Simp_B)$, $y\in K_A\widetilde\otimes K_B$ and assume that $F_A$ is incompatible. Then for any $\mu\in A(\Simp_A\otimes \Simp_B)^+$ and $s\in ri(\Simp_A)$,  we have 
\[
\<\mu,F_A\otimes F_B(y)\>\ge \|\mu\|_{max} q_{s}(F_A). 
\]
If $K_A$ admits steering and $\Simp_A=\square_2$, then there is some state space $K_B$, 
$F_B\in (K_B,\square_2)$ and $y\in K_A\widetilde\otimes K_B$ such that 
\[
\<\mu_{i,j,k},F_A\otimes F_B(y)\>=\frac12 q_{\bar \simp}(F_A) 
\]
for the witness $\mu_{i,j,k}$  as in example \ref{ex:chsh}.
\end{thm}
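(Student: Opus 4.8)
The plan is to treat the two assertions separately. For the inequality, I would start from the identity $\langle\mu,(F_A\otimes F_B)(y)\rangle = \langle(F_A\otimes F_B)^*(\mu),y\rangle = \mathrm{Tr}\,F_A\,(F_B^*\circ \mu)$, where I read $\mu\in A(\Simp_A\otimes\Simp_B)^+ \simeq \aff(\Simp_A,A(\Simp_B)^+)$ via Lemma \ref{lemma:chiKK}(iv), so that $W_\mu := F_B^* \circ \mu$ is an element of $\aff(\Simp_A,V(K_A)^+)$ (since $F_B^*$ maps $A(\Simp_B)^+ \subseteq A(K_B)^+$ into $V(K_A)^+$ after pairing with $y$; more precisely $F_B^*(g)$ applied through $y$ gives $\langle g\otimes\cdot,y\rangle\in V(K_A)^+$ for $g\in A(\Simp_B)^+$). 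The key point is then to bound $\langle 1_{K_A},W_\mu(s)\rangle$ for $s\in ri(\Simp_A)$: this equals $\langle \mu, s\otimes y_B\rangle \le \|\mu\|_{max}\langle 1,s\otimes y_B\rangle = \|\mu\|_{max}$, since $s\otimes y_B$ is a state of $\Simp_A\widehat\otimes\Simp_B$ and $\mu\ge 0$ with supremum norm $\|\mu\|_{max}$. Hence $\|\mu\|_{max}^{-1}W_\mu \in \mathcal W_s$ in the notation of Proposition \ref{prop:max_idegree} (after rescaling so that the barycenter lands in $K_A$), and therefore $\mathrm{Tr}\,F_A\,W_\mu \ge \|\mu\|_{max}\,q_s(F_A)$ by the definition of $q_s$. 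That is exactly the claimed inequality.

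For the second assertion, I would invoke that $K_A$ admits steering: by the discussion following Theorem \ref{thm:steering}, every assemblage in $\square_2\widehat\otimes K_A$ arises as $(F_A\otimes id_B)(y)$ for a suitable state space $K_B$ and $y\in K_A\widetilde\otimes K_B$. So it suffices to exhibit an assemblage $\beta\in\square_2\widehat\otimes K_A$ with barycenter $y_B$ and an $F_B\in\aff(K_B,\square_2)$ saturating the bound. The natural choice: take $W\in\mathcal W_{\bar\simp}$ extremal with $\mathrm{Tr}\,F_A W = q_{\bar\simp}(F_A)$, and build the assemblage from $W$ using the isomorphism between maps $\square_2\to V(K_A)^+$ and elements of $\square_2\widehat\otimes K_A$ (Theorem \ref{thm:steering}, last statement). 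Concretely, set $\beta$ to have components $\edg^i_j\otimes w_{\ldots}$ as in \eqref{eq:assemblage}; then $(F_A\otimes id)(y)=\beta$ with $y$ the steering state, and choosing $F_B$ so that $F_B^*\circ\mu_{i,j,k}$ reproduces (a scalar multiple of) $W$ — which is possible because $\mu_{i,j,k}$ corresponds to an \emph{isomorphism} $\Psi_{i,j,k}$ of $\square_2$ onto $A(\square_2)^+$, so composing with $F_B$ can realize any extremal $W$ up to the factor $\|\mu_{i,j,k}\|_{max}=1$ — forces equality in the chain above. Plugging into \eqref{eq:chsh} gives $\langle\mu_{i,j,k},(F_A\otimes F_B)(y)\rangle = \tfrac12 q_{\bar\simp}(F_A)$.

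The main obstacle I anticipate is bookkeeping in the second part: making precise the correspondence $\mu_{i,j,k}\leftrightarrow\Psi_{i,j,k}\leftrightarrow$ a prescribed $W\in\aff(\square_2,V(K_A)^+)$ through the composition with $F_B^*$, and verifying that the $F_B$ so obtained is a genuine measurement (effects in $[0,1]$) rather than merely an affine map — here one uses that $\square_2$ is weakly self-dual (Example \ref{ex:square}) and that the extremal $W$ can be taken with its vertices suitably normalized, together with the no-restriction hypothesis on $K_B=\square_2$ so that all affine maps into $\square_2$ are measurements. The inequality part is essentially a one-line consequence of Proposition \ref{prop:max_idegree} once the adjoint identity and the norm estimate $\langle\mu,s\otimes y_B\rangle\le\|\mu\|_{max}$ are in place.
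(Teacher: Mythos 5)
Your route is the same as the paper's: rewrite $\<\mu,(F_A\otimes F_B)(y)\>$ as $\Tr F_A W$ for a positive map $W\in\aff(\Simp_A,V(K_A)^+)$ obtained by composing the map associated to $\mu$ (via Lemma \ref{lemma:chiKK}(iv)) with $F_B^*$ and the map associated to $y$, then normalize $W$ into $\mathcal W_s$ and invoke Proposition \ref{prop:max_idegree}; for the equality, exploit that $\mu_{i,j,k}$ corresponds to the isomorphism $\Psi_{i,j,k}$ and use the steering hypothesis to realize an optimal witness in this factorized form. There is, however, a genuine gap in your normalization step. The set $\mathcal W_s$ is cut out by the \emph{equality} $W(s)\in K_A$, i.e.\ $\<1_{K_A},W(s)\>=1$, so the estimate $\<1_{K_A},W_\mu(s)\>\le\|\mu\|_{max}$ does not place $\|\mu\|_{max}^{-1}W_\mu$ in $\mathcal W_s$. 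The correct move is to divide by $t:=\<1_{K_A},W_\mu(s)\>=\<\mu,s\otimes F_B(y_B)\>$ itself (note it is $F_B(y_B)\in\Simp_B$, not $y_B$, that pairs with $\mu$), which yields $\Tr F_AW_\mu\ge t\,q_s(F_A)$; to conclude $\Tr F_AW_\mu\ge\|\mu\|_{max}\,q_s(F_A)$ from $t\le\|\mu\|_{max}$ you must use $q_s(F_A)<0$, i.e.\ precisely the hypothesis that $F_A$ is incompatible, which your argument never invokes. Without that sign observation the last inequality points the wrong way (and indeed the statement can fail for compatible $F_A$).

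In the second part your sketch matches the paper, but the factor $\tfrac12$ is left unexplained: it is not ``the factor $\|\mu_{i,j,k}\|_{max}=1$'' but the normalization of the assemblage. With $M=\Psi_{i,j,k}$ one has $M(\bar\simp)=\tfrac12 1_{\square_2}$, so $\beta:=(WM^{-1}\otimes id)(\chi_{\square_2^\dagger})$ satisfies $\<\beta,1\otimes 1\>=2\<1_{K_A},W(\bar\simp)\>=2$, only $\tfrac12\beta$ is an assemblage, and the steering hypothesis then produces $K_B$, $F_B$, $y$ with $\tfrac12 W=TF_B^*M$, whence the value $\tfrac12 q_{\bar\simp}(F_A)$. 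Your anticipated obstacle about verifying that $F_B$ is a genuine measurement does not arise on this route: $F_B$ is handed to you by the definition of ``admits steering'' (the assemblage is realized by a collection of measurements applied to a bipartite state), rather than being reconstructed from $W$ and $\mu$.
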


\begin{proof}
By Lemma \ref{lemma:chiKK} (iv), there are  some $T\in \aff(K_B^\dagger, V(K_A)^+)$ and $M\in \aff(\Simp_A, V(\Simp_B^\dagger)^+)$ 
 such that  $y=(T\otimes id)(\chi_{K_B^\dagger}) $ and $\mu=(M^*\otimes id)(\chi_{\Simp_B})$.
Then by using Lemma \ref{lemma:chiKK} (iii), 
\begin{align*}
\<\mu,F_A\otimes F_B(y)\>&=\<(M^*\otimes id)(\chi_{\Simp_B}), (F_AT\otimes F_B)(\chi_{K_B^\dagger})\>\\
&=\<\chi_{\Simp_B}, (MF_ATF_B^*\otimes id)(\chi_{\Simp_B}^\dagger)\>\\
&=\Tr F_ATF_B^*M.
\end{align*}
Put $W:=TF_B^*M$, then $W\in \aff(\Simp_A,V(K_A^+))$. Moreover, 
\begin{align*}
\Tr F_{s} W&=\<1_{K_A}, W(s)\>=\<F_BT^*(1_{K_A}),M(s)\>\\
&=\<s',M(s)\>,
\end{align*}
where $s':=F_BT^*(1_{K_A})$. It is easy to see that $T^*(1_{K_A})\in K_B$, so that $s'\in \Simp_B$.
 It follows that 
\begin{align*}
t&:=\Tr F_{s} W=\<s',M(s)\>=\<M^*(s'),s\>=\mu(s\otimes s')\\
&\le \|\mu\|_{max}.
\end{align*}
We have  $t^{-1}W\in \mathcal W_s$, so that 
\[
t^{-1}\<\mu,F_A\otimes F_B(y)\>=t^{-1}\Tr F_AW\ge q_s(F_A).
\]
The final inequality follows by the assumption that $F_A$ is incompatible, so that $q_s(F)<0$.

Assume now that $K_A$ admits steering. Let $W\in \aff(\square_2, V(K_A)^+)$  be a witness in $\mathcal W_{\bar\simp}$ such that 
$\Tr F_AW=q_{\bar\simp}(F_A)$. In view of the above proof, it is enough to show that $W=TF_B^*M$ for suitable $T$, $F_B$ and $M$.
So let $M=\Psi_{i,j,k}$ be the isomorphism as in Example \ref{ex:chsh}. 
Put $\beta:=(WM^{-1}\otimes id)(\chi_{\square_2^\dagger})$, then clearly $\beta\in V(K_A\widehat\otimes\square_2 )^+$ and 
\[
\<\beta,1\otimes 1\>= \<1_{K_A},WM^{-1}(1_{\square_2})\>= 2\<1_{K_A},W(\bar\simp)\>=2.   
\]
Hence $\frac12 \beta$ is an assemblage. Since $K_A$ admits steering, there is some state space $K_B$, $F_B\in \aff(K_B,\square_2)$ and   
$y\in K_A\widetilde \otimes K_B$ such that 
\[
\frac12 \beta=(id\otimes F_B)(y)=(T\otimes F_B)(\chi_{K_B^\dagger})=(TF_B^*\otimes id)(\chi_{\square_2^\dagger})
\]
It follows that $\frac12 W=TF_B^*M$, this finishes the proof.

\end{proof}

Note that the crucial part of the proof of the equality in the above theorem is that $V(\square_2)^+$ is weakly self-dual, so we may chose $M$ to be an isomorphism. This is not true for any other $\Simp_A$ and $\Simp_B$. As we have seen in the proof, the Bell scenario provides incompatibility witnesses only of the form $W=TF_B^*M$, that is, factorizing through some $A(\Simp_B)^+$, which can be 
  weaker for detection of some types of incompatibility. This seems to be the reason why already for three quantum effects, incompatibility in some cases cannot be detected by violation of Bell inequalities. This was observed in \cite{quvebr2014joint} in the case of qubit states, but  the above arguments suggest that such effect exist in any non-classical theory in our class of GPT. 

\subsection{Nonlocality in spaces of quantum channels \label{sec:nonlocQC}}

It was proved in  \cite{bgnp2001causal} that one can obtain probabilities maximally violating the CHSH inequality, that is, attaining the value $\mathbb B=4$, by using causal 
 bipartite quantum channels. In fact, it was shown recently in \cite{hosa2017achannel} that one can obtain all no-signalling probabilities in this way. In these works, the GPT setting was not used, but nevertheless it was shown  that any element of the no-signalling polytope can be 
 obtained by applying sets of channel measurements to both parts of an element of $\Ce^{caus}_{AB,A'B'}$.
 Maximal violation of the CHSH inequality in spaces of quantum channels was also proved using GPT in \cite{plzi2017popescu}. Note that 
the channel used in \cite{bgnp2001causal, plzi2017popescu}  was a bipartite c-c channel. 

The aim of the present paragraph is to remark that this feature of quantum channels is immediate from Proposition \ref{prop:qchannel_polysimp}. Indeed, the isomorphism  $\Delta_n^{k+1}\simeq \aff(\Delta_{k},\Delta_n)$ also implies that 
\[
\Delta_{n_A}^{k_A+1}\widehat\otimes \Delta_{n_B}^{k_B+1}\simeq
\aff(\Delta_{k_A},\Delta_{n_A})\widehat\otimes\aff(\Delta_{k_B},\Delta_{n_B}).
\]
In this way, any no-signalling polytope is isomorphic to a face in 
the space of causal  classical bipartite channels and the local polytope corresponds to the local channels in this face
(see also Example \ref{ex:composite_QC}).

Let now $(R_A,S_A)$, $(R_B,S_B)$ be the retraction-section pairs as in Proposition \ref{prop:qchannel_polysimp}
 and let $\gamma\in  \Delta^{k_A+1}_{n_A}\widehat \otimes \Delta^{k_B+1}_{n_B}$ be a collection of no-signalling conditional probabilities.  Then $\gamma$ corresponds to a classical causal channel $T_\gamma:\Delta_{k_{AB}}\to 
\Delta_{n_{AB}}$ (cf. the notation in Example \ref{ex:composite_class}), given by $T_\gamma(j_A,j_B|i_A,i_B)=(\mea^{i_A}_{j_A}\otimes \mea^{i_B}_{j_B})(\gamma)$. Put 
$\Phi:=\Phi_{T_\gamma}$ as in \eqref{eq:CC_T}, choosing product bases in $\Ha_A\otimes \Ha_B$ and $\Ha_{A'}\otimes \Ha_{B'}$.
Then $\Phi=(S_A\otimes S_B)(\gamma)\in \Ce_{A,A'}\widehat\otimes \Ce_{B,B'}$, but since $\Phi$ is completely positive, 
 we have $\Phi\in \Ce^{caus}_{AB,A'B'}=\Ce_{A,A'}\widetilde\otimes \Ce_{B,B'}$. Clearly, 
 \[
(R_A\otimes R_B)(\Phi)=(R_A\otimes R_B)(S_A\otimes S_B)(\gamma)=\gamma.
 \]
 We have proved the following result (cf. \cite{hosa2017achannel}).

\begin{thm}  For any collection of no-signalling conditional probabilities 
$\gamma\in \Delta^{k_A+1}_{n_A}\widehat \otimes \Delta^{k_B+1}_{n_B}$, there is a causal bipartite quantum channel 
$\Phi\in \Ce^{caus}_{AB,A'B'}$, with $d_A= k_A+1$, $d_{A'}=n_A+1$ and $d_B= k_B+1$, $d_{B'}=n_B+1$, and collections of measurements 
$R_A\in \aff(\Ce_{A,A'},\Delta^{k_A+1}_{n_A})$ and $R_B\in \aff(\Ce_{B,B'},\Delta^{k_B+1}_{n_B})$ such that $(R_A\otimes R_B)(\Phi)=\gamma$.

\end{thm}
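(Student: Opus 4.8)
The plan is to leverage Proposition \ref{prop:ccchannel_polysimp}, which identifies a polysimplex $\Delta_n^{k+1}$ with the classical channel space $\aff(\Delta_k,\Delta_n)$, together with the retraction--section machinery of Proposition \ref{prop:qchannel_polysimp}, which embeds such a polysimplex into a space of quantum channels via c--c channels. The idea is that the whole no-signalling polytope $\Delta^{k_A+1}_{n_A}\widehat\otimes\Delta^{k_B+1}_{n_B}$ sits inside a space of classical bipartite channels, and each such classical channel realizes a c--c quantum channel, which is automatically completely positive and therefore lands in the causal set $\Ce^{caus}_{AB,A'B'}$.

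Concretely, I would set $d_A=k_A+1$, $d_{A'}=n_A+1$, $d_B=k_B+1$, $d_{B'}=n_B+1$ and take the retraction--section pairs $(R_A,S_A)$ on $\Ce_{A,A'}$ and $(R_B,S_B)$ on $\Ce_{B,B'}$ furnished by Proposition \ref{prop:qchannel_polysimp}. Given $\gamma\in\Delta^{k_A+1}_{n_A}\widehat\otimes\Delta^{k_B+1}_{n_B}$, define $\Phi:=(S_A\otimes S_B)(\gamma)$. Since $S_A$ and $S_B$ are positive maps, their tensor product is positive with respect to the maximal tensor products (cf. the discussion preceding Proposition \ref{prop:aff_dual}), so a priori $\Phi\in\Ce_{A,A'}\widehat\otimes\Ce_{B,B'}$. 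Then I would unwind the explicit formula for $S_A$, $S_B$ in Proposition \ref{prop:qchannel_polysimp} and compare with the definition \eqref{eq:CC_T}, obtaining $\Phi=\Phi_{T_\gamma}$ with respect to product bases of $\Ha_A\otimes\Ha_B$ and $\Ha_{A'}\otimes\Ha_{B'}$, where $T_\gamma$ is the stochastic matrix $T_\gamma(j_A,j_B|i_A,i_B)=(\mea^{i_A}_{j_A}\otimes\mea^{i_B}_{j_B})(\gamma)$. In particular $\Phi$ is a c--c channel, hence completely positive and trace preserving, so $\Phi\in\Ce_{AB,A'B'}$; combined with $\Phi\in\Ce_{A,A'}\widehat\otimes\Ce_{B,B'}$ and the characterization of $\Ce^{caus}_{AB,A'B'}=\Ce_{A,A'}\widetilde\otimes\Ce_{B,B'}$ as the completely positive part of the maximal tensor product (Example \ref{ex:composite_QC}), this forces $\Phi\in\Ce^{caus}_{AB,A'B'}$.

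Finally, with $R_A\in\aff(\Ce_{A,A'},\Delta^{k_A+1}_{n_A})$ and $R_B\in\aff(\Ce_{B,B'},\Delta^{k_B+1}_{n_B})$ the retractions, the identity $(R_A\otimes R_B)(\Phi)=(R_AS_A\otimes R_BS_B)(\gamma)=\gamma$ follows from $R_AS_A=id$ and $R_BS_B=id$, completing the argument. I expect the only genuinely delicate point to be the identification $\Phi=\Phi_{T_\gamma}$, i.e. checking that applying the two sections to a classical box produces exactly the c--c channel attached to the corresponding stochastic matrix (and hence a completely positive map); once this is in hand, membership in $\Ce^{caus}$ and the recovery of $\gamma$ are immediate bookkeeping consequences of the retraction property.
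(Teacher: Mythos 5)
Your proposal is correct and follows essentially the same route as the paper: define $\Phi=(S_A\otimes S_B)(\gamma)$, identify it with the c--c channel $\Phi_{T_\gamma}$ via \eqref{eq:CC_T} so that complete positivity places it in $\Ce^{caus}_{AB,A'B'}$, and recover $\gamma$ from $R_AS_A=R_BS_B=id$. The only difference is cosmetic: the paper defines $\Phi:=\Phi_{T_\gamma}$ first and then observes $\Phi=(S_A\otimes S_B)(\gamma)$, whereas you proceed in the opposite order.
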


\begin{ex} Let $\Simp_A=\Simp_B=\square_2$ and 
choose 
\[
\gamma=\frac12((\simp_{0,0}-\simp_{1,0})\otimes \simp_{0,0}+\simp_{1,1}\otimes\simp_{1,0}+\simp_{1,0}\otimes \simp_{01})\in \square_2\widehat\otimes\square_2.
\]
We obtain the same c-c bipartite channel and sets of measurements attaining maximal CHSH violation as in \cite{bgnp2001causal,plzi2017popescu}.
\end{ex}

\section{Conclusion and further questions}

We have studied incompatibility of measurements in a family of convex finite dimensional GPTs by representing collections of measurements as affine maps 
 into a polysimplex. We have shown how properties of these maps (like being ETB or a retraction) are tied to incompatibility. We introduced incompatibility witnesses and used them to characterize incompatibility degree. Our results suggest that incompatibility 
 is closely related to the geometry of polysimplices, for example, the maximal incompatibility degree attainable for a given state space $K$ can be obtained by considering positive maps on $K$ that factorize through the polysimplex of the given shape. 

Our setting allows us to study the relations of incompatibility to steering and non-locality through incompatibility witnesses. We have shown that the Bell scenario provides incompatibility witnesses of a restricted type, more precisely, factorizing through the dual of a polysimplex. In general, the family of such witnesses is strictly smaller than the set of all witnesses and is therefore weaker for the detection of incompatibility. This explains  the existence of incompatible collections of measurements that do not violate Bell inequalities and suggests that this feature is not specific for quantum theory, but is common in GPT.

 There is a number of questions left for further research. For example, the incompatibility degree attainable by more general collections of quantum measurement can be investigated using witnesses as in Corollary \ref{coro:quantummax}. For this, a characterization of  extremal maps of a polysimplex into $B(\Ha)^+$  would be useful. For the study of relations between incompatibility and non-locality,
 one could describe the  witnesses that can be obtained from Bell inequalities as in Theorem \ref{thm:bell_id}. It is an interesting question to what extend are 
 these witnesses weaker and how it depends on the theory in question. 
It might be also worthwhile to study collections of quantum measurements and their incompatibility, as well as steering and Bell nonlocality,  within the framework of quantum networks as suggested in Remark \ref{rem:incomp_in_combs}. 

Maps into a polysimplex can be used to  describe collections of measurements only up to a fixed size and number of outcomes. 
The isomorphism with (faces of) classical channels (Proposition \ref{prop:ccchannel_polysimp}) suggests that it might be possible 
 do include all collections of measurements as affine maps into the space of more general Markov kernels.
 
\appendix
\renewcommand{\thesection}{\Alph{section}}
\setcounter{equation}{0}
\renewcommand{\theequation}{\thesection.\arabic{equation}}
\setcounter{prop}{0}
\setcounter{lemma}{0}
\renewcommand{\theprop}{\thesection.\arabic{prop}}
\renewcommand{\thelemma}{\thesection.\arabic{lemma}}
\section{Positive and ETB maps}\label{app:positive}

We list some well known results on the cones of positive maps.
Let $K$ be a compact convex set. Let $x_0,\dots,x_n$ be a basis of $V(K)$ and let $e_0,\dots,e_n\in A(K)$ be the dual basis.  Put 
\begin{equation}\label{eq:chik}
\chi_K:=\sum_i x_i\otimes e_i\in V(K)\otimes A(K).
\end{equation}
Let $f\in A(K)$, then $f=\sum_i \<f,x_i\>e_i$, therefore we have
\begin{equation}\label{eq:chik_fy}
\<\chi_K,f\otimes y\>=\sum_i \<f,x_i\>e_i(y)=f(y),\qquad \forall y\in K.
\end{equation}
Consequently, $\chi_K$ does not depend on the choice of the basis.

\begin{lemma}\label{lemma:chiKK}  Let $x'\in ri(K)$ and put $K^\dagger:=\{ f\in A(K)^+, f(x')=1\}$. 
Then 
\begin{enumerate}
\item[(i)] $K^\dagger$ is a compact convex set  and we have $V(K^\dagger)^+=A(K)^+$, $A(K^\dagger)^+=V(K)^+$ and $1_{K\dagger}=x'$.
\item[(ii)] $\chi_K\in K\widehat \otimes K^\dagger$.
\item[(iii)] Let $T\in \aff(K, V(K'))$, then $(T\otimes id)(\chi_K)=(id\otimes T^*)(\chi_{K'})$. 
\item[(iv)] For any $\xi\in V(K'\widehat\otimes K^\dagger)^+$, 
there is a unique $T\in \aff(K,V(K')^+)$ such that  $(T\otimes id)(\chi_K)=\xi$, determined by
\begin{equation}\label{eq:vector_repT}
\<T(x),f'\>=\<\xi, f'\otimes x\>,\qquad \forall x\in K, f'\in A(K').
\end{equation}

\end{enumerate}

\end{lemma}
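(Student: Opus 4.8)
The plan is to establish the four items in turn, relying throughout on the bilinear extension of \eqref{eq:chik_fy}: since the bases in \eqref{eq:chik} are dual, $\sum_i\langle e_i,\psi\rangle x_i=\psi$ for every $\psi\in V(K)$, so $\langle\chi_K,f\otimes\psi\rangle=\langle f,\psi\rangle$ for all $f\in A(K)$ and $\psi\in V(K)$, with $f$ pairing against the first ($V(K)$-valued) factor of $\chi_K$ and $\psi$ against the second. Each step below is then a one-line application of this identity together with the duality relations $(A(K)^+)^*\equiv V(K)^+$ and $(V^+)^{**}=V^+$ and the definitions of $\otimes_{min}$, $\otimes_{max}$ and $\widehat\otimes$.

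For (i), the point is that $x'\in ri(K)$ makes the evaluation $f\mapsto\langle f,x'\rangle$ strictly positive on $A(K)^+\setminus\{0\}$, since a nonnegative affine function vanishing at a relative interior point of $K$ vanishes on all of $K$; hence $K^\dagger$ is a compact convex base of the cone $A(K)^+$. It follows that $V(K^\dagger)=A(K)$ and $V(K^\dagger)^+=A(K)^+$; dualizing gives $A(K^\dagger)=V(K)$ and $A(K^\dagger)^+=(A(K)^+)^*=V(K)^+$; and $1_{K^\dagger}$, being the element of $A(K^\dagger)=V(K)$ that pairs to $1$ with every element of $K^\dagger$, is $x'$ by the defining condition of $K^\dagger$.

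For (ii), under the identification of (i) an effect $g\in E(K^\dagger)$ is an element of $V(K)^+$, so for $f\in E(K)\subseteq A(K)^+$ the extended identity gives $\langle f\otimes g,\chi_K\rangle=\langle f,g\rangle\ge 0$, while $\langle 1_K\otimes 1_{K^\dagger},\chi_K\rangle=\langle 1_K,x'\rangle=1$; by the definition of $\widehat\otimes$ this is precisely $\chi_K\in K\widehat\otimes K^\dagger$. For (iii), both $(T\otimes id)(\chi_K)$ and $(id\otimes T^*)(\chi_{K'})$ lie in $V(K')\otimes A(K)$; pairing each against an arbitrary elementary tensor $g'\otimes\psi$, $g'\in A(K')$, $\psi\in V(K)$, expanding $\chi_K$ and $\chi_{K'}$ in dual bases, and using the defining property $\langle T^*(g'),\psi\rangle=\langle g',T(\psi)\rangle$, reduces both sides to $\langle g',T(\psi)\rangle$; equality follows since such tensors span the pairing space.

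For (iv), I would define $T$ directly by \eqref{eq:vector_repT} (extended by linearity to $V(K)$). The one substantive step is positivity of its linear extension: by (i) and $V(K'\widehat\otimes K^\dagger)^+\simeq V(K')^+\otimes_{max}V(K^\dagger)^+$ we have $V(K'\widehat\otimes K^\dagger)^+=V(K')^+\otimes_{max}A(K)^+=\big(A(K')^+\otimes_{min}V(K)^+\big)^*$, so $\xi\ge 0$ forces $\langle\xi,f'\otimes x\rangle\ge 0$ for all $f'\in A(K')^+$ and $x\in V(K)^+$; by \eqref{eq:vector_repT} and the bipolar theorem this says exactly $T(V(K)^+)\subseteq V(K')^+$, that is, $T\in\aff(K,V(K')^+)$. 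The identity $(T\otimes id)(\chi_K)=\xi$ is then verified by pairing both sides against $f'\otimes x$, $f'\in A(K')$, $x\in V(K)$, and using the dual-basis expansion once more: the left side equals $\langle f',T(x)\rangle$, which is $\langle\xi,f'\otimes x\rangle$ by \eqref{eq:vector_repT}; the same pairing recovers $T$ uniquely from $(T\otimes id)(\chi_K)$, giving uniqueness. I expect the only real obstacle to be bookkeeping: keeping track of which factor of $\chi_K$ pairs with $A(K)$ and which with $V(K)$, and of the four identifications from (i) as they propagate through (ii)--(iv). Each verification is routine; one should merely double-check the direction of the duality $V(K')^+\otimes_{max}A(K)^+=\big((V(K')^+)^*\otimes_{min}(A(K)^+)^*\big)^*$, which is immediate from the definition of $\otimes_{max}$.
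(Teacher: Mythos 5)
Your proposal is correct and follows essentially the same route as the paper's proof: $K^\dagger$ as a base of $A(K)^+$ for (i), the bilinear extension of \eqref{eq:chik_fy} for (ii)--(iv), and the $\otimes_{min}/\otimes_{max}$ duality to get positivity of $T$ in (iv). You simply spell out several steps the paper leaves as "clear," and the details (including the identification $V(K'\widehat\otimes K^\dagger)^+=(A(K')^+\otimes_{min}V(K)^+)^*$) check out.
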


\begin{proof} Since $x'\in ri(K)$, $f(x')=0$ for $f\in A(K)^+$ implies that $f=0$, so that $K^\dagger$ is a base of $A(K)^+$. 
This proves (i).  The statement (ii) follows immediately from \eqref{eq:chik_fy} and (i). For (iii), let $f'\in A(K')$ and $y\in K$, then by \eqref{eq:chik_fy}, we have
\begin{align*}
\<(T\otimes id)(\chi_K),f'\otimes y\>&=\<\chi_K,T^*(f')\otimes y\>=T^*(f')(y)\\
&=\<f',T(y)\>= \<\chi_{K'}, f'\otimes T(y)\>\\
&=\<(id\otimes T^*)(\chi_{K'}),f'\otimes y\>.
\end{align*}
For (iv), it is clear that \eqref{eq:vector_repT} determines an element $T\in \aff(K,V(K')^+)$ and $(T\otimes id)(\chi_K)=\xi$ holds by \eqref{eq:chik_fy}.

\end{proof}

We now have the following characterizations of ETB maps.

\begin{prop}\label{prop:ETBapp} Let $T\in \aff(K,V(K')^+)$. The following are equivalent.
\begin{enumerate}
\item[(i)] $T$ is ETB.
\item[(ii)] $(T\otimes id)(\chi_K)$ is separable.
\item[(iii)] $T$  factorizes through a simplex: there are a simplex $\Delta_n$ and maps $T_0\in \aff(K,V(\Delta_n)^+)$ and $T_1\in \aff(\Delta_n, V(K')^+)$ such that $T=T_1T_0$. 
\end{enumerate}
If $T$ is a channel, $T_0$ and $T_1$ in (iii)  may be chosen to be channels as well.
\end{prop}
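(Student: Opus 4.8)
The plan is to prove the cycle $(i)\Rightarrow(ii)\Rightarrow(iii)\Rightarrow(i)$, the three main ingredients being the representation $\chi_K$ together with Lemma~\ref{lemma:chiKK}, the identity $\Delta_n\otimes K_B=\Delta_n\widehat\otimes K_B$ valid for every state space $K_B$ (Example~\ref{ex:composite_class}), and the fact recalled before Proposition~\ref{prop:ETB} that the tensor product $T_A\otimes T_B$ of two positive maps is again positive, both with respect to the minimal and to the maximal tensor product. For $(i)\Rightarrow(ii)$ I would just specialise the definition of ETB to $K_B=K^\dagger$: since $\chi_K\in K\widehat\otimes K^\dagger$ by Lemma~\ref{lemma:chiKK}(ii), the map $T\otimes id_{K^\dagger}$ carries it into $V(K')^+\otimes_{min}V(K^\dagger)^+$, which is exactly the statement that $(T\otimes id)(\chi_K)$ is separable.

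For $(ii)\Rightarrow(iii)$ I would fix a separable decomposition $(T\otimes id)(\chi_K)=\sum_{i=1}^n\psi_i\otimes g_i$ with $\psi_i\in V(K')^+$ and, by Lemma~\ref{lemma:chiKK}(i), $g_i\in V(K^\dagger)^+=A(K)^+$. Plugging this into the vector representation \eqref{eq:vector_repT} gives $\<T(x),f'\>=\sum_i g_i(x)\<\psi_i,f'\>$, that is, $T(x)=\sum_i g_i(x)\psi_i$ for all $x\in K$. Taking the simplex $\Delta_{n-1}$ with vertices $\delta_1,\dots,\delta_n$, I would put $T_0(x):=(g_1(x),\dots,g_n(x))$, which lies in $V(\Delta_{n-1})^+$ and is affine because the $g_i$ are, and let $T_1\in\aff(\Delta_{n-1},V(K')^+)$ send $\delta_i\mapsto\psi_i$; then $T_1T_0=T$, which is $(iii)$.

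For $(iii)\Rightarrow(i)$, given a factorisation $T=T_1T_0$ through a simplex $\Delta_n$ and any state space $K_B$, I would write $T\otimes id_{K_B}=(T_1\otimes id_{K_B})(T_0\otimes id_{K_B})$. Positivity of $T_0$ yields $T_0\otimes id_{K_B}\in\aff(K\widehat\otimes K_B,V(\Delta_n\widehat\otimes K_B)^+)$, and by Example~\ref{ex:composite_class} we have $V(\Delta_n\widehat\otimes K_B)^+=V(\Delta_n\otimes K_B)^+=V(\Delta_n)^+\otimes_{min}V(K_B)^+$; then positivity of $T_1$ with respect to the minimal tensor product keeps the image inside $V(K')^+\otimes_{min}V(K_B)^+=V(K'\otimes K_B)^+$. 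Hence $T\otimes id_{K_B}\in\aff(K\widehat\otimes K_B,V(K'\otimes K_B)^+)$ for every $K_B$, so $T$ is ETB.

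For the channel refinement I would revisit $T(x)=\sum_i g_i(x)\psi_i$: if $T$ is a channel then $\<1_{K'},T(x)\>\equiv1$, so $\sum_i c_i g_i(x)\equiv1$ with $c_i:=\<1_{K'},\psi_i\>\ge0$; discarding the terms with $c_i=0$ (for which $\psi_i=0$ since $1_{K'}$ is an order unit) and rescaling $\psi_i\mapsto c_i^{-1}\psi_i\in K'$, $g_i\mapsto c_i g_i\in A(K)^+$, the tuple $(c_1g_1(x),\dots,c_ng_n(x))$ lies in $\Delta_{n-1}$ for every $x$, so the rescaled $T_0$ is a channel $K\to\Delta_{n-1}$ and $T_1$, sending vertices to states of $K'$, a channel $\Delta_{n-1}\to K'$. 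The computations are entirely routine; the one place demanding attention is tracking which of the two tensor product cones one occupies along the chain $(iii)\Rightarrow(i)$, and it is precisely the coincidence of the minimal and maximal tensor products of a simplex with any state space that makes factorisation through a simplex force entanglement breaking.
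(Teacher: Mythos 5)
Your proof is correct and follows essentially the same route as the paper's: extract the decomposition $T=\sum_j g_j(\cdot)\psi_j$ from a separable decomposition of $(T\otimes id)(\chi_K)$ via \eqref{eq:vector_repT}, build $T_0,T_1$ from the simplex vertices, use the coincidence of the minimal and maximal tensor products with a simplex for $(iii)\Rightarrow(i)$, and rescale by $c_j=\<1_{K'},\psi_j\>$ for the channel refinement. You merely spell out the implications $(i)\Rightarrow(ii)$ and $(iii)\Rightarrow(i)$ that the paper dismisses in one line; no substantive difference.
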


\begin{proof} (i) $\implies$ (ii) is clear. Assume (ii), then there are some $\phi_j\in V(K')^+$ and $f_j\in A(K)^+$ such that 
$(T\otimes id)(\chi_{K})=\sum_{j=0}^n \phi_j\otimes f_j$. By \eqref{eq:vector_repT}, we have for $y\in K$, $g\in A(K')$,
\[
\<T(y),g\>=\<\sum_j\phi_j\otimes f_j,g\otimes y\>=\<\sum_j f_j(y)\phi_j,g\>, 
\]
so that $T=\sum_j f_j(\cdot)\phi_j$. Let $\delta_j$ be the extreme points of $\Delta_n$ and put $T_0=\sum_jf_j(\cdot)\delta_j $,
 $T_1(\delta_j)=\phi_j$, then $T=T_1T_0$. (iii) $\implies$ (i) since any $T_0\in \aff(K,V(\Delta_n)^+)$ is ETB, see Example \ref{ex:composite_class}.

Assume that $T$ is an ETB channel, so that $T=\sum_j f_j(\cdot)\phi_j$ as above.  Let 
$c_j:=\<\phi_j,1_{K'}\>$. We may assume $c_j>0$, otherwise we may replace $\Delta_n$ by a smaller simplex. 
Put  $\tilde f_j:=c_jf_j$ and $\tilde \phi_j:=c_j^{-1}\phi_j$, then the corresponding maps $\tilde T_0$ and $\tilde T_1$ 
 are channels such that $T=\tilde T_1 \tilde T_0$.
 
 \end{proof}

We next prove Proposition \ref{prop:aff_dual}. Let $x_0,\dots,x_n\in K$ be a basis of $V(K)$ and $e_0,\dots,e_n$ the dual basis, then 
\[
\Tr T=\sum_i \<e_i,T(x_i)\> =\<(T\otimes id)\chi_K,\chi_{K^\dagger}\>.
\]

\begin{proof}[Proof of Proposition \ref{prop:aff_dual}]  Let $S\in \aff_{sep}(K',V(K))$, so that $S=\sum_j f'_j(\cdot)\phi_j$ for some $f'_j\in A(K')^+$ and $\phi_j\in V(K)^+$.
Then $\Tr TS= \sum_j \<T(\phi_j),f'_j\>\ge 0$. Conversely, assume that  for all $T\in \aff(K,V(K')^+)$
\[
\Tr ST=\<(T\otimes id)\chi_{K},(S^*\otimes id)\chi_{K^\dagger}\>\ge 0.
\]
By Lemma \ref{lemma:chiKK} (iv), we have $(S^*\otimes id)\chi_{K^\dagger}\in A(K\widehat\otimes K^\dagger)^+$, so it must be separable. By Proposition \ref{prop:ETBapp}, $S^*$, and hence also $S$, is ETB. 

\end{proof}

\setcounter{prop}{0}
\section{The cones $\aff(\Simp, V(K)^+)$ and $\aff_{sep}(\Simp, V(K)^+)$  }\label{app:cones}

We describe the cone of positive maps $\aff(\Simp, V(K)^+)$ and characterize the ETB ones.

\begin{prop}\label{prop:witness} The elements  $w_{n_0,\dots,n_k}\in V(K)^+$, $n_i=0,\dots,l_i$, $i=0,\dots,k$
 are vertices of some $W\in\aff(\Simp,V(K)^+)$ if and only if they satisfy
\begin{align}
w_{n_0,\dots,n_k}+w_{n_0',\dots,n_k'}&=w_{n_0,\dots,n_{i-1},n_i',n_{i+1},\dots,n_k}\notag\\
&+w_{n_0',\dots, n_{i-1}',n_i,n_{i+1}',\dots,n_k'}\label{eq:poly}
\end{align}
for all $n_0,\dots,n_k$, $n_0',\dots,n_k'$ and $i$.
Moreover, $W$ is ETB if and only if 
 there are some $\psi^i_j\in V(K)^+$, $j=0,\dots,l_i$, $i=0,\dots,k$,  such that
\[
w_{n_0,\dots,n_k}=\sum_{i=0}^k\psi^i_{n_i}.
\]

\end{prop}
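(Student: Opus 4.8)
The plan is to treat the two equivalences separately, in both cases exploiting the dual bases \eqref{eq:basis}--\eqref{eq:dualbasis} together with the fact, recalled above, that the extreme rays of $A(\Simp)^+$ are generated by the single-coordinate effects $\mea^i_j$.

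\emph{First equivalence.} The vertices of $\Simp$ themselves satisfy \eqref{eq:poly}: expanding both sides of that identity via \eqref{eq:vertices}, each side equals $2\,\simp_{l_0,\dots,l_k}+\sum_a(\edg^a_{n_a}+\edg^a_{n_a'})$. Hence for any $W\in\aff(\Simp,V(K))$, applying the linear extension of $W$ transports \eqref{eq:poly} to its vertices $w_{n_0,\dots,n_k}$. Conversely, suppose the $w_{n_0,\dots,n_k}\in V(K)^+$ satisfy \eqref{eq:poly}; writing $w^i_j:=w_{l_0,\dots,l_{i-1},j,l_{i+1},\dots,l_k}$, I would define $W:V(\Simp)\to V(K)$ linearly on the basis \eqref{eq:dualbasis} by $W(\simp_{l_0,\dots,l_k}):=w_{l_0,\dots,l_k}$ and $W(\edg^i_j):=w^i_j-w_{l_0,\dots,l_k}$ for $j<l_i$. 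By \eqref{eq:vertices}, $W(\simp_{n_0,\dots,n_k})=w_{l_0,\dots,l_k}+\sum_i(w^i_{n_i}-w_{l_0,\dots,l_k})$, and a telescoping application of \eqref{eq:poly} — at the $m$-th step applied to the tuples $(l_0,\dots,l_{m-1},n_m,\dots,n_k)$ and $(l_0,\dots,l_k)$ with index $i=m$ — shows this equals $w_{n_0,\dots,n_k}$. Since every point of $\Simp$ is a convex combination of vertices and $V(K)^+$ is a convex cone, $W\in\aff(\Simp,V(K)^+)$.

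\emph{ETB characterization, easy direction.} If $w_{n_0,\dots,n_k}=\sum_i\psi^i_{n_i}$ with $\psi^i_j\in V(K)^+$, let $W_1^i\in\aff(\Delta_{l_i},V(K)^+)$ be the map sending $\delta^i_j\mapsto\psi^i_j$. Then $\sum_{i=0}^k W_1^i\mea^i$ sends $\simp_{n_0,\dots,n_k}$ to $\sum_i\psi^i_{n_i}=w_{n_0,\dots,n_k}$, hence coincides with $W$. Each summand $W_1^i\mea^i$ factorizes through the simplex $\Delta_{l_i}$ (with $T_0=\mea^i$, $T_1=W_1^i$), so it is ETB by Proposition \ref{prop:ETB}; since $\aff_{sep}(\Simp,V(K)^+)$ is a convex cone, $W$ is ETB.

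\emph{ETB characterization, hard direction.} Suppose $W$ is ETB. By Proposition \ref{prop:ETBapp} there are a simplex $\Delta_n$ with vertices $\delta_0,\dots,\delta_n$ and maps $W_0\in\aff(\Simp,V(\Delta_n)^+)$, $W_1\in\aff(\Delta_n,V(K)^+)$ with $W=W_1W_0$; put $\phi_r:=W_1(\delta_r)\in V(K)^+$. For each $r$ the function $s\mapsto(W_0(s))_r$ is affine and nonnegative on $\Simp$, i.e.\ an element of $A(\Simp)^+$, so $(W_0(\cdot))_r=\sum_{i,j}\lambda^i_{j,r}\mea^i_j$ for some $\lambda^i_{j,r}\ge0$. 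Evaluating at $\simp_{n_0,\dots,n_k}$ gives $W_0(\simp_{n_0,\dots,n_k})=\sum_r\bigl(\sum_i\lambda^i_{n_i,r}\bigr)\delta_r$, hence
\[
w_{n_0,\dots,n_k}=W_1W_0(\simp_{n_0,\dots,n_k})=\sum_r\Bigl(\sum_i\lambda^i_{n_i,r}\Bigr)\phi_r=\sum_i\psi^i_{n_i},\qquad\psi^i_j:=\sum_r\lambda^i_{j,r}\phi_r\in V(K)^+ .
\]
I expect the main obstacle to be isolating this last step cleanly: the decisive use of the product-of-simplices structure is that a \emph{positive} affine map $\Simp\to V(\Delta_n)^+$ is, coordinatewise, a nonnegative combination of the one-index effects $\mea^i_j$, and it is precisely this that forces the additive ``one index at a time'' decomposition $w_{n_0,\dots,n_k}=\sum_i\psi^i_{n_i}$. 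Everything else is bookkeeping with the bases \eqref{eq:basis}--\eqref{eq:dualbasis}.
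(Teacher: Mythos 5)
Your proposal is correct and follows essentially the same route as the paper's proof: you verify \eqref{eq:poly} on the vertices of $\Simp$ and transport it by affinity, construct the converse map on the dual basis \eqref{eq:dualbasis} with a telescoping use of \eqref{eq:poly}, and derive the ETB characterization from the fact that the effects $\mea^i_j$ generate $A(\Simp)^+$. Your treatment of the ETB direction via the explicit factorization through $\Delta_n$ is merely a more spelled-out version of the paper's one-line appeal to the form $W=\sum_{i,j}\mea^i_j(\cdot)\psi^i_j$.
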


\begin{proof} Let $w_{n_),\dots,n_k}$ be vertices of $W\in\aff(\Simp,V(K)^+)$. We have 
\begin{align*}
\frac12&(\simp_{n_0,\dots,n_k}+\simp_{n_0',\dots,n_k'})
\\&=\left(\frac12(\delta_{n_0}+\delta_{n_0'}),\dots,\frac12(\delta_{n_k}+\delta_{n_k'})\right)\\
 &=\left(\frac12(\delta_{n_0}+\delta_{n_0'}),\dots,\frac12(\delta_{n_i'}+\delta_{n_i}),\dots,\frac12(\delta_{n_k}+\delta_{n_k'})\right)\\
&=\frac12(\simp_{n_0,\dots,n_{i-1},n_i',n_{i+1},\dots,n_k}+\simp_{n_0',\dots, n_{i-1}',n_i,n_{i+1}',\dots,n_k'}),
\end{align*}
hence \eqref{eq:poly} must hold. 
 Conversely, assume $w_{n_0,\dots,n_k}$ satisfy  \eqref{eq:poly} and put
\begin{align*}
W(\simp_{l_0,\dots,l_k})&:=w_{l_0,\dots,l_k},\\ 
W(\edg^i_j)&:=w_{l_0,\dots,l_{i-1},j,l_{i+1},\dots,l_k}-w_{l_0,\dots,l_k}.
\end{align*}
This determines a map  $W\in \aff(\Simp,V(K))$. By \eqref{eq:vertices}, we have $W(\simp_{n_0,\dots,n_k})=w_{l_0,\dots,l_k}+\sum_{i=0}^k W(\edg^i_{n_i})$. Using repeatedly the relations \eqref{eq:poly}, we get $W(\simp_{n_0,\dots,n_k})=w_{n_0,\dots,n_k}$. For the second statement, note that since the effects $\mea^i_j$ generate $A(\Simp)^+$, $W$ is ETB if and only if  
 there are $\psi^i_j\in V(K)^+$ such that
\begin{equation}\label{eq:etbw}
W=\sum_{i=1}^k\sum_{j=0}^{l_i} \mea^i_j(\cdot)\psi^i_j.
\end{equation}
Applying this to the vertices of $\Simp$, we obtain  the statement.

\end{proof}

\begin{acknowledgments}
This research was supported by the grants  VEGA 2/0069/16 and APVV-16-0073.
\end{acknowledgments}


%
\end{document}